\newcommand{\ynote}[1]{\ifnotes $\ll$\textsf{\color{red} Yael: { #1}}$\gg$ \fi}
\newcommand{\DTIME}{\mathrm{DTIME}}
\newcommand{\cc}{\mathrm{cc}}
\newcommand{\Sim}{\mathrm{Sim}}
\newcommand{\negl}{\mathrm{negl}}
\newcommand{\poly}{\text{poly}}
\newcommand{\polylog}{\text{polylog}}
\newcommand{\E}{\mathbb{E}}
\newcommand{\GOOD}{\text{GOOD}}
\newcommand{\PSPACE}{\mathrm{PSPACE}}
\newcommand{\SPACE}{\mathrm{SPACE}}
\newcommand{\IP}{\mathrm{IP}}
\newcommand{\MIP}{\mathrm{MIP}}
\newcommand{\NEXP}{\mathrm{NEXP}}
\newcommand{\EXP}{\mathrm{EXP}}
\newcommand{\Pt}{\mathrm{P}}
\newcommand{\NS}{\mathrm{NS}}
\newcommand{\subNS}{\mathrm{subNS}}
\newcommand{\hrNS}{\mathrm{hrNS}}
\theoremstyle{plain}
\newtheorem{lemma}{Lemma}[section]
\newtheorem*{lemma*}{Lemma}
\newtheorem{theorem}[lemma]{Theorem}
\newtheorem{corollary}[lemma]{Corollary}
\newtheorem*{corollary*}{Corollary}
\newtheorem{claim}[lemma]{Claim}
\theoremstyle{definition}
\newtheorem{definition}[lemma]{Definition}
\begin{document}

\title{Non-Signaling Proofs with $O(\sqrt{\log n})$ Provers are in $\PSPACE$}
\author[1]{Dhiraj Holden\thanks{dholden@mit.edu}}
\author[2]{Yael Kalai\thanks{yael@microsoft.com}}
\affil[1]{MIT}
\affil[2]{Microsoft and MIT}
\maketitle
\begin{abstract}
Non-signaling proofs, motivated by quantum computation, have found  applications in cryptography and hardness of approximation.  
An important open problem is characterizing the power of non-signaling proofs. It is known that non-signaling proofs with two provers are characterized by $\PSPACE$ and that non-signaling proofs with $\poly(n)$-provers are characterized by $\EXP$.
However, the power of $k$-prover non-signaling proofs, for $2<k<\poly(n)$ remained an open problem.

We show that $k$-prover non-signaling proofs  (with negligible soundness) for $k=O(\sqrt{\log n})$ are contained in $\PSPACE$.   
We prove this via two different routes that are of independent interest. In both routes we consider a relaxation of non-signaling called sub-non-signaling. Our main technical contribution (which is used in both our proofs) is a reduction showing how to convert any  sub-non-signaling strategy with value at least $1-2^{-\Omega(k^2)}$ into a non-signaling one with value at least $2^{-O(k^2)}$.      

In the first route, we show that the classical prover reduction method for converting $k$-prover games into $2$-prover games carries over to the non-signaling setting with the following loss in soundness: if a $k$-prover game has value less than $2^{-ck^2}$ (for some constant~$c>0$), then the corresponding 2-prover game has value less than $1 - 2^{dk^2}$  (for some constant~$d>0$). 
In the second route we show that the value of a sub-non-signaling game can be approximated in space that is polynomial in the communication complexity and exponential in the number of provers.

\end{abstract}


\section{Introduction}
Proofs lie at the heart of the theory of computation.  
In the mid-eighties, the seminal work of Goldwasser, Micali and Rackoff~\cite{GoldwasserMR85} introduced the idea of using randomness and interaction in proofs. Interactive proofs ($\IP$) were introduced for the purpose of constructing zero-knowledge proofs, though were realized to be quite powerful in Shamir's celebrated $\IP=\PSPACE$ Theorem~\cite{LundFKN90, Shamir90}.

Shortly after interactive proofs were introduced, multi-prover interactive proofs were introduced by Ben-Or, Goldwasser, Kilian and Wigderson~\cite{Ben-OrGKW89}.  In a multi-prover interactive proof ($\MIP$) a verifier is interacting with several non-communicating provers. This class was proven to be extremely powerful, by Babai, Fortnow and Lund, who showed that  $\MIP=\NEXP$~\cite{BabaiFL91}. 
The power of this class stems from the assumption that the provers behave locally, namely that they see only the messages sent to them and do not have any information about messages sent to the other provers.  

In reality, however, it is not clear how to ensure that the provers behave locally.  Even if the provers are placed in different rooms with no communication channels between them, they may share quantum entanglement, which can cause their strategies to be correlated and non-local.
These attacks can be powerful even though at first they may seem to be benign~\cite{CleveHTW04}. 

These quantum strategies motivated the notion of non-signaling strategies, which is the subject of this work.  The notion of non-signaling strategies was first studied in physics in the context of Bell inequalities by Khalfin
and Tsirelson \cite{KT85} and Rastall \cite{Ras85}, and it has gained much attention after it was reintroduced by Popescu
and Rohrlich \cite{PR94}.
Non-signaling attacks are more general than quantum attacks;  in a non-signaling attack the cheating provers can collude, and thus each answer can be a function of {\em all} the queries.  The only restriction is that for any subset of provers, the answers provided by these provers should not convey any information about the queries given to the other provers.  Namely, the only restriction that is placed on the (possibly colluding) cheating provers is that their answers cannot be seen as ``evidence" that information has travelled between them. 


Importantly,  although non-signaling strategies are motivated by quantum
 entanglement, they found compelling applications outside the realm of quantum physics.  In particular, they have been proved to be instrumental for constructing succinct delegation schemes (under standard cryptographic assumptions) and in the realm of hardness of approximation.  

\paragraph{The applicability of non-signaling to computation delegation.}

Kalai, Raz, and Rothblum~\cite{KalaiRR13} demonstrated the significance of non-signaling by showing that any $\MIP$ that is secure against non-signaling attacks\footnote{To be precise, \cite{KalaiRR13} considered a slightly more relaxed notion, which they called statistical non-signaling.  We neglect this difference here.} can be converted into a single-prover one-round proof system (with computational soundness).  More specifically, they show that the PIR (or FHE) heuristic, proposed by Biehl, Meyer, and Wetzel \cite{BiehlMW98}, for converting any MIP to a {\em single-prover} one-round proof system  is sound if the underlying MIP has non-signaling soundness.  

In~\cite{KalaiRR14}, the same authors constructed an $\MIP$ that is secure against non-signaling attacks for every language in~$\EXP$, thus yielding the first one-round delegation scheme for {\em all} deterministic computations, under standard cryptographic assumptions.  This application of non-signaling to computation delegation has proved to be very fruitful, and yielded numerous followup works~(e.g., \cite{KalaiP16,BrakerskiHK17,BadrinarayananK18, KalaiPY19}).  Moreover, all one-round delegation schemes in the literature that are based on standard cryptographic assumptions use the concept of non-signaling.

\paragraph{The applicability of non-signaling to hardness of approximation.}

Kalai, Raz and Regev~\cite{KalaiRR16} showed the significance of non-signaling to hardness of approximation. In particular, they showed that it is hard to approximate the value of a linear program in space $2^{{\log n}^{o(1)}}$, even if the polytope is {\em fixed} (i.e., even if the algorithm has unbounded time to preprocess the polytope), and even if all the coefficients are non-negative (which is the regime where hardness of approximation is most meaningful).  More specifically, they showed that there exists a fixed polytope (corresponding to the set of all possible non-signaling strategies) such that approximating the value of a linear program (where the coefficients of the objective function and the variables are restricted to be positive) is $\Pt$-complete with a $\polylog$-space reduction.  Prior work~\cite{DLR79,Serna91,FeigeK97} demonstrated such hardness of approximation for the case where the polytope was not fixed  (and preprocessing is not allowed).

\noindent The importance of the notion of non-signaling gives rise to the  following fundamental question: 

\medskip {\em What is the power of  multi-prover interactive proofs that are sound against non-signaling strategies? } 

\medskip \noindent This is precisely the question we study in this work.  In what follows, we denote the class of one-round multi-prover interactive proofs with non-signaling soundness by $\NS$ $\MIP$.  We denote by $k$-prover $\NS$ $\MIP$ the class of one-round $k$-prover interactive proofs with non-signaling soundness.

\subsection{Prior Work} 
Ito,  Kobayashi and Matsumoto~\cite{IKM09} proved that  2-prover $\NS$ $\MIP$ contains $\PSPACE$ (by proving that the 2-prover scheme of Cai, Condon, and Lipton~\cite{CaiCL94} is in fact secure against non-signaling strategies).  Shortly after, Ito~\cite{Ito10} proved that  2-prover $\NS$ $\MIP$ is contained in $\PSPACE$, thus characterizing the power of 
  2-prover $\NS$ $\MIP$.
The power of 
 $k$-prover $\NS$ $\MIP$,  for $k> 2$, remained open.  

It is known that $\NS$ $\MIP$ is contained in $\EXP$ (\cite{IKM09}, implicit in \cite{DLNNR04}) since one can find the best non-signaling strategy by solving an exponential-size linear program. Therefore, the power of a $k$-prover $\NS$ $\MIP$ lies between $\PSPACE$ and $\EXP$.
More recently, Kalai, Raz and Rothblum~\cite{KalaiRR14} showed that there exists a $\poly(n)$-prover $\NS$ $\MIP$ for~$\EXP$, thus characterizing the power of $k$-prover $\NS$ $\MIP$ for $k=\poly(n)$.\footnote{More specifically, it was shown in~\cite{KalaiRR14} that there exists a constant $c\in\mathbb{N}$ such that there exists a $(\log T)^c$-prover $\NS$ $\MIP$ for $\DTIME(T)$.}

These works left open the following question:  What is the power of $k$-prover $\NS$ $\MIP$ for $2<k<\poly(n)$?
This question was studied by Chiesa, Manohar and Shinkar  in~\cite{ChiesaMS19}, who constructed a  $k$-prover $\NS$ $\MIP$ for $\EXP$ with $k=O(1)$, albeit where the verifier's queries are of exponential length.\footnote{Using the terminology of \cite{ChiesaMS19}, they construct an exponential size  no-signaling PCP for $\EXP$ with constant number of queries.}

\subsection{Our Results}\label{sec:intro:results}
Throughout this manuscript, we  assume that an $\MIP$ has completeness at least $1-\negl(n)$, and has soundness $\negl(n)$, for some negligible function $\negl(n)$.\footnote{A function $\mu:\mathbb{N}\rightarrow\mathbb{N}$ is said to be negligible if approaches zero faster than the inverse of any polynomial.}  This assumption is standard in cryptography.  We mention that often in the definition of  interactive proofs,  completeness is required to be greater than $2/3$ and soundness at most $1/3$;  this is because it is well known that this gap can be amplified to $1-\negl(n)$ and $\negl(n)$ via parallel repetition, at least for the case of single prover interactive proofs.  A parallel repetition theorem is also known for  2-prover $\MIP$s; this was proven in the classical setting by Raz~\cite{Raz95}, and in the non-signaling setting by Holenstein~\cite{H07}. In the multi-prover regime, where the number of provers is greater than 2, we do not have a parallel repetition theorem.  Moreover, in the non-signaling setting, Holmgren and Yang~\cite{HY19} provided a negative result, demonstrating that (in general) soundness cannot be amplified via parallel repetition.

We prove that $k$-prover $\NS$ $\MIP$ with $k=O(\sqrt{\log n})$ is contained in $\PSPACE$.  More generally, we prove the following theorem.   

\begin{theorem}[Informal]\label{thm:intro:main}
There exist constants $c,d>0$ such that any $k$-prover $\MIP$ with non-signaling soundness at most $2^{-ck^2}$ and completeness at least $1-2^{-dk^2}$, is contained in $\SPACE\left(\poly(n,2^{k^2})\right)$.
\end{theorem}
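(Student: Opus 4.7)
The plan is to introduce a relaxation of non-signaling, called sub-non-signaling (sub-NS) strategies, and use it as a bridge between the soundness hypothesis and a space-efficient algorithm. A sub-NS strategy assigns a sub-probability distribution on answer tuples to every query tuple, satisfying weakened marginal-consistency constraints rather than full non-signaling equalities. By construction every NS strategy is sub-NS, so the sub-NS value of a game upper-bounds its NS value, and the sub-NS polytope admits a succinct description in terms of subsets of provers and their marginal distributions.

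The core technical step, and the main obstacle, is a conversion lemma: any sub-NS strategy of value at least $1-2^{-\Omega(k^2)}$ can be turned into a genuine NS strategy of value at least $2^{-O(k^2)}$. My natural approach is to iteratively enforce exact marginal consistency one family of constraints at a time, absorbing the sub-NS slack into a controlled multiplicative loss at each repair. With $2^k$ prover-subsets contributing constraints, and each repair costing roughly a $2^{-O(k)}$ factor in the value, careful bookkeeping across the repairs should yield a total multiplicative loss of $2^{O(k^2)}$. The precise form of the slackness in the sub-NS definition must be tuned so that every individual repair is always feasible and the losses compose in the claimed way; getting this calibration right is where I expect most of the work to lie.

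Given the lemma, its contrapositive converts the soundness hypothesis (NS value $\leq 2^{-ck^2}$ on ``no'' instances) into a sub-NS value bound of $\leq 1-2^{-dk^2}$, while on ``yes'' instances the honest prover strategy is NS (hence sub-NS) and witnesses sub-NS value $\geq 1-2^{-dk^2}$. Thus deciding the language reduces to approximating the sub-NS value to additive accuracy strictly smaller than $2^{-\Theta(k^2)}$. Although the corresponding LP has $2^{\poly(n) \cdot k}$ variables, its constraint family is succinct (indexed by prover subsets and the queries to them), so a multiplicative-weights–style LP approximator that enumerates violated constraints implicitly runs in space polynomial in the communication complexity $\poly(n)$, exponential in the number of provers (to handle the $2^k$ subset families), and polynomial in $1/\epsilon$. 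Setting $\epsilon = 2^{-\Theta(k^2)}$ yields the required $\SPACE(\poly(n,2^{k^2}))$ bound and establishes the theorem. An alternative finale is the abstract's Route~1: apply the classical prover reduction to collapse $k$ provers to two, invoke the stated soundness loss and Holenstein's two-prover NS parallel-repetition theorem to amplify to constant soundness at the cost of $2^{O(k^2)}$ blow-up in communication, and then invoke Ito's $\PSPACE$ algorithm for 2-prover NS MIP.
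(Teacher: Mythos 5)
Your high-level architecture matches the paper's: introduce sub-non-signaling as a bridge, prove a conversion lemma from sub-NS value $\geq 1-2^{-\Omega(k^2)}$ to NS value $\geq 2^{-O(k^2)}$, and then finish either by directly approximating the sub-NS value via a space-efficient LP algorithm (the paper's Route~2, using Young's algorithm for mixed packing-and-covering LPs after a specific dual transformation), or via a prover-reduction to two provers followed by Ito's $\PSPACE$ algorithm (the paper's Route~1). So you have correctly identified the skeleton of the argument.

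The gap is in the proof of the conversion lemma, which is precisely where the paper says the ``bulk of technical difficulty'' lies. Your sketch---``iteratively enforce exact marginal consistency one family of constraints at a time, absorbing the sub-NS slack into a controlled multiplicative loss at each repair''---is essentially a restatement of the goal rather than an argument, and the naive version of this plan does not work: repairing one marginal constraint (say, making $p_q|_S$ equal across all $q$ with the same $q_S$) generically breaks constraints for other subsets $T$, and there is no obvious potential function showing the process converges with only a multiplicative loss. The paper's actual proof needs several ideas none of which appear in your sketch: (i) it does not produce an NS strategy directly, but an \emph{honest-referee} NS strategy on a restricted ($\delta$-close) query distribution, and then invokes a black-box theorem from Holmgren--Yang~\cite{HY19} converting honest-referee NS to NS at a $2^{-O(k^2)}$ cost---this theorem is where most of the $2^{-O(k^2)}$ loss actually comes from, not from ``$2^k$ repairs at $2^{-O(k)}$ each''; (ii) it first builds a \emph{monotone nested} family of sub-distributions $\{\Sim^{(1)}_{S,q_S}\}$ with $\Pr[\Sim^{(1)}_{T,q_T}|_S = a_S] \le \Pr[\Sim^{(1)}_{S,q_S}=a_S]$ for $S\subseteq T$, via a greedy clipping argument with a factorial recurrence to control the total mass lost; (iii) it normalizes so that the total mass of $\Sim^{(2)}_{S,q_S}$ depends only on $|S|$; and (iv) the key step is an inclusion--exclusion construction over an \emph{extended answer alphabet} $\mathcal{A}_i \cup \{*\}$, defining $\Pr[p^*_q=(a_S,(*)^{k-|S|})]$ as an alternating sum over $T\supseteq S$, together with a geometric dampening by $k^{-2|S|}$ to force every alternating sum to be non-negative. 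Without these ideas (or a genuinely different mechanism that replaces them), the conversion lemma is not established, and so the theorem is not proved. Two smaller points: in Route~1 the paper must modify the classical prover-reduction (the second prover receives a \emph{random subset} of queries, not a single query) to control the resulting sub-NS slack; and the paper avoids parallel repetition entirely by invoking Ito's \emph{additive-approximation} version of the 2-prover $\PSPACE$ result with accuracy $2^{-\Theta(k^2)}$, sidestepping the completeness degradation that amplifying a $2^{-\Theta(k^2)}$ gap by parallel repetition would cause.
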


We emphasize that this theorem holds only for $\MIP$s that have negligible soundness and almost perfect completeness. In particular, we don't rule out the existence of a $3$-prover $\MIP$ with NS soundness $1/3$ and completeness $2/3$ for $\EXP$.  However, the soundness and completeness gap of such $\MIP$s could not be amplified (to $1-\negl(n)$) without adding provers.

We present two alternative routes for proving   Theorem~\ref{thm:intro:main}, each is of independent interest.  Both routes consider the more relaxed notion of {\em sub-non-signaling}, as  defined in~\cite{LancienW15} (for the goal of obtaining a parallel repetition theorem for non-signaling strategies).  
Both rely on the following theorem that asserts that  one can convert any sub-non-signaling strategy into a non-signaling one, albeit with a substantial loss in the success probability.  

In the following theorem we think of the input~$x$ as being fixed.  Usually, when the input is fixed, the $\MIP$ is referred to as a game. 

\begin{theorem}[Informal]\label{thm:intro:subNS-to-NS}
There exist constants $c,d>0$ such that for any $k$-prover game, if there exists a sub-non-signaling strategy that convinces the verifier to accept with probability at least $1-2^{-ck^2}$ then there exists a non-signaling strategy that  convinces the verifier to accept with probability at least $2^{-dk^2}$.
\end{theorem}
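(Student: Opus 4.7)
The plan is to turn any high-value sub-non-signaling strategy $\tilde{P}$ into a true non-signaling strategy $P$ by iteratively enforcing the non-signaling marginal constraints, losing only a multiplicative $2^{-O(k)}$ factor per layer of constraints. The key structural fact is that sub-non-signaling is precisely the relaxation obtained by replacing the non-signaling marginal equalities by one-sided slacks; since $\tilde{P}$ has value at least $1-2^{-ck^2}$ and hence total mass at least $1 - 2^{-ck^2}$, all these slacks must be tiny, which is what makes the correction cheap.

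I would construct the map $\tilde{P} \mapsto P$ by a layer-by-layer peeling procedure on subsets $S \subseteq [k]$. For each layer $j = k, k-1, \ldots, 1$, I replace the current distribution's marginal on each size-$j$ subset by a canonical non-signaling marginal obtained via pointwise minimization over the unused queries, e.g.
\[
\pi_{S}(a_S \mid q_S) \;:=\; \min_{q_{\bar S}} \sum_{a_{\bar S}} P^{\mathrm{cur}}(a \mid q_S, q_{\bar S}),
\]
and then rescale the conditional distribution on $a_{\bar S}$ to be consistent with this new marginal. Processing subsets in decreasing order of size ensures that fixing a marginal at layer $j$ does not disturb the already-established non-signaling constraints for subsets of size $>j$ (which are themselves marginals of the newly fixed constraint). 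After all layers are processed, $P$ is non-signaling by construction, and its total mass, and hence its acceptance probability (since $\tilde{P}$ concentrates overwhelmingly on accepting answers), is lower-bounded by the product of the per-layer losses.

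The main obstacle, and where all the work sits, is twofold. First, one must verify that each peeling step is well-defined: the pointwise minimum must yield a nonnegative function whose implied conditional on $a_{\bar S}$ can be consistently renormalized, and the intermediate object must remain sub-non-signaling for the constraint-families still to be processed. Both of these are hereditary properties of the sub-non-signaling polytope under marginalization and pointwise minimization, and should follow from the definition. Second, one must bound the per-layer multiplicative loss: comparing marginals across up to $2^{k-j}$ different choices of $q_{\bar S}$ at layer $j$, each of which can differ from the true marginal by at most the sub-non-signaling slack, costs a factor of $1 - 2^{k-j}\cdot 2^{-ck^2}$. Multiplying these factors across all $k$ layers and $\binom{k}{j}$ subsets per layer telescopes to a total loss of $2^{-O(k^2)}$, which, combined with $\tilde{P}$'s value of $1-2^{-ck^2}$, yields a non-signaling strategy of value at least $2^{-dk^2}$ for the stated constants $c, d$.
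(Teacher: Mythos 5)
There is a genuine gap, and it sits exactly where you flag "the main obstacle": the peeling-and-rescaling operation is not sound, and the actual proof relies on a completely different construction that you do not anticipate.

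First, the processing order reasoning is backwards. You say fixing the marginal on a size-$j$ set $S$ does not disturb the already-enforced constraints on sets of size $>j$ "which are themselves marginals of the newly fixed constraint." But the inclusion goes the other way: for $S\subset T$, the $S$-marginal is obtained from the $T$-marginal by summing out $\mathcal{A}_{T\setminus S}$. Changing the $S$-marginal therefore necessarily changes (a linear image of) the $T$-marginal. Worse, even within a single layer, two size-$j$ subsets $S,S'$ overlap, and there is no single rescaling of the conditional on $a_{\overline{S}}$ that simultaneously forces the $S$- and $S'$-marginals to prescribed targets; the feasibility of matching an entire monotone chain of target marginals with an honest probability distribution over $\mathcal{A}$ is precisely the obstruction. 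The paper resolves it in a way you do not foresee: it first builds a chain of \emph{simulator} sub-distributions $\{\Sim^{(3)}_{S,q_S}\}$ satisfying the monotonicity $\Pr[\Sim^{(3)}_{T,q_T}|_S=a_S]\le\Pr[\Sim^{(3)}_{S,q_S}=a_S]$ and fixed per-size total mass, and then \emph{enlarges the answer alphabet} to $\mathcal{A}^*_i=\mathcal{A}_i\cup\{*\}$ and defines the strategy by M\"obius inversion,
\[
\Pr\bigl[p^*_q=(a_S,(*)^{k-|S|})\bigr]\;=\;\sum_{i=0}^{k-|S|}(-1)^i\!\!\sum_{T\supseteq S,\,|T|=|S|+i}\!\!\Pr[\Sim^{(3)}_{T,q_T}|_S=a_S],
\]
with a damping factor $1/k^{2|S|}$ injected into $\Sim^{(3)}$ precisely so that this alternating sum is non-negative. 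That inclusion--exclusion over an extended alphabet is the heart of the argument and has no counterpart in your proposal; without it the constructed object is not a probability distribution.

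Second, the quantitative estimate does not hold up. The sub-non-signaling slack is not pointwise bounded by $2^{-ck^2}$: the guarantee is only on the total rejected mass, and the relevant aggregate the paper tracks is $\nu(q)=\sum_{S}p_{q_S}(\bot)$ with $\E[\nu]\le 2^k\epsilon$, leading to factorial losses $(|S|+2)!$ in the correction steps, not $2^{k-j}$ per layer. Your telescoping $\prod_j(1-2^{k-j}2^{-ck^2})\approx 1-2^{k}2^{-ck^2}$ would yield a value close to $1$, whereas the theorem only promises $2^{-dk^2}$; the dominant $2^{-\Theta(k^2)}$ loss in the paper actually comes from yet another ingredient you omit entirely, namely the reduction to \emph{honest-referee} non-signaling on a restricted set $\GOOD$ of queries followed by an invocation of the Holmgren--Yang theorem (Theorem~\ref{thm:HY}) to upgrade back to genuine non-signaling. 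Using pointwise minimization over $q_{\overline{S}}$ (rather than the paper's more careful max over the "outlier-trimmed" $\tilde{p}$ and a separate greedy monotonicity repair) also risks collapsing the total mass, since the per-answer slacks are not uniformly small.

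In short, the monotone-simulator chain, the M\"obius inversion over $\mathcal{A}^*$ with the $1/k^{2|S|}$ damping, the restriction to $\GOOD$ queries, and the honest-referee-to-non-signaling conversion are all essential and absent; the peeling scheme you propose would not produce a non-signaling strategy.
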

The proof of this theorem contains the bulk of technical difficulty of this work, and is used as a building block in both proofs of Theorem~\ref{thm:intro:main}.  We defer the proof overview of Theorem~\ref{thm:intro:subNS-to-NS} to Section~\ref{sec:intro:overview:proof:sub-to-ns}, and the formal proof to Section~\ref{sec:proof:main}.

We note that a related theorem was proven by Lancien and Winter~\cite{LancienW15}, who showed that for every game with full support, if there exists a sub-non-signaling strategy that succeeds with probability at least $1-\epsilon$ then there exists a  non-signaling strategy that succeeds with probability at least $1-\Gamma\epsilon$, where $\Gamma$ may be as large as exponential in the communication complexity.  This bound does not seem to be tight enough in order to obtain Theorem~\ref{thm:intro:main}.

We next present our two alternative routes for proving Theorem~\ref{thm:intro:main} (using Theorem~\ref{thm:intro:subNS-to-NS}).  The first is via a prover reduction method, and the second is via approximating the sub-non-signaling value efficiently.

\paragraph{Reducing the number of provers.}
We show that (a slight variant of) the classical prover reduction method for converting a $k$-prover $\MIP$ into a $2$-prover $\MIP$ carries over to the non-signaling setting, albeit with a substantial loss in soundness (which depends on~$k$).  

More specifically, in the seminal work of Ben-Or, Goldwasser, Kilian and Wigderson~\cite{Ben-OrGKW89}, they presented a general method for converting a $k$-prover $\MIP$ into a $2$-prover $\MIP$, where in the resulting $2$-prover $\MIP$ the verifier sends one prover the queries $(q_1,\ldots,q_k)$ corresponding to {\em all} the $k$ provers in the underlying $k$-prover scheme, and expects to get back~$k$ answers $(a_1,\ldots,a_k)$; he sends the other prover a single query~$q_i$ corresponding to a random index $i\in[k]$, and gets back an answer~$a'_i$. The verifier accepts if and only if $a'_i=a_i$ and if the verifier in the $k$-prover $\MIP$ accepts the answers $(a_1,\ldots,a_k)$. 

In the non-signaling setting, we slightly modify this transformation by having the verifier of the $2$-prover $\MIP$ send the second prover a {\em subset} of queries $\{q_i\}_{i\in S}$ for a randomly chosen subset $S\subset [k]$ (as opposed to a single query $q_i$ corresponding to a single index $i\in [k]$), and accept if and only if the answers $(a_1,\ldots,a_k)$ of the first prover are accepted by the verifier of the $k$-prover $\MIP$ and if the answers of the second prover, denoted by $(a'_i)_{i\in S}$, satisfy $a'_i=a_i$ for every $i\in S$.
\begin{theorem}[Informal]\label{thm:intro:reduction}
There exist constants $c,d>0$ such that for every $k$-prover $\MIP$ $\Pi=(P_1,\ldots,P_k,V)$ with non-signaling soundness at most $2^{-ck^2}$, the 2-prover $\MIP$ obtained by performing the  prover reduction transformation (described above) on~$\Pi$ has non-signaling soundness at most $1 - 2^{-dk^2}$. 
\end{theorem}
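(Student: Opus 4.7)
The plan is to prove the contrapositive: starting from a 2-prover non-signaling strategy $P^{*}$ for the reduced 2-prover $\MIP$ with value at least $1 - 2^{-dk^{2}}$, I will construct a non-signaling $k$-prover strategy for $\Pi$ with value greater than $2^{-ck^{2}}$, contradicting the assumed NS soundness. The construction proceeds in two stages: first extract from $P^{*}$ a sub-non-signaling $k$-prover strategy whose value is at least $1 - 2^{-\Omega(k^{2})}$, and then apply Theorem~\ref{thm:intro:subNS-to-NS} to produce a non-signaling $k$-prover strategy of value at least $2^{-\Omega(k^{2})}$.

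By non-signaling, $P^{*}$ has well-defined marginals $P_{1}(\vec{a}\mid \vec{q})$ depending only on $\vec{q}=(q_{1},\ldots,q_{k})$ and, for every $S\subseteq [k]$, $P_{2}^{S}(\vec{a}'\mid \vec{q}_{S})$ depending only on $\vec{q}_{S}$. The natural candidate for the $k$-prover strategy is the full-subset marginal $P_{2}^{[k]}(\cdot\mid \vec{q})$, which is a proper joint distribution on $\vec{a}$ given $\vec{q}$. Since $S$ is drawn uniformly from $2^{[k]}$, conditioned on $S=[k]$ the 2-prover success probability is at least $1 - 2^{k}\cdot 2^{-dk^{2}}$, and this is a lower bound for the $k$-prover value of $P_{2}^{[k]}$ under $V$ (because in the event the reduced verifier accepts with $S=[k]$, the answers of Prover~2 coincide with those of Prover~1 and are accepted by $V$). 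The obstacle is that $P_{2}^{[k]}$ need not be non-signaling: its marginal on $T\subseteq [k]$ may depend on $\vec{q}_{[k]\setminus T}$. However, the verifier's check with $S=T$ forces $P_{1}$'s marginal on $T$ to be statistically close to $P_{2}^{T}(\cdot\mid \vec{q}_{T})$, and the check with $S=[k]$ forces $P_{1}$ to be close to $P_{2}^{[k]}$; combining these by the triangle inequality, the marginal of $P_{2}^{[k]}$ on $T$ is within statistical distance $2^{-\Omega(k^{2})}$ of a function of $\vec{q}_{T}$ alone, provided $d$ is sufficiently large.

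To convert this ``approximately non-signaling'' joint distribution into a bona fide sub-non-signaling strategy, I will subtract from $P_{2}^{[k]}$ the mass by which each of its subset marginals exceeds the reference marginal $P_{2}^{S}$. A union bound over the $2^{k}$ subsets of $[k]$ shows that the total mass to be removed is at most $2^{-\Omega(k^{2})}$, so the resulting sub-distribution still makes $V$ accept with probability at least $1 - 2^{-\Omega(k^{2})}$ and has sub-non-signaling marginal structure. Applying Theorem~\ref{thm:intro:subNS-to-NS} yields a non-signaling $k$-prover strategy for $\Pi$ of value at least $2^{-\Omega(k^{2})}$; choosing $c$ smaller than the hidden constant and $d$ correspondingly large gives the desired contradiction. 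The main obstacle is this sub-non-signaling extraction: making the subtraction simultaneously consistent with respect to all $2^{k}$ subsets while retaining high acceptance probability. The $2^{k}$ union bound over subsets is exactly what forces the soundness loss to scale as $2^{\Omega(k^{2})}$, since each subset contributes a factor of $2^{k}$ to the accumulated error while the per-subset signaling error is only $2^{-dk^{2}}$.
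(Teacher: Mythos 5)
Your proposal is correct and reaches the same conclusion via the same two-phase structure as the paper's proof (extract a sub-non-signaling $k$-prover strategy of value $1-2^{-\Omega(k^2)}$ from the given 2-prover non-signaling strategy, then invoke Theorem~\ref{thm:intro:subNS-to-NS}), but the construction of the intermediate sub-non-signaling strategy is genuinely different. The paper defines $p_q$ operationally: for the given $q$, independently sample the pair $(a,a'_S)\leftarrow p_{q,(S,q_S)}$ for \emph{every} subset $S\subseteq[k]$, output $\bot$ if any of the $2^k$ consistency checks fails, and otherwise output the first prover's answer from a uniformly random execution. Sub-non-signaling with respect to $\Sim_{S,q_S}:=$ Prover~2's marginal then follows structurally, and the $1-2^k\epsilon$ value bound is a single union bound. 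Your route instead takes a single candidate joint distribution (Prover~2's marginal at $S=[k]$, or equivalently Prover~1's answer distribution — the two are close via the $S=[k]$ test), observes that the $S=T$ and $S=[k]$ consistency tests together bound the signaling of its $T$-marginal in statistical distance via a triangle inequality through $P_1$, and then greedily subtracts the excess mass over the reference marginals $P_2^T$ to enforce the sub-non-signaling domination pointwise in $q$. This is a valid alternative: greedy subtraction only decreases all marginals, so once a constraint is satisfied it remains satisfied; the expected mass removed is at most the expected total variation excess, which your union-bound accounting correctly bounds by $O(2^{2k}\cdot 2^{-dk^2})=2^{-\Omega(k^2)}$. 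Your approach is somewhat lossier by a factor of roughly $2^k$ and requires you to make the ``subtract excess mass'' step rigorous (which you gesture at but do not fully write out — one must check that the subtraction procedure terminates with marginals dominated by $P_2^S$ for \emph{all} $S$ simultaneously, which works precisely because subtraction is monotone), but it has the advantage of being explicit about where the error comes from and of avoiding the paper's somewhat delicate argument that the distribution of the emitted answer is independent of which execution's answer is emitted. Both are within the same $2^{-\Omega(k^2)}$ ballpark and yield the theorem after an application of Theorem~\ref{thm:intro:subNS-to-NS}.
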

We prove Theorem \ref{thm:intro:reduction} by using Theorem \ref{thm:intro:subNS-to-NS}.   We refer the reader to  Section~\ref{sec:intro:overview:proof:reduction} for the proof idea, and Section~\ref{sec:proof:reduction} for the precise theorem statement and proof. 

We next argue that  Theorem~\ref{thm:intro:reduction} implies Theorem~\ref{thm:intro:main}. Let $c,d$ be the constants from Theorem~\ref{thm:intro:reduction}.  We prove Theorem~\ref{thm:intro:main} with constants $c'=c$ and $d'=2d$.   To this end, fix any  $k$-prover $\MIP$ for a language~$L$ with no signaling soundness~$2^{-ck^2}$ and completeness $1-2^{-2dk^2}$.  Use Theorem~\ref{thm:intro:reduction} to convert this $\MIP$ into a $2$-prover $\MIP$ with non-signaling soundness $1-2^{-dk^2}$ (and completeness  $1-2^{-2dk^2}$). By \cite{Ito10}, the non-signaling value of any $2$-player game can be approximated up to an additive factor of $\epsilon$ in space $\poly(n,1/\epsilon)$.  Setting $\epsilon=2^{-2dk^2}$, there exists an algorithm $\mathcal{A}$ that runs in space $\poly(n,2^{k^2})$, such that on input an element  $x\in\{0,1\}^n\cap L$ it outputs a value $v\geq 1-2\cdot 2^{-2dk^2}$, and on input an element $x\in \{0,1\}^n\setminus L$ it outputs a value $v\leq 1-2^{-dk^2}+ 2^{-2dk^2}$.  This algorithm can be used to decide whether $x\in L$ (assuming without loss of generality that $d> \frac{2}{k^2}$), implying that $L\in\SPACE(\poly(n,2^{k^2}))$.

\paragraph{Approximating the sub-non-signaling value.}

We next present an alternative route for proving Theorem~\ref{thm:intro:main}, without going through the prover reduction method presented above.
Instead we prove the following theorem, which is of independent interest. 

\begin{theorem}[Informal]\label{thm:intro:subNS}
The sub-non-signaling value of any $k$-prover $\MIP$ with input length $n$, can be approximated up to an additive factor~$\epsilon$ by a 
$\poly(n,2^k,1/\epsilon,\cc)$-space algorithm, where $\cc$ is the communication complexity of the $\MIP$ on inputs of length~$n$. 
\end{theorem}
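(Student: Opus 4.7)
My plan is to cast the sub-non-signaling value as the optimum of a packing LP and approximate it within additive $\epsilon$ by a multiplicative-weights-update (MWU) algorithm whose state is stored implicitly. I would set up the primal LP with variables $p(\bar{a}\mid\bar{q})$ for each query/answer vector together with auxiliary upper-bound marginals $m_S(\bar{a}_S\mid\bar{q}_S)$ for each subset $S\subseteq[k]$; the objective is $\sum_{\bar{q},\bar{a}} \pi(\bar{q}) V(\bar{q},\bar{a}) p(\bar{a}\mid\bar{q})$ and the constraints are $p,m_S\geq 0$, the marginal inequalities $\sum_{\bar{a}_{S^c}} p(\bar{a}\mid\bar{q}) \leq m_S(\bar{a}_S\mid\bar{q}_S)$ for every $(S,\bar{q},\bar{a}_S)$, and the sub-normalizations $\sum_{\bar{a}_S} m_S(\bar{a}_S\mid\bar{q}_S) \leq 1$ for every $(S,\bar{q}_S)$. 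Counting everything, the total number $M$ of variables and constraints is $2^{O(\cc+k)}$, so the LP cannot be written down in the target space.

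Next, I would run a Plotkin--Shmoys--Tardos-style MWU on this LP with the dual weights stored implicitly. At iteration $t$ I would identify the best column $j^{(t)}$ maximizing $c_j-(A^{\top}w^{(t)})_j$ by enumerating all $2^{O(\cc+k)}$ columns, each represented by an $(\bar{q},\bar{a})$ or $(S,\bar{q}_S,\bar{a}_S)$ tuple of $O(\cc+k)$ bits. The crucial trick is that the MWU rule yields $w_i^{(t)}=\prod_{s<t}\exp(\eta\cdot\mathrm{slack}_i(j^{(s)}))$, so the weights are a deterministic function of the history $(j^{(1)},\ldots,j^{(t-1)})$; I store only this history, of size $T\cdot O(\cc+k)$, and recompute any single weight on demand in $\poly(\cc,k,T)$ time and space. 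Standard MWU analysis promises $T=\poly(\log M,\rho,1/\epsilon)=\poly(\cc,k,1/\epsilon)$ iterations once the width $\rho$ is controlled; combined with an extra $\poly(n)$ to evaluate $\pi(\bar{q})$ and $V(\bar{q},\bar{a})$ on the fly, the total space budget is $\poly(n,2^k,\cc,1/\epsilon)$.

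The hard part will be controlling the MWU width. The marginal-inequality constraints have right-hand side zero, which degenerates the standard width analysis; my plan is to perturb them to $\sum_{\bar{a}_{S^c}}p\leq m_S+\delta$ with $\delta=\epsilon\cdot 2^{-O(k)}/\poly(\cc)$, converting them into genuine packing constraints. A complementary-slackness argument should bound the resulting additive loss in the objective by $\epsilon$, since the dual variables attached to those constraints are non-negative and their total mass is at most the LP value, itself $\leq 1$. With this perturbation the width becomes $\rho=\poly(2^k,\cc)/\epsilon$, which together with $\log M=O(\cc+k)$ keeps the iteration count polynomial. Verifying this width bound in detail and implementing the lazy-weight bookkeeping carefully in $\poly(n,2^k,\cc,1/\epsilon)$ space is where the bulk of the remaining technical work lies.
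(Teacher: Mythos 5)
Your high-level plan---cast the sub-non-signaling value as an LP of size $2^{O(\cc + k)}$ and approximate it by a multiplicative-weights iteration whose state fits in $\poly(n,2^k,\cc,1/\epsilon)$ space via lazy reconstruction of weights from the iteration history---is in the same spirit as the paper, which also formulates the LP and invokes a space-efficient Lagrangian/packing-style solver (Young's algorithm, Theorem~\ref{packing}). However, you run the MWU directly on the primal LP, whereas the paper first passes to the \emph{dual} and then applies the substitution $\overline{y}_S = 1-y_S$, and this difference is where your argument breaks down.

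The gap is in the width control. Your marginal constraints, written in packing form as $\sum_{\bar a_{S^c}} p(\bar a\mid\bar q) - m_S(\bar a_S\mid\bar q_S)\le 0$, have negative entries in the constraint matrix (the $-m_S$ column), so this is not a packing LP; you propose to fix the zero right-hand side by relaxing to $\le\delta$, but the required $\delta$ is much smaller than you claim. Summing the relaxed constraint over all $\bar a_S\in\mathcal{A}_S$ gives $\sum_{\bar a} p(\bar a\mid\bar q)\le \sum_{\bar a_S} m_S(\bar a_S\mid\bar q_S) + |\mathcal{A}_S|\,\delta \le 1 + |\mathcal{A}_S|\,\delta$, so the perturbed LP's optimum can exceed the true sub-non-signaling value by roughly $|\mathcal{A}_S|\,\delta = 2^{\Theta(\cc)}\,\delta$. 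Forcing this below $\epsilon$ means $\delta\le\epsilon\,2^{-\Theta(\cc)}$, and with that $\delta$ the width $\rho\approx 1/\delta$ and hence the iteration count $T$ become exponential in $\cc$, so the history you propose to store is itself exponentially large and the space bound fails. Your complementary-slackness bound (``dual mass $\le 1$'') does not save you: the dual objective $\sum_{S,q_S} z_S(q_S)\le 1$ bounds the $z$ variables, not the $\ell_1$-mass of the $y_S(q,a_S)$ multipliers attached to the marginal constraints, and that mass can be exponential in $\cc$.

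The paper circumvents exactly this: dualizing and substituting $\overline{y}_S = 1-y_S$ yields a system with \emph{nonnegative} coefficients everywhere, with packing right-hand sides $2^k-1-V(q,a)\ge 2^k-2$ and $1$, i.e. bounded away from zero. A multiplicative $(1+\delta)$-relaxation with $\delta=\Theta(\epsilon/2^k)$ (not $\epsilon/2^{\cc}$) then costs only $O(\epsilon)$ in the objective, and Young's mixed packing-and-covering algorithm runs in the required polylogarithmic space. If you want to keep your primal/MWU formulation, you need a different device to make the effective right-hand sides $\Omega(1)$---additive perturbation alone does not suffice.
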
  
In particular this theorem implies the following corollary.
\begin{corollary}[Informal]
$k$-prover $\subNS$ $\MIP$ is contained in $\SPACE\left(\poly(n,2^{k})\right)$.
\end{corollary}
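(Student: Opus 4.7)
The corollary is an immediate consequence of Theorem~\ref{thm:intro:subNS}, so the substantive work lies in that theorem; I sketch its proof plan and then the short reduction. Given a $k$-prover $\subNS$ $\MIP$ $\Pi$ for $L$ with completeness $\ge 1-\negl(n)$ and soundness $\le\negl(n)$, on any input $x$ the induced game $G_x$ has communication complexity $\cc=\poly(n)$. Invoking Theorem~\ref{thm:intro:subNS} with $\epsilon=1/3$ yields, in space $\poly(n,2^k,1/\epsilon,\cc)=\poly(n,2^k)$, an approximation of the sub-non-signaling value of $G_x$ to within $1/3$; thresholding at $1/2$ then decides $L$, giving the corollary.

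The plan for Theorem~\ref{thm:intro:subNS} itself is to write the sub-non-signaling value as an explicit linear program of size $N=2^k\cdot 2^{O(\cc)}$ and solve it by a parallel LP approximation scheme. I would take primal variables $p_S(a_S\mid q_S)\ge 0$ for every nonempty $S\subseteq[k]$ and every pair $(q_S,a_S)$, subject to the normalizations $\sum_{a_S}p_S(a_S\mid q_S)\le 1$ and the sub-marginal inequalities $\sum_{a_{T\setminus S}}p_T(a_T\mid q_T)\le p_S(a_S\mid q_S)$ for all $S\subsetneq T$; the objective $\sum_q\mu(q)\sum_a V(q,a)\,p_{[k]}(a\mid q)$ is linear and nonnegative, and $\mu(q),V(q,a)$ are computable by the verifier's circuit in $\poly(n)$ space.

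Next I would recognise this as a mixed packing-covering LP and invoke a parallel approximation algorithm (Luby-Nisan, Plotkin-Shmoys-Tardos, or Young). Such algorithms compute a $(1\pm\epsilon)$-approximation in $\polylog(N)\cdot\poly(1/\epsilon)$ parallel depth, which sequentialises to $\poly(\log N,1/\epsilon)=\poly(k,\cc,1/\epsilon)$ working space per gate evaluation. Each round enumerates the $2^k$ subset classes $S\subseteq[k]$ and, for the current multiplicative weight on each class, extracts the most violated constraint by querying the verifier's circuit on individual $(q_S,a_S)$; this enumeration contributes the explicit $2^k$ factor, while arithmetic on $\poly(n)$-bit numerators and denominators contributes the $\poly(n)$ factor. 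The total space is therefore $\poly(n,2^k,1/\epsilon,\cc)$, as claimed.

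The step I expect to be most delicate is verifying that the sub-non-signaling LP really does fit into a parallel-approximable form. Written as $Ax\le b$, the sub-marginal constraints have both $+1$ and $-1$ entries, so a pure positive-LP scheme does not apply directly. The natural fix is to binary-search over the target value $\lambda$ and, for each candidate, test feasibility of the objective lower bound $\sum_q\mu(q)\sum_a V(q,a)\,p_{[k]}(a\mid q)\ge\lambda$ together with the packing bounds and the sub-marginal covering bounds via Young's mixed packing-covering framework. Checking that the associated arg-max oracles -- selecting the most violated constraint over the $2^k$ subset classes against a positive weight vector -- are implementable in $\poly(n,2^k,\cc)$ space, and that the approximation guarantee propagates through the binary search, is the bookkeeping I would be most careful about.
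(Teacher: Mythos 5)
Your high-level plan --- encode the sub-non-signaling value as an exponential-size LP, approximate it by a space-efficient mixed packing/covering scheme (Young), and binary-search over the objective --- is indeed the paper's route, and the final reduction from Theorem~\ref{thm:intro:subNS} to the corollary (threshold at $1/2$) is fine. But there is a genuine gap in how you propose to handle the structural constraints, and you half-notice it yourself. The sub-marginal inequalities you write, $\sum_{a_{T\setminus S}}p_T(a_T\mid q_T)\le p_S(a_S\mid q_S)$ (and likewise the paper's primal constraints $p_q(a_S)\le \Sim_{S,q_S}(a_S)$), have the form $\sum p_T - p_S \le 0$: the constraint matrix has both $+1$ and $-1$ entries and the right-hand side is $0$. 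Young's mixed packing/covering framework requires the packing system $Ax\le b$ and covering system $Cx\ge d$ to have \emph{nonnegative} matrices $A,C$ and nonnegative $b,d$. These mixed-sign constraints are neither packing nor covering, and binary-searching on the objective value only moves the objective into a constraint --- it does nothing to the sign structure of the structural inequalities. So the fix you propose does not close the gap you correctly diagnosed.

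The paper's resolution has two ingredients you are missing. First, \emph{dualize}: the problematic primal constraints become dual variables $y_S(q,a_S)$ and $z_S(q_S)$, with dual constraints $\sum_{S}y_S(q,a_S)\ge V(q,a)$ and $z_S(q_S)\ge \sum_{q^*:q^*_S=q_S}\pi(q^*)y_S(q^*,a_S)$. The second constraint still has a sign problem, which brings in the second ingredient: since $V\le 1$ the optimal $y_S$ satisfies $y_S\le 1$, so one substitutes $\overline{y}_S=1-y_S$. This turns the two families into $\sum_S\overline{y}_S(q,a_S)\le 2^k-1-V(q,a)$ (packing) and $z_S(q_S)+\sum_{q^*}\pi(q^*)\overline{y}_S(q^*,a_S)\ge\sum_{q^*}\pi(q^*)$ (covering), together with $0\le\overline{y}_S\le 1$ (packing) and $z_S\ge 0$, all with nonnegative coefficients and nonnegative right-hand sides. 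Appending the packing bound $\sum_{S,q_S}z_S(q_S)\le v'$ then yields a bona fide mixed packing/covering feasibility problem to which Young's algorithm applies, and your binary search over $v'$ (with the $(1+\delta)$-to-additive-$\epsilon$ translation spelled out in the paper) completes the argument. Without the dualization and the $\overline{y}=1-y$ change of variables the LP never enters the nonnegative regime that the parallel approximation scheme requires.
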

See Section~\ref{sec:overview:proof:sub} for the proof idea, and see Section~\ref{sec:subNS} for the precise theorem statements and proofs. 
 We mention that a related (yet weaker) theorem was proven in~\cite{HY19}, where it was shown that given an $\MIP$, one can distinguish between the case that its classical value is~$1$ (i.e., there exists a local strategy that is accepted with probability~$1$) and the case that its sub-non-signaling value is at most $1-\delta$, in space $\poly(n,2^k,1/\delta)$.  This does not seem to be strong enough for us to use in order to obtain Theorem~\ref{thm:intro:main}.

We next argue that Theorem~\ref{thm:intro:subNS} and Theorem~\ref{thm:intro:subNS-to-NS} imply Theorem~\ref{thm:intro:main}.
To this end, let $c,d>0$ be the constants from Theorem~\ref{thm:intro:subNS-to-NS}.  We prove Theorem~\ref{thm:intro:main} with any constants $c',d'$ such that $c'>c$ and $d'=2d$.    Fix any $k$-prover $\MIP$ with soundness at most $2^{-c'k^2}<2^{-ck^2}$ and completeness at least $1-2^{2dk^2}$. By Theorem~\ref{thm:intro:subNS-to-NS}  for every $x\in \{0,1\}^n\setminus L$ the sub-non-signaling value of the $\MIP$ on input~$x$ must be less than $1-2^{-dk^2}$.  
By Theorem~\ref{thm:intro:subNS}, applied with $\epsilon=2^{-2dk^2}$, there exists an algorithm $\mathcal{A}$ that given any $x\in\{0,1\}^n$, runs in space $\poly(n,2^{k^2})$ and approximates the sub-non-signaling value of this $\MIP$ on input~$x$ up to an additive factor $2^{-2dk^2}$. Therefore for every $x\in \{0,1\}^n\setminus L$, the algorithm $\mathcal{A}$ outputs a value  $v\leq 1-2^{-dk^2}+2^{-2dk^2}$, and for every $x\in \{0,1\}^n\cap L$ the algorithm $\mathcal{A}(x)$ outputs an element  $v\geq 1-2\cdot 2^{-2dk^2}$.
This algorithm can be used to decide whether $x\in L$ (assuming without loss of generality that $d> \frac{2}{k^2}$), implying that $L\in\SPACE(\poly(n,2^{k^2}))$.




\section{Our Techniques}
In this section, we outline the high level overview of the proofs of Theorem~\ref{thm:intro:reduction} and Theorem~\ref{thm:intro:subNS} (the former uses Theorem~\ref{thm:intro:subNS-to-NS} as a building block). We defer the high level overview of the proof of Theorem~\ref{thm:intro:subNS-to-NS}, which contains the bulk of technical difficulty of this work, to Section~\ref{sec:intro:overview:proof:sub-to-ns}.\footnote{We defer this high-level overview since it is convenient to present it after the preliminaries section.  The reader can read Section~\ref{sec:prelim} and jump straight to Section~\ref{sec:intro:overview:proof:sub-to-ns} for the overview.}

\subsection{Overview of Theorem~\ref{thm:intro:reduction}
}\label{sec:intro:overview:proof:reduction}

The main ingredient in the proof of Theorem~\ref{thm:intro:reduction} is a claim showing that any non-signaling strategy for the $2$-prover $\MIP$ that succeeds in convincing the verifier to accept with probability $1-\epsilon$ can be converted into a {\em sub-non-signaling} strategy for the $k$-prover $\MIP$ that succeeds with probability $1-2^k\epsilon$.   This claim, together with Theorem~\ref{thm:intro:subNS-to-NS}, implies Theorem~\ref{thm:intro:reduction} in a relatively straightforward manner.  

We next provide the high-level overview of the proof of this claim.
Given a non-signaling strategy for the 2-prover $\MIP$ we construct a sub-non-signaling strategy for the $k$-party $\MIP$ as follows:  Given $q=(q_1,\ldots,q_k)$, run the non-signaling strategy for the 2-prover $\MIP$ $2^k$ times.  Namely, for every subset $S\subseteq[k]$, run the non-signaling prover for the $2$-prover $\MIP$, while giving the first prover all the queries~$q=(q_1,\ldots,q_k)$ and giving the second prover the subset $(q_i)_{i\in S}$.  If the verifier accepts the resulting answers in {\em all} the $2^k$ executions then output the answers given by the first prover in a random execution among these~$2^k$ executions.
Otherwise, if even one of these proofs is rejected then output~$\bot$.

One can easily argue that this strategy is accepted with probability $1-2^k\epsilon$ (by a straightforward application of the union bound).  Moreover, we argue that this strategy is sub-non-signaling. Intuitively, this follows from the fact that if all of the $2^k$ executions (of the $2$-prover $\MIP$) were accepting, then for every subset~$S$, the distribution of the answers  $(a_i)_{i\in S}$ is the same as the  distribution provided by the second prover in the $2$-prover $\MIP$ on input $(q_i)_{i\in S}$, which is non-signaling.  We refer the reader to Section~\ref{sec:proof:reduction} for the formal proof.

\subsection{Overview of the proof of Theorem~\ref{thm:intro:subNS}}\label{sec:overview:proof:sub}
The proof of this theorem follows the approach of~\cite{Ito10}, which proves that the non-signaling value of any 2-prover $\MIP$ can be approximated in $\PSPACE$.  Specifically, we define a linear program corresponding to the $k$-prover $\MIP$ such that the value of the linear program is equal to the sub-non-signaling value of the $\MIP$.  We then show that this linear program is of a specific form that allows it to be approximated in $\PSPACE$.  Specifically, we show that this linear program can be converted into a mixed packing and covering problem, and use the result of Young~\cite{Y01} which shows that such problems can be approximated via a space-efficient algorithm.  We refer the reader to Section~\ref{sec:subNS} for the precise theorems and proofs.

\section{Preliminaries}\label{sec:prelim}

\begin{definition}
A $k$-prover interactive proof for a language $L$ is said to have completeness $c$ if for every $x\in L$ the honest provers convince the verifier to accept $x\in L$ with probability at least~$c$.  
It is said to have soundness~$s$ is for every possibly malicious (non-interacting and local) provers, and for every $x\notin L$, the probability that these provers convince the verifier to accept $x\in L$ is at most~$s$.
\end{definition}
This work considers non-signaling cheating provers, as opposed to only local ones.  We also often think of the input~$x$ as fixed, and thus think of the proof system as a game, as opposed to a proof of membership in a language~$L$.
\subsection{Non-signaling Games}
\begin{definition}
A {\it $k$-prover, one-round game} is a tuple  $\mathcal{G}=(Q_1,...,Q_k,A_1,...,A_k,V,\pi)$, where $Q_1,...,Q_k$ are sets of  queries, $A_1,...,A_k$ are sets of answers, 
$$V: Q_1 \times Q_2 \times ... \times Q_k \times A_1 \times A_2 \times ... \times A_k \rightarrow \{0,1\}
$$ 
is a polynomial-time computable function, and $\pi$ is a polynomial-time sampleable probability distribution over $(Q_1,...,Q_k)$. 
\end{definition}
In the literature, the provers in a game are often referred to as players, and we use both interchangeably.

\paragraph{Notation.} We denote by $\mathcal{Q}\triangleq Q_1 \times Q_2 \times ... Q_k$ and  $\mathcal{A} \triangleq A_1 \times A_2 \times ... \times A_k$. We also denote by $\mathcal{Q}_{S} \triangleq Q_{s_1} \times Q_{s_2} \times ... \times Q_{s_{|S|}}$, where $S = \{s_1,s_2,...,s_{|S|}\}$, and similarly for $\mathcal{A}_{S}$. We denote by $[k]=\{1,\ldots,k\}$.  For every $q=(q_1,\ldots,q_k)\in\mathcal{Q}$, every $a=(a_1\ldots,a_k)\in\mathcal{A}$, and every $S\subseteq[k]$, we denote by $q_S=(q_i)_{i\in S}$ and $a_S=(a_i)_{i\in S}$.

\begin{definition}
A {\it strategy} for a game $\mathcal{G}=(\mathcal{Q},\mathcal{A},V,\pi)$ is a family of probability distributions $\{p_{q}\}_{q \in \mathcal{Q}}$ over $\mathcal{A}\cup\{\bot\}$. 
\end{definition}
For any $q\in\mathcal{Q}$ and $a\in\mathcal{A}$ we denote by $$
p_q(a)\triangleq\Pr[p_q=a],
$$
and for any subset $S \subseteq [k]$ we denote by $$p_q(a_S)\triangleq \sum_{a^*\in\mathcal{A}:a^*_S=a_S} p_q(a^*).$$
We use a similar notation for $p_q(\bot)$.

\begin{definition}\label{def:NS}
A strategy $\{p_{q}\}_{q\in\mathcal{Q}}$ for a $k$-player game $\mathcal{G}=(\mathcal{Q},\mathcal{A},V,\pi)$ is said to be {\it non-signaling} if  there exists a family of probability distributions $\{\Sim_{S,q_S}\}_{S\subseteq[k], q_S\in \mathcal{Q}_S}$, where each  $\Sim_{S,q_S}$ is a distribution over $\mathcal{A}_S$,  such that for every $q \in \mathcal{Q}$, every $S\subseteq [k]$, and every $a_{S} \in \mathcal{A}_{S}$, 
$$
p_{q}(a_{S}) = \Sim_{S,q_S}(a_S).
$$
\end{definition}
Namely, a strategy is non-signaling if the marginal distributions of the answers are the same regardless of the other queries. 
Note that if $\{p_q\}_{q\in\mathcal{Q}}$ is a non-signaling strategy then for every $q\in\mathcal{Q}$,
$$
\sum_{a_S}\Pr[p_q=a_S]=\sum_{a_S}\Pr[\Sim_{S,q_S}=a_S]=1,
$$
which implies that $\Pr[p_q=\bot]=0$. 

Two relaxations of the notion of non-signaling were considered in the literature:  the first is the notion of {\em sub-non-signaling}, by Lancien and Winter~\cite{LancienW15}, and the second is the notion of {\em honest-referee non-signaling} by Holmgren and Yang~\cite{HY19}. In both cases these relaxed notions were motivated by the goal of proving a parallel repetition theorem for non-signaling strategies.  We begin by defining the latter notion.

Loosely speaking, a strategy $\{p_{q}\}_{q\in\mathcal{Q}}$ for a $k$-player game $\mathcal{G}=(\mathcal{Q},\mathcal{A},V,\pi)$ is said to be {\it honest-referee non-signaling} if the non-signaling condition holds for every $q\in\mathcal{Q}$ such that $\Pr[\pi=q]>0$ (and is not required to hold for queries that are not in the support of~$\pi$).

\begin{definition}\label{def:hrNS} $\{p_{q}\}_{q\in\mathcal{Q}}$ is a honest-referee non-signaling strategy for $\mathcal{G}$ if  there exists a family of probability distributions  $\{\Sim_{S,q_S}\}_{S\subseteq[k], q_S\in \mathcal{Q}_S}$, where each $\Sim_{S,q_S}$ is a distribution over $\mathcal{A}_S$,  such that for every $q \in  \mathcal{Q}$ in the support of $\pi$, every $S\subseteq [k]$, and every $a_{S} \in \mathcal{A}_{S}$, 
$$
p_{q}(a_{S}) = \Sim_{S,q_S}(a_S).
$$
\end{definition}


\begin{definition}\label{def:subNS}
A strategy $\{p_{q}\}_{q\in\mathcal{Q}}$  for a $k$-player game $\mathcal{G}=(\mathcal{Q},\mathcal{A},V,\pi)$ is said to be   {\it sub-non-signaling} if  there exists a family of probability distributions $\{\Sim_{S,q_S}\}_{S\subseteq[k], q_S \in \mathcal{Q}_S}$, where each $\Sim_{S,q_S}$ is a distribution over $\mathcal{A}_S$, such that for every $q \in \mathcal{Q}$, every $S\subseteq [k]$, and every $a_{S} \in \mathcal{A}_{S} $, $$p_q(a_{S}) \leq \Sim_{S,q_S}(a_{S}).$$ 
\end{definition}

If $\{p_{q}\}_{q\in\mathcal{Q}}$ is a sub-non-signaling strategy then for every $q\in\mathcal{Q}$, if $$\sum_{a_S} p_q(a_{S}) <\sum_{a_S} \Sim_{S,q_S}(a_{S})=1,$$
then in the remaining probability $p_q$ outputs $\bot$.

\begin{definition}
Let $\NS(\mathcal{G})$ be the set of non-signaling strategies of a $k$-prover game $\mathcal{G}=(\mathcal{Q},\mathcal{A},V,\pi)$. The {\it non-signaling value} of $\mathcal{G}$ is 
$$
\mathcal{V}_{\NS}(\mathcal{G})=\max_{\{p_{q}\}_{q\in\mathcal{Q}} \in \NS(\mathcal{G})} \sum_{q\in\mathcal{Q}} \pi(q) \sum_{a\in\mathcal{A}} p_q(a) V(q,a).$$ Similarly, let $\hrNS(\mathcal{G})$ be the set of honest-referee non-signaling strategies of $\mathcal{G}$.  The  {\it honest-referee non-signaling value} of $\mathcal{G}$ is  $$
\mathcal{V}_{\hrNS}(\mathcal{G})=\max_{\{p_{q}\}_{q\in\mathcal{Q}} \in \hrNS(\mathcal{G})} \sum_{q\in\mathcal{Q}} \pi(q) \sum_{a\in\mathcal{A}} p_q(a) V(q,a).$$
Let $\subNS(\mathcal{G})$ be the set of  sub-non-signaling strategies of $\mathcal{G}$.  The  {\it sub-non-signaling value} of $\mathcal{G}$ is  $$
\mathcal{V}_{\subNS}(\mathcal{G})=\max_{\{p_{q}\}_{q\in\mathcal{Q}} \in \subNS(\mathcal{G})} \sum_{q\in\mathcal{Q}} \pi(q) \sum_{a\in\mathcal{A}} p_q(a) V(q,a).$$
\end{definition}

\begin{definition}
For any $\delta>0$ and any $k$-player game $\mathcal{G}=(\mathcal{Q},\mathcal{A},V,\pi)$, let $\subNS_\delta(\mathcal{G})$ be the set of all sub-non-signaling strategies $\{p_q\}_{q\in\mathcal{Q}}$ of the game $\mathcal{G}$ such that for every $q\in\mathcal{Q}$,
$$
\Pr[p_q=\bot]\leq \delta.
$$

\end{definition}

\iftrue{\subsection{Linear Programming}
\begin{definition}[\cite{MG07}]\label{def:dual}
Fix any linear program given by $\max \mathbf{c}^{\top}\mathbf{x}$ subject to $\mathbf{x}_S \geq 0$, $\mathbf{x}_T$ unrestricted, $A_{U}\mathbf{x} \leq \mathbf{b}_{U}$, and $A_{V}\mathbf{x} = \mathbf{b}_{V}$, where $S,T$ are disjoint and $S\cup T=[n]$, where $n=|x|$, and where $U,V$ are disjoint and $U\cup V=[m]$ where $m$ is the number of rows of~$A$, where $A$ is defined to be the matrix whose rows are the rows of $A_U$ and the rows of~$A_V$.    

The dual of this linear program is defined by $\min \mathbf{b}^{\top}\mathbf{y}$, where $|\mathbf{y}|=m$, subject to $\mathbf{y}_U \geq 0$, $\mathbf{y}_V$ unrestricted, $A^{\top}_{S}\mathbf{y} \geq \mathbf{c}_S$, $A^{\top}_{T}\mathbf{y} = \mathbf{c}_T$.
\end{definition}

\begin{theorem}[Strong duality \cite{MG07}]
If the value of a linear program is finite then it is equal to the value of its dual.
\end{theorem}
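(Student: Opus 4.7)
The plan is to prove strong duality via the classical route through Farkas' lemma, which in turn rests on the separating hyperplane theorem. First I would reduce the given LP to a canonical form, namely $\max \mathbf{c}^\top \mathbf{x}$ subject to $A\mathbf{x} \leq \mathbf{b}$ and $\mathbf{x} \geq \mathbf{0}$, by the standard tricks: replace each equality constraint $A_V \mathbf{x} = \mathbf{b}_V$ by the two inequalities $A_V \mathbf{x} \leq \mathbf{b}_V$ and $-A_V \mathbf{x} \leq -\mathbf{b}_V$, and split each unrestricted variable $x_j$ (for $j \in T$) as $x_j = x_j^+ - x_j^-$ with $x_j^+, x_j^- \geq 0$. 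After this reduction, the dual prescribed in Definition~\ref{def:dual} agrees, after the appropriate identification of multipliers (pairing the two inequalities coming from an equality yields a single unrestricted dual variable, and the two non-negative variables coming from an unrestricted primal variable combine into an equality dual constraint), with the dual of the canonical form, so it suffices to establish strong duality in the canonical form.

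Weak duality is immediate in canonical form: for any primal-feasible $\mathbf{x}$ and dual-feasible $\mathbf{y}$,
\begin{equation*}
\mathbf{c}^\top \mathbf{x} \;\leq\; (A^\top \mathbf{y})^\top \mathbf{x} \;=\; \mathbf{y}^\top (A\mathbf{x}) \;\leq\; \mathbf{y}^\top \mathbf{b},
\end{equation*}
using $A^\top \mathbf{y} \geq \mathbf{c}$, $\mathbf{x} \geq 0$, $\mathbf{y} \geq 0$, and $A\mathbf{x} \leq \mathbf{b}$. Thus the primal value is at most the dual value whenever both are feasible.

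For strong duality, let $v^*$ denote the (finite) primal optimum. I would apply Farkas' lemma to the system $A\mathbf{x} \leq \mathbf{b}$, $-\mathbf{c}^\top \mathbf{x} \leq -(v^* + \varepsilon)$, $\mathbf{x} \geq 0$, which is infeasible for every $\varepsilon > 0$ by the definition of $v^*$. Farkas supplies multipliers $\mathbf{y}_\varepsilon \geq 0$ and $\lambda_\varepsilon \geq 0$, not all zero, with $A^\top \mathbf{y}_\varepsilon \geq \lambda_\varepsilon \mathbf{c}$ and $\mathbf{b}^\top \mathbf{y}_\varepsilon < \lambda_\varepsilon (v^* + \varepsilon)$. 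A short argument shows $\lambda_\varepsilon > 0$: otherwise $\mathbf{y}_\varepsilon$ would certify unboundedness of the primal via the recession direction, contradicting finiteness of $v^*$. Rescaling by $\lambda_\varepsilon$ yields a feasible dual point of objective value strictly less than $v^* + \varepsilon$, and letting $\varepsilon \to 0$ together with weak duality forces the dual optimum to equal $v^*$.

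The main obstacle is Farkas' lemma itself, which is the only step that requires a genuine geometric input, normally a separating hyperplane theorem applied to a closed convex cone in $\mathbb{R}^n$ (one must also verify closedness of the relevant cone $\{A^\top \mathbf{y} : \mathbf{y} \geq 0\}$, which uses that it is finitely generated). Once Farkas is in hand, the reduction to canonical form and the limiting argument above are essentially bookkeeping; all the real content sits inside that one geometric fact.
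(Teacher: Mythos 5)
Your proposal is a correct sketch of the standard textbook proof of LP strong duality via Farkas' lemma, and it is the argument given in the reference \cite{MG07} the paper cites. The paper itself does not prove this theorem --- it is stated as an imported result from the literature and used as a black box --- so there is no in-paper proof to compare against; your outline (canonical-form reduction, weak duality, Farkas applied to the infeasible system $A\mathbf{x}\leq\mathbf{b}$, $-\mathbf{c}^\top\mathbf{x}\leq -(v^*+\varepsilon)$, $\mathbf{x}\geq 0$, ruling out $\lambda_\varepsilon=0$ by finiteness, then taking $\varepsilon\to 0$) is the standard and correct route, with the geometric content correctly isolated in Farkas/closedness of the finitely generated cone.
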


\begin{definition}[\cite{Y01}]
A mixed packing and covering problem is a pair of non-negative matrices $A,C$ and a pair of non-negative vectors $b,d$. A solution to a mixed packing and covering problem is a vector $x$ such that $x \geq 0$, $Ax \leq b$, and $Cx \geq d$. 
\end{definition}

\begin{theorem}[\cite{Y01}]
\label{packing}
Let $(A,b,C,d)$ be a mixed packing and covering problem. Then, there exists an algorithm running in space $\poly(\log(|(A,b,C,d)|),1/\epsilon)$ to determine whether there does not exist a solution to the mixed packing and covering problem or to output a solution to the mixed packing and covering problem $(A,b(1 + \epsilon),C,d)$. 
\end{theorem}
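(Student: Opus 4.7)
The plan is to prove Theorem~\ref{packing} by implementing a version of the multiplicative-weights-update (MWU) algorithm of Young~\cite{Y01} and observing that its state can be maintained, and its elementary operations carried out, in space polynomial in $\log|(A,b,C,d)|$ and $1/\epsilon$. I would maintain a current candidate solution $x\geq 0$, starting from $x=0$, and implicitly define dual weights $y_i \propto \exp\bigl(\alpha\,(Ax)_i/b_i\bigr)$ for every packing row $i$ and $z_j \propto \exp\bigl(-\alpha\,(Cx)_j/d_j\bigr)$ for every covering row $j$, where $\alpha=\Theta(\log(m+n)/\epsilon)$. At each iteration, I scan the columns and pick the variable $k^{\star}$ maximizing
\[
\rho_k \;=\; \frac{\sum_j z_j C_{jk}/d_j}{\sum_i y_i A_{ik}/b_i},
\]
then increase $x_{k^{\star}}$ by the largest step that does not increase any $(Ax)_i/b_i$ by more than a $(1+\epsilon)$-factor in that step.

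The second step is the potential-function analysis. I would define $\Phi(x) = \sum_i y_i + \sum_j z_j$ and show, along Young's lines, that either (i) the maximum ratio $\rho_{k^{\star}}$ is too small, in which case an infeasibility certificate can be read off from the current weights, or (ii) after $N=\poly(m,n,1/\epsilon)$ iterations, the algorithm halts with $x$ satisfying $Ax \leq (1+\epsilon)b$ and $Cx \geq d$. The number of increments of each variable is bounded by a polynomial in the problem size and $1/\epsilon$.

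For space efficiency, the key observation is that $x$ need never be stored explicitly: it suffices to represent $x_k$ by an integer counter $c_k$ recording the number of increments performed on coordinate $k$, together with the (easily recomputed) step sizes. Because each iteration is a deterministic function of the preceding state, I can use a Savitch-style trade-off in which, instead of storing the full tuple $(c_1,\ldots,c_n)$, I replay the algorithm from scratch to recompute any needed $c_k$, keeping in memory only a logarithmic amount of bookkeeping (the current iteration index, the index of the column currently being evaluated, and a small amount of running arithmetic). Since the weights $y_i,z_j$ are exponentials of rational linear functions of the counters, each can be evaluated on demand to $O(\log(1/\epsilon))$ bits of precision using standard space-efficient arithmetic.

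The main obstacle I expect is controlling the accumulated numerical error: evaluating the $\rho_k$ ratios requires computing exponentials and summing $\poly(m)$ terms, and I need to argue that $O(\log((m+n)/\epsilon))$ bits of precision per weight are enough to identify a column $k$ whose $\rho_k$ is within a $(1\pm\epsilon)$-factor of the true maximum, and that such approximate maximization preserves Young's $O(\log(m+n)/\epsilon^2)$ potential-decrease guarantee (perhaps after absorbing a constant loss into $\alpha$). Once this is verified, the overall algorithm decides infeasibility or produces the required relaxed solution in $\poly(\log|(A,b,C,d)|,1/\epsilon)$ space, completing the proof.
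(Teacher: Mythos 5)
The paper does not prove Theorem~\ref{packing}; it is an imported result, cited directly from Young~\cite{Y01}. Your attempt to reconstruct the proof contains a genuine gap in the space accounting, and it is precisely the step where you invoke a ``Savitch-style trade-off.''

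The sequential MWU algorithm you describe maintains state that is inherently large: the counters $c_1,\dots,c_n$ (equivalently, the current $x$, or the accumulated row sums $(Ax)_i$ and $(Cx)_j$) occupy $\Omega(m+n)$ words, and in the application at hand $m,n$ are \emph{exponential} in the input length, so this is exponential space. The recomputation idea does not rescue this. To determine the column $k^\star_t$ chosen at iteration $t$ you must evaluate the weights $y_i,z_j$, which depend on $(Ax^{(t-1)})_i$; these in turn depend on \emph{all} of $(k^\star_1,\delta_1),\dots,(k^\star_{t-1},\delta_{t-1})$; and each earlier $(k^\star_s,\delta_s)$ again requires the full set of row sums at time $s-1$. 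The dependency chain has depth equal to the iteration count $N=\poly(m,n,1/\epsilon)$, which is not polylogarithmic. Whether you realize the recursion as nested procedure calls (each level needing its own loop index and accumulator) or flatten it into nested loops, the space is $\Omega(N)\cdot\polylog$, not $\polylog$. ``A logarithmic amount of bookkeeping'' is not enough because the function computed at each level is not a function of $O(\log)$ bits of prior state but of a full $\Theta(m)$-dimensional vector. This is not a numerical-precision issue; it is a structural one, and no amount of bit-precision analysis closes it.

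The route that actually works, and the one Young's paper supports, is fundamentally different: \cite{Y01} gives a \emph{parallel} algorithm in which, in each round, \emph{all} coordinates of $x$ with favorable ratio are incremented simultaneously by carefully chosen amounts, so that the number of rounds is $O\bigl(\poly(\log(m+n),1/\epsilon)\bigr)$ rather than $\poly(m,n)$. Because the algorithm is a uniform circuit of polylogarithmic depth (with $\poly(1/\epsilon)$ overhead) and polynomial width, it lies in (a $1/\epsilon$-parameterized version of) $\mathrm{NC}$, and the standard depth-first simulation of log-space-uniform bounded-fan-in circuits yields a space bound of $O(\text{depth}\cdot\log(\text{size}))=\poly(\log|(A,b,C,d)|,1/\epsilon)$. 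So the statement should be derived by citing the parallel round bound of~\cite{Y01} and then invoking the $\mathrm{NC}\subseteq\mathrm{polyL}$ simulation, not by attempting to space-compress the sequential algorithm.
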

}\fi

\section{Non-Signaling Games with $k$ Players and $2^{-\Omega (k^2)}$ Soundness are in $\SPACE\left(\poly(n,2^{k^2})\right)$}

In what follows we state our main theorem.
\begin{theorem}\label{thm:NS}
There exists constants $c,d>0$ for which the following holds: Fix any language $L \notin \SPACE(\poly(n,2^k))$ and any $k$-prover one-round proof system $(P_1,\ldots,P_k,V)$ for $L$ with completeness $\geq 1-2^{-ck^2}$.  
For every $x$ consider the game $\mathcal{G}_x=(\mathcal{Q},\mathcal{A},V,\pi_x)$, where $\mathcal{Q}=\mathcal{Q}_1\times\ldots\times\mathcal{Q}_k$ and where $\mathcal{Q}_i$ is the set of possible queries sent by $V$ to prover $P_i$, $\mathcal{A}=\mathcal{A}_1\times,\ldots,\mathcal{A}_k$ and where $\mathcal{A}_i$ is the set of possible answers sent by $P_i$, and  $\pi_x$ is the distribution of queries sent by $V(x)$.

Then, there exists an infinite set $N\subseteq \mathbb{N}$, such that for every $n\in N$ there exists  $x\in\{0,1\}^n\setminus L$ such that $\mathcal{V}_{\NS}(\mathcal{G}_x)\geq 2^{-d\cdot k^2}$.

\end{theorem}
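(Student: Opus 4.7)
The plan is to prove Theorem~\ref{thm:NS} by contrapositive, via the sub-non-signaling route (Route~2 of the introduction). Suppose toward contradiction that for all but finitely many input lengths $n$, every $x \in \{0,1\}^n \setminus L$ satisfies $\mathcal{V}_{\NS}(\mathcal{G}_x) < 2^{-dk^2}$. I will turn this assumption into a small-space decision procedure for $L$, contradicting the hypothesis $L \notin \SPACE(\poly(n,2^k))$ once the constants are aligned so that the resulting space bound fits this class.

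First, I separate the two sides of the gap. For $x \in L$, the honest provers' strategy is local, hence non-signaling, hence sub-non-signaling, and it is accepted with probability at least $1 - 2^{-ck^2}$ by the completeness hypothesis, so $\mathcal{V}_{\subNS}(\mathcal{G}_x) \geq 1 - 2^{-ck^2}$. For $x \notin L$, let $c_0, d_0$ be the constants supplied by Theorem~\ref{thm:intro:subNS-to-NS}; its contrapositive says that $\mathcal{V}_{\NS}(\mathcal{G}_x) < 2^{-d_0 k^2}$ forces $\mathcal{V}_{\subNS}(\mathcal{G}_x) < 1 - 2^{-c_0 k^2}$. Setting $d := d_0$ and choosing $c$ strictly greater than $c_0$ opens a gap of $\Omega(2^{-c_0 k^2})$ between the sub-non-signaling values for $x \in L$ and $x \notin L$.

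Next, I invoke Theorem~\ref{thm:intro:subNS} with additive error $\epsilon$ equal to one third of this gap. Since $\epsilon = 2^{-O(k^2)}$ and the communication complexity $\cc$ of the underlying proof system is $\poly(n)$, the approximation algorithm runs in space $\poly(n, 2^k, 1/\epsilon, \cc)$. Thresholding the approximated value at the midpoint of the gap decides $L$ on all but finitely many input lengths; the finitely many exceptional lengths can be hard-coded into the algorithm. This yields the desired contradiction. As an independent sanity check, Route~1 (Theorem~\ref{thm:intro:reduction} together with Ito's $\PSPACE$ approximation for $2$-prover non-signaling games) gives an alternative derivation via the same gap argument, with the $2$-prover game obtained by prover reduction replacing the sub-non-signaling game.

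The main obstacle is already encapsulated in Theorem~\ref{thm:intro:subNS-to-NS}, the sub-non-signaling to non-signaling conversion, whose proof (in Section~\ref{sec:proof:main}) carries the bulk of the technical difficulty of this paper. Given that black box and the space-efficient approximation of Theorem~\ref{thm:intro:subNS}, the present theorem is essentially bookkeeping: negating the existential quantifier in its statement to obtain the ``all but finitely many $n$'' assumption, and choosing the constants $c$ and $d$ so that the completeness value $1 - 2^{-ck^2}$, the soundness-side sub-non-signaling bound $1 - 2^{-c_0 k^2}$, and the additive error of the sub-non-signaling approximation leave a strictly positive decision margin.
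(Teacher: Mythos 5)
Your overall route matches the paper's second proof of Theorem~\ref{thm:NS} (sub-non-signaling approximation plus Theorem~\ref{thm:main}), and the two-sided gap you set up between the sub-non-signaling values of yes- and no-instances is the right object. However, there is a genuine quantitative gap. You invoke the sub-non-signaling approximation algorithm with additive error $\epsilon = \Theta(2^{-c_0 k^2})$, which by Theorem~\ref{thm:LP'} costs space $\poly(n, 2^k, 1/\epsilon, \cc) = \poly(n, 2^{k^2})$. The conclusion you reach is therefore $L \in \SPACE(\poly(n, 2^{k^2}))$, which does \emph{not} contradict the stated hypothesis $L \notin \SPACE(\poly(n, 2^{k}))$, since $\SPACE(\poly(n,2^k)) \subseteq \SPACE(\poly(n,2^{k^2}))$ and the containment is strict once $k=\omega(1)$. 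So as written, your contrapositive only establishes the weaker version of the theorem with $\SPACE(\poly(n,2^{k^2}))$ in the hypothesis (which is what Theorem~\ref{thm:intro:main} in the introduction asserts), not Theorem~\ref{thm:NS} itself.

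The paper's Section~\ref{sec:subNS} avoids this by decoupling the accuracy of the approximation from $k$. Corollary~\ref{cor:LP} runs $\mathcal{B}$ with $\epsilon = \Theta(1/n^d)$, which is \emph{polynomial in $n$}, so the space stays at $\poly(n,2^k)$. Its contrapositive then only guarantees, for infinitely many $n$, some $x\in\{0,1\}^n\setminus L$ with $\mathcal{V}_\subNS(\mathcal{G}_x) \geq c - 1/n$ (where $c$ is the completeness). The key extra step is to then restrict to the still-infinite subset of those $n$ with $n \geq 2^{5k^2}$; on that subset the additive loss $1/n$ is at most $2^{-5k^2}$, which combined with completeness $\geq 1 - 2^{-\Omega(k^2)}$ gives $\mathcal{V}_\subNS(\mathcal{G}_x)\geq 1 - 2^{-4k^2}$, and only \emph{now} does one feed this into Theorem~\ref{thm:main} to lower-bound $\mathcal{V}_\NS(\mathcal{G}_x)$. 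In short, you should approximate to accuracy $1/\poly(n)$ and push the $2^{-\Omega(k^2)}$ requirement onto a restriction to large $n$, rather than baking $2^{-\Omega(k^2)}$ into $\epsilon$. The same correction applies to your Route~1 sanity check via Ito's algorithm, which has the same $\poly(n,1/\epsilon)$ space dependence.
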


Our proof of Theorem~\ref{thm:NS} makes use of the following theorem which is the main technical contribution of this work. 
\begin{theorem}\label{thm:main}
There exist constants $c,d>0$, such that for any $k\in\mathbb{N}$ and any $k$-player game $\mathcal{G}$ the following holds:  If $\mathcal{V}_{\subNS}(\mathcal{G})\geq 1-2^{-ck^2}$  then $\mathcal{V}_{\NS}(\mathcal{G})\geq 2^{-d\cdot k^2}$. Moreover, for every $\delta\leq\frac{1}{k^{3k}}$, if $\mathcal{V}_{\subNS_\delta}(\mathcal{G})\geq 1-{\delta^2}$ then $\mathcal{V}_{\NS}(\mathcal{G})\geq \frac{1}{k^{3k}}$.   
\end{theorem}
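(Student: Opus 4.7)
My plan is to first establish the moreover statement and then derive the main claim from it. For the moreover, the task is: given $\{p_q\} \in \subNS_\delta$ with value at least $1 - \delta^2$, construct an NS strategy $\{\tilde p_q\}$ with value at least $1/k^{3k}$.

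The starting observation is that, since $\Pr[p_q = \bot] \leq \delta$, the slack $\Sim_{S,q_S}(a_S) - p_q(a_S)$ is nonnegative and sums over $a_S$ to exactly $b_q \leq \delta$; in particular, the $S$-marginals of $p_q$ already agree with a distribution depending only on $q_S$ up to total-variation distance $\delta$. I would construct $\tilde p_q$ by iteratively enforcing exact non-signaling on marginals of size $s$ for $s = k, k-1, \ldots, 1$, at each step redistributing mass of order $\delta$ so that the marginal on each $S$ with $|S| = s$ depends only on $q_S$.  Since the target family $\{\Sim_{S,q_S}\}$ need not itself be consistent across subset sizes (the marginal of $\Sim_{S,q_S}$ on $T \subsetneq S$ may differ from $\Sim_{T,q_T}$), the repair at level $s$ can perturb already-fixed marginals at higher levels, requiring a careful accounting of mass flow across levels.

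The bulk of the technical work --- and the main obstacle --- is bounding the cumulative loss from this chain of corrections. My plan is to phrase the construction as a linear program whose variables are $\tilde p_q(a)$ and whose constraints encode non-signaling together with pointwise non-negativity, and to apply LP duality (in the spirit of a potential-function argument that loses a $k^{O(1)}$ factor at each of the $k$ levels) to show that the LP is feasible with objective value at least $1/k^{3k}$. The total loss factor $k^{3k}$ arises as a product of $k$ per-level factors, each of order $k^{O(1)}$, reflecting the $\binom{k}{s}$ subsets at level $s$ and the $\delta$-slack budget per subset.  An alternative would be a randomized construction that samples a uniformly random ordering of $[k]$ and generates answers sequentially from conditional distributions derived from $\Sim$, symmetrizing over orderings to certify non-signaling; the challenge there is verifying that the symmetrization indeed eliminates the dependence on the complement of $S$.

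To deduce the first part of the theorem from the moreover, I would use a Markov-type argument. A general $\subNS$-strategy with value $\geq 1 - 2^{-ck^2}$ has $\mathbb{E}_q[b_q] \leq 2^{-ck^2}$, so the $\pi$-mass of queries with $b_q > \delta := 1/k^{3k}$ is at most $2^{-ck^2} \cdot k^{3k}$. After mixing the original strategy with a fixed non-signaling baseline on this ``bad'' query set (and re-normalizing the $\Sim$-bounds accordingly, which requires some care to preserve sub-non-signaling globally), one obtains a $\subNS_\delta$-strategy whose value remains at least $1 - \delta^2$, provided $c$ is chosen large enough so that $2^{-ck^2} \cdot k^{3k} \ll \delta^2$. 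Applying the moreover then yields $\mathcal{V}_{\NS}(\mathcal{G}) \geq 1/k^{3k} \geq 2^{-dk^2}$ for a suitably large constant $d$.
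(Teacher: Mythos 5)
Your plan has a genuine gap at the point where you try to derive the first part from the ``moreover'' via a Markov argument. The obstacle is that a general sub-non-signaling strategy $\{p_q\}$ with $\E_q[p_q(\bot)] \leq 2^{-ck^2}$ can have individual queries with $p_q(\bot)$ close to~$1$, and when you ``mix the original strategy with a fixed non-signaling baseline on this bad query set'' you do not obtain a $\subNS_\delta$ strategy for the original game. The family $\{\Sim_{S,q_S}\}$ has to bound $p_q(a_S)$ for \emph{every} $q$ with $q_S$ fixed, so changing $p_q$ on bad $q$ while leaving it alone on good $q'$ sharing $q_S$ either breaks the $\Sim$ bounds or requires re-choosing $\Sim$, which in turn breaks the bounds at the good queries. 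The paper handles this by \emph{not} trying to stay inside $\subNS_\delta$ for $\mathcal{G}$: it restricts the referee distribution to a close distribution $\pi^*$ supported only on $\GOOD$, proves an \emph{honest-referee} non-signaling bound for the restricted game $\mathcal{G}^*$ (where the non-signaling constraints only need to hold for $q \in \text{supp}(\pi^*)$), and then invokes a lifting theorem of Holmgren--Yang that converts an honest-referee NS strategy into a true NS one at the cost of a $2^{-O(k^2)}$ factor. That factor is exactly why the first part gives the weaker $2^{-dk^2}$ bound while the ``moreover'' gets $1/k^{3k}$: in the $\subNS_\delta$ case one shows $\GOOD = \mathcal{Q}$ and the honest-referee strategy is already non-signaling, so the Holmgren--Yang loss is avoided. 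Without some analog of this lifting step your reduction does not close.

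The construction for the ``moreover'' itself is also underspecified in a way that hides the central difficulty. After cleaning up the $\subNS_\delta$ strategy so that the $\Sim_{S,q_S}$ are monotone under restriction and have mass depending only on $|S|$, the natural move is the inclusion--exclusion formula
\begin{align*}
\Pr[p^*_q = (a_S,(*)^{k-|S|})] \ = \ \sum_{i=0}^{k-|S|}(-1)^i \sum_{T \supseteq S,\ |T|=|S|+i} \Pr\bigl[\Sim_{T,q_T}|_S = a_S\bigr],
\end{align*}
which by telescoping forces $\Pr[p^*_q|_S = a_S] = \Pr[\Sim_{S,q_S} = a_S]$ and hence non-signaling. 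The danger, which your proposal does not surface, is that these alternating sums can be \emph{negative}. The paper's fix is to first shrink $\Sim_{S,q_S}$ by a factor $k^{-2|S|}$, which makes the positive terms dominate the alternating tail geometrically; this shrink is precisely the source of the $k^{-O(k)}$ loss in the final value. Your LP-duality framing could in principle encode this, but as stated it gives no hint of why feasibility holds or where $k^{3k}$ comes from; the ``random ordering'' alternative has the same problem of unverified non-negativity. In short, the two ingredients the paper actually relies on --- honest-referee non-signaling plus the Holmgren--Yang lifting theorem, and the $k^{-2|S|}$ shrink enabling the inclusion--exclusion construction --- are both absent from the proposal.
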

We defer the proof of Theorem~\ref{thm:main} to Section~\ref{sec:proof:sub-to-ns}.
In what follows, we provide two alternative proofs for Theorem~\ref{thm:NS}, both which  use Theorem~\ref{thm:main} as a building block.  The first proof is given in Section~\ref{sec:proof:reduction} and the second proof is given in Section \ref{sec:subNS}. Both proofs only rely on the first part of Theorem~\ref{thm:main}.  The second part of Theorem~\ref{thm:main}, which converts a strategy in $\subNS_\delta$ into a non-signaling strategy, is not needed for our main result.  We add it as a contribution of independent interest, as it provides a tighter guarantee.   


\subsection{From Multi-Prover Non-Signaling Proofs to 2-Prover Non-Signaling Proofs}
\label{sec:proof:reduction} 

In the classical setting there is a well known reduction that converts any $k$-player game into a  $2$-player game.  Below we present a slight variant of it that will be useful in the non-signaling setting.  

Let $\mathcal{G}=(\mathcal{Q},\mathcal{A},V,\pi)$ be a $k$-player game. 
Consider the following $2$-player game, denoted by $\mathcal{T}(\mathcal{G})=(\mathcal{Q}^*,\mathcal{A}^*,V^*,\pi^*)$:

\begin{itemize}
    \item $\mathcal{Q}^*=(\mathcal{Q}^*_1,\mathcal{Q}^*_2)$, where $\mathcal{Q}^*_1=\mathcal{Q}$, $\mathcal{Q}^*_2=\{S,q_S\}_{S\subseteq[k],q_S\in\mathcal{Q}_S}$.
    \item $\mathcal{A}^*=(\mathcal{A}^*_1,\mathcal{A}^*_2)$, where $\mathcal{A}^*_1=\mathcal{A}$, $\mathcal{A}^*_2=\bigcup_{S\subseteq[k]} \mathcal{A}_S$.
    \item $\pi^*$ generates  $q\leftarrow \pi$ and generates a random subset $S\subseteq [k]$.  It outputs $(q,(S,q_S))$.
       \item $V^*((q, (S,q_S)), (a,a'_S)) $ accepts if and only if $V(q,a)$ accepts and  $a_i=a'_i$ for every $i\in S$. 
\end{itemize}

\begin{theorem}\label{thm:prover:reduction}
Let $c,d>0$ be the constants from Theorem~\ref{thm:main}.
Let $\mathcal{G}$ be a $k$-player game with non-signaling value less than~$2^{-dk^2}$.  Then the $2$-player game $\mathcal{T}(\mathcal{G})$ has non-signaling value at most $1-2^{-(c+1)k^2}$.
\end{theorem}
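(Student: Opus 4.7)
The plan is to argue the contrapositive, routing through Theorem~\ref{thm:main}. Assume $\mathcal{T}(\mathcal{G})$ has non-signaling value strictly greater than $1-2^{-(c+1)k^2}$, witnessed by a non-signaling strategy $p^*$; I will produce a sub-non-signaling strategy $p$ for $\mathcal{G}$ of value at least $1-2^{-ck^2}$. Theorem~\ref{thm:main} then yields a non-signaling strategy for $\mathcal{G}$ of value at least $2^{-dk^2}$, contradicting the hypothesis.

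The construction of $p$ follows the sketch of Section~\ref{sec:intro:overview:proof:reduction}. Given queries $q=(q_1,\ldots,q_k)$, run $p^*$ independently on each of the $2^k$ inputs $(q,(S,q_S))$ for $S\subseteq[k]$, obtaining samples $(a^{(S)},(a')^{(S)}_S)$. If $V^*$ accepts every execution (call this event $E$), output $a^{(S^*)}$ for a uniformly chosen $S^*\subseteq[k]$; otherwise output $\bot$. Writing $\epsilon:=2^{-(c+1)k^2}$, the hypothesis bounds the rejection probability of $p^*$ under $\pi^*$ by $\epsilon$; since $\pi^*$ samples $q\sim\pi$ together with a uniform $S\subseteq[k]$, this gives $\sum_{S\subseteq[k]}\E_{q\sim\pi}[\Pr[\text{execution }S\text{ rejects}\mid q]]\leq 2^k\epsilon$, and a union bound over the $2^k$ executions yields $\E_{q\sim\pi}[\Pr[E\mid q]]\geq 1-2^k\epsilon\geq 1-2^{-ck^2}$, where the last step uses $2^k\epsilon=2^{k-(c+1)k^2}\leq 2^{-ck^2}$ for every $k\geq 1$. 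On the event $E$, any $a^{(S^*)}$ satisfies $V(q,a^{(S^*)})=1$ because $V^*$ embeds $V$, so this is a valid lower bound on the value of $p$.

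The delicate step is verifying the sub-non-signaling property of $p$. For each $T\subseteq[k]$ and $q_T$, set $\Sim_{T,q_T}$ to be the marginal distribution of the second prover's answer in $p^*$ on any input of the form $(q',(T,q_T))$; by the non-signaling property of $p^*$ this is a well-defined distribution depending only on $(T,q_T)$. The $S^*=T$ contribution to $p_q(a_T=\alpha_T)$ is clean: acceptance of execution $T$ forces $a^{(T)}_T=(a')^{(T)}_T$, whose marginal is exactly $\Sim_{T,q_T}$, and independence of the $2^k$ executions drops the remaining acceptance constraints, yielding a $\Sim_{T,q_T}(\alpha_T)/2^k$ bound on that term. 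The main obstacle, and the principal technical step of this subsection, is controlling the contributions from $S^*\neq T$, for which $a^{(S^*)}_T$ can a priori depend on all of $q$; the plan here is to combine the $V^*$-imposed equalities $a^{(S^*)}_{S^*}=(a')^{(S^*)}_{S^*}$ with the non-signaling identities of $p^*$, leveraging independence of the $2^k$ runs, so that summing over $S^*$ telescopes into a bound depending only on $q_T$. Once $p$ is shown sub-non-signaling, feeding it into Theorem~\ref{thm:main} produces the non-signaling strategy for $\mathcal{G}$ of value at least $2^{-dk^2}$ and completes the contradiction.
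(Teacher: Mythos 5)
Your overall strategy---argue the contrapositive, convert a near-perfect non-signaling strategy for $\mathcal{T}(\mathcal{G})$ into a sub-non-signaling strategy for $\mathcal{G}$ of value at least $1-2^{-ck^2}$, and invoke Theorem~\ref{thm:main}---is the same as the paper's, and your value bound $1-2^k\epsilon\ge 1-2^{-ck^2}$ is correct. The gap is exactly where you flag it: the sub-non-signaling property is left as a ``plan'' rather than a proof, and the plan as described cannot be carried through, because running the $2^k$ executions \emph{independently} is the wrong move.

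With independent executions the quantity to control is
$$\Pr[p_q|_T=\alpha_T]=\frac{1}{2^k}\sum_{S^*\subseteq[k]}\Pr\bigl[\text{all $2^k$ runs accept, } a^{(S^*)}_T=\alpha_T\bigr].$$
Your $S^*=T$ term is controlled exactly as you say. But for $S^*\ne T$, acceptance of run $T$ is an event in a different, independent execution and imposes no constraint on $a^{(S^*)}$; after dropping the other tests the only available bound on that term is $\Pr[a^{(S^*)}_T=\alpha_T]$, which is the $T$-restriction of the first-prover marginal $\Sim_q$ and depends on \emph{all} of $q$, not on $q_T$ alone. There is no reason this should be dominated by $\Sim_{T,q_T}(\alpha_T)$: take $\alpha_T$ in the support of $\Sim_q|_T$ but outside the support of $\Sim_{T,q_T}$, and the $S^*=T$ term vanishes while the $S^*\neq T$ terms do not. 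So independence of the runs is not a resource to ``leverage''; it is the source of the obstruction, and the proposed telescoping has nothing to telescope with.

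The missing idea is to \emph{couple} the $2^k$ runs. Because $\{p_{q,(S,q_S)}\}$ is non-signaling, the first-prover marginal on input $(q,(S,q_S))$ is a single distribution $\Sim_q$ that does not depend on $S$. Draw one answer $a\leftarrow\Sim_q$, and then, for each $S$ separately, sample $a'^{(S)}_S$ from the conditional $p_{q,(S,q_S)}(\,\cdot\mid a)$; each pair $(a,a'^{(S)}_S)$ still has the correct joint law $p_{q,(S,q_S)}$. Under this coupling the first-prover answer is literally the same $a$ across all $2^k$ runs, so the choice of $S^*$ in the output is immaterial, and
$$\Pr[p_q|_T=\alpha_T]\le\Pr\bigl[a_T=a'^{(T)}_T=\alpha_T\bigr]\le\Pr\bigl[a'^{(T)}_T=\alpha_T\bigr]=\Sim_{T,q_T}(\alpha_T),$$
which is the sub-non-signaling condition against $\{\Sim_{T,q_T}\}$. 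This is what makes the paper's key claim (that $\Pr[p^S_q=a]$ does not depend on the fixed set $S$) hold: under the coupling it is immediate, whereas under the independent sampling you chose it is false. Without this coupling your argument does not close.
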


Before we prove Theorem~\ref{thm:prover:reduction}, we argue that it implies Theorem~\ref{thm:NS}. 
To see this, fix any $L$ and $(P_1,\ldots,P_k)$ as in the theorem statement.  Let $c,d$ be the constant from Theorem~\ref{thm:prover:reduction}.  We prove that   Theorem~\ref{thm:NS} holds with the constants $2c,d$. Suppose for contradiction that for every large enough $n\in \mathbb{N}$ and every $x\in \{0,1\}^n\setminus L$ it holds that $\mathcal{V}_\NS(\mathcal{G}_x)<2^{-dk^2}$, then by Theorem ~\ref{thm:prover:reduction}, $\mathcal{V}_\NS(\mathcal{T}(\mathcal{G}_x))\leq 1-2^{-(c+1)k^2}$, whereas for $x\in L$,  $\mathcal{V}_\NS(\mathcal{T}((\mathcal{G}_x))\geq 1-2^{-2ck^2}$.
By the work of Ito~\cite{Ito10}, this implies that $L\in\SPACE(\poly(n,2^k))$, contradicting our assumption.
\paragraph{Proof of Theorem~\ref{thm:prover:reduction}.}
Let $\mathcal{G}$ be a $k$-player game such that its non-signaling value is less than~$2^{-dk^2}$.
Suppose for the sake of contradiction that the non-signaling value of the 2-player game $\mathcal{T}(\mathcal{G})$ is  $1-\epsilon$, for $\epsilon< 2^{-(c+1)k^2}$.
Let $\{p_{q,(S,q_S)}\}$ be a non-signaling strategy that convinces the verifier $V^*$ in the game $\mathcal{T}({\mathcal{G}})$ to accept with probability $1-\epsilon$.

Consider the sub-non-signaling strategy $\{p_q\}$ for the $k$-player game $\mathcal{G}$, where $P_q$ samples answers as follows:  
\begin{enumerate}
    \item For every $S\subseteq [k]$, sample $(a,a'_S)\leftarrow p_{q,(S,q_S)}$.
    \item If there exists $S\subseteq [k]$ such that the above answers are rejecting (i.e., 
$V^*((q,(S,q_S)),(a,a'_S))=0$) then output $\bot$.
\item Otherwise, choose a random $S\subseteq [k]$ and output $a$ corresponding to this $S$.
\end{enumerate}   

\begin{claim}
$\{p_q\}$ is a sub-non-signaling strategy for the $k$-player game~$\mathcal{G}$.
\end{claim}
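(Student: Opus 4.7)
My plan is to define, for each $S\subseteq[k]$ and each $q_S\in\mathcal{Q}_S$, the simulator $\Sim_{S,q_S}$ to be the marginal distribution of the second prover's answer $a'_S$ in the $2$-player non-signaling strategy on a query of the form $(q',(S,q_S))$; formally, $\Sim_{S,q_S}(a_S)=\sum_{a\in\mathcal{A}}p_{q',(S,q_S)}(a,a'_S=a_S)$. By the $2$-player non-signaling property of $\{p_{q_1^*,q_2^*}\}$, this marginal depends only on $(S,q_S)$ and not on the choice of $q'$ extending $q_S$, so $\Sim_{S,q_S}$ is a well-defined probability distribution on $\mathcal{A}_S$.

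The main step is to verify the sub-non-signaling inequality $p_q(a_S)\le\Sim_{S,q_S}(a_S)$ for every $q\in\mathcal{Q}$ extending $q_S$ and every $a_S\in\mathcal{A}_S$. I would expand $p_q(a_S)$ straight from the construction: writing $(a^{(S')},a'^{(S')})\leftarrow p_{q,(S',q_{S'})}$ for the $2^k$ independent samples and $S^*\subseteq[k]$ for the uniformly random index chosen in step~3, we have $p_q(a_S)=\Pr[\text{all $2^k$ samples accept, and }a^{(S^*)}|_S=a_S]$. The central observation is that the acceptance of the $S$-th sample (enforced by $V^*$) forces $a^{(S)}|_S=a'^{(S)}$, so the $S$-restriction of the first prover's answer in that execution coincides exactly with the second prover's answer, whose marginal distribution is precisely $\Sim_{S,q_S}$.

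The main technical hurdle I anticipate is handling the random index $S^*\neq S$: in those slices the output $a^{(S^*)}|_S$ comes from an independent sample and is not directly pinned down by the acceptance of the $S$-th execution. To deal with this, I plan to split the probability as $p_q(a_S)=2^{-k}\sum_{S^*\subseteq[k]}\Pr[\text{all accept}\wedge a^{(S^*)}|_S=a_S]$ and, for each $S^*$, use the independence of the $2^k$ samples together with the joint acceptance/consistency constraints of $V^*$ to replace the first-prover restriction on $S^*\cap S$ by the corresponding second-prover answer via the identity $a^{(S^*)}|_{S^*}=a'^{(S^*)}$ that holds on acceptance. Combined with the $2$-player non-signaling marginal structure (which makes the relevant per-subset marginals depend only on the second prover's query), these steps should let me bound each $S^*$ contribution by a quantity compatible with $\Sim_{S,q_S}(a_S)$, so that averaging over $S^*$ yields the desired bound.
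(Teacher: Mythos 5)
Your decomposition $p_q(a_S)=2^{-k}\sum_{S^*}\Pr[\text{all accept}\wedge a^{(S^*)}|_S=a_S]$, with the $2^k$ runs drawn independently, cannot be closed by the tools you describe. Acceptance of run $S^*$ pins down only $a^{(S^*)}|_{S^*}=a'^{(S^*)}$, so at best you learn about coordinates of $a^{(S^*)}$ in $S\cap S^*$, and the marginal this relates you to is $\Sim_{S^*,q_{S^*}}$, not $\Sim_{S,q_S}$. Two-prover non-signaling gives you no relation between the second prover's marginals on different subsets, nor between $\Sim_q|_S$ and $\Sim_{S,q_S}$, so the per-$S^*$ terms with $S^*\ne S$ have no reason to be bounded by $\Sim_{S,q_S}(a_S)$. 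In fact, under the independent-runs reading the claim is simply false: take $k=2$, trivial queries, $V\equiv1$, first prover uniform on $\{00,11\}$, second prover always outputting zeros, with $a$ and $a'_S$ independent in every run. Then $\Sim_{\{1\},q_1}(1)=0$, but the run $S^*=\emptyset$ contributes, and one computes $p_q(a_1{=}1)=\tfrac14\cdot\tfrac12\cdot(\tfrac12)^3=\tfrac{1}{64}>0$, violating the sub-non-signaling inequality.

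The paper's proof proceeds differently: it introduces the fixed-choice variants $p^S_q$ (always output the first-prover answer of the run indexed $S$) and reduces to showing $\Pr[p^S_q=a]=\Pr[p^{S'}_q=a]$ for all $S,S'$. Granting that identity, $p_q$ has the same distribution as each $p^S_q$, and for $p^S_q$ the bound $\Pr[p^S_q|_S=a_S]\le\Sim_{S,q_S}(a_S)$ is immediate because acceptance of run $S$ forces $a^{(S)}|_S=a'^{(S)}$. The identity $\Pr[p^S_q=a]=\Pr[p^{S'}_q=a]$ is precisely what your independent-runs reading breaks (the counterexample above gives $\Pr[p^{\emptyset}_q=11]=\tfrac{1}{16}\ne 0=\Pr[p^{\{1\}}_q=11]$); it holds once the $2^k$ runs are coupled to share a single draw of the first-prover answer $a$, with each $a'_S$ then sampled from the conditional of $p_{q,(S,q_S)}$ given $a$. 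Two-prover non-signaling (the first prover's marginal depends only on $q$) is exactly what makes this coupling consistent with every run's marginal. Your write-up needs to make that coupling explicit — treating the runs as independent leaves a genuine gap, not just a hurdle.
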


\begin{proof}
By definition, the fact that $\{p_{q,(S,q_S)}\}$ is a non-signaling distribution for the 2-player game $\mathcal{T}(\mathcal{G})$, implies that there is a family of distributions $\{\Sim_{q}\}\cup \{\Sim_{S,q_S}\}\cup\{\Sim_{q,(S,q_S)}\}$ such that for every $q\in\mathcal{Q}$, for every $S\subseteq [k]$ and every $a_S\in\mathcal{A}_S$,
$$
\Pr[p_{q,(S,q_S)}|_{(S,q_S)}=a_S]=\Pr[\Sim_{S,q_S}=a_S].
$$
We prove that $\{p_{q}\}$ is sub-non-signaling with respect to $\{\Sim_{S,q_S}\}$.  Namely, we prove that for every $q\in\mathcal{Q}$, every $S\subseteq[k]$, and every $a_S\in\mathcal{A}_S$,
\begin{equation}\label{eqn:MIP-2IP:subNS}
\Pr[p_{q}|_S=a_S]\leq \Pr[\Sim_{S,q_S}=a_S].
\end{equation}
We note that Equation~\eqref{eqn:MIP-2IP:subNS} would clearly hold if we chose $a$ corresponding to the specific set~$S$ in the equation.  However, recall that $p_q$ chooses $a$ corresponding to a {\em random} subset $S'\subseteq [k]$.

Thus, we define for every (fixed) $S\subseteq [k]$ a strategy $\{p^S_q\}$ which is identical to $\{p_q\}$, except that if it doesn't abort then it always outputs~$a$ corresponding to the fixed subset~$S$.
Therefore, to conclude the proof that $\{p_q\}$ is sub-non-signaling it suffices to prove that for every $q\in\mathcal{Q}$, every $a\in\mathcal{A}$, and every subsets $S,S'\subseteq[k]$, it holds that
$$
\Pr[p^S_q=a]=\Pr[p^{S'}_q=a], 
$$
which follows directly from the the fact that $\{p_{q,(S,q_S)}\}$ is non-signaling (together with the definition of $\{p^S_q\}$).  
\end{proof}

Note that the sub-non-signaling strategy $\{p_q\}$ is rejected with probability at most $2^k\cdot \epsilon$ (by the union bound). 
This in particular implies that the  sub-non-signaling value of $\mathcal{G}$ is at least  
$$1-2^k\cdot \epsilon\geq 1-2^k\cdot 2^{-(c+1)k^2}\geq 1-2^{-ck^2},$$ 
which by Theorem~\ref{thm:main} implies that the non-signaling value of $\mathcal{G}$ is at least $2^{-dk^2}$, contradicting our assumption.

\qed

\iftrue{
\subsection{Approximating the Sub-non-signaling Value of $k$-Player Game via a Space Efficient Algorithm}\label{sec:subNS}

\begin{theorem}\label{thm:LP'}
 There exists an algorithm $\mathcal{B}$ and a polynomial~$p$ such that for any  $k$-player game $\mathcal{G}=(\mathcal{Q},\mathcal{A},V,\pi)$, and any $\epsilon>0$, it holds that $\mathcal{B}(\mathcal{G},\epsilon)$ runs in space $p(\log(|\mathcal{Q},\mathcal{A}|),1/\epsilon, 2^k) $ and outputs a value $v$ such that $|v-\mathcal{V}_\subNS(\mathcal{G})|\leq \epsilon$.
\end{theorem}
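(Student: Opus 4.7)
The plan is to express $\mathcal{V}_{\subNS}(\mathcal{G})$ as the optimum of an explicit linear program of size $\poly(|\mathcal{Q}|,|\mathcal{A}|,2^k)$, reduce that LP to a mixed packing and covering instance of the form demanded by Theorem~\ref{packing}, and approximate its optimum by binary searching on a target value $v^*$ and invoking Young's algorithm for each guess.

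\textbf{LP formulation.} I would take as variables $p_q(a)$ for $q\in\mathcal{Q}, a\in\mathcal{A}$ (the strategy) and $M_{S,q_S,a_S}$ for $S\subseteq[k], q_S\in\mathcal{Q}_S, a_S\in\mathcal{A}_S$ (the simulator marginals $\Sim_{S,q_S}(a_S)$), and consider
\[
\max\ \sum_{q,a}\pi(q)\,V(q,a)\,p_q(a)
\]
subject to $p,M\ge 0$, the domination $\sum_{a': a'|_S=a_S} p_q(a')\le M_{S,q_S,a_S}$ for every $q,S,a_S$, and the mass bound $\sum_{a_S} M_{S,q_S,a_S}\le 1$ for every $S,q_S$. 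By Definition~\ref{def:subNS}, and because any feasible $M$ can be enlarged to satisfy equality $=1$ without violating the domination inequalities, this LP has optimum $\mathcal{V}_{\subNS}(\mathcal{G})$. Its encoding length is $\poly(|\mathcal{Q}|,|\mathcal{A}|,2^k)$, so its logarithm is $\poly(\log|\mathcal{Q}|,\log|\mathcal{A}|,k)$, which is what Young's algorithm needs.

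\textbf{Reduction to mixed packing and covering.} The mass bound $\sum_{a_S} M_{S,q_S,a_S}\le 1$ and the target covering $\sum_{q,a}\pi(q)V(q,a)p_q(a)\ge v^*$ both have non-negative coefficients, but the domination $p_q(a_S)\le M_{S,q_S,a_S}$ mixes signs because $M$ appears with coefficient $-1$ when rewritten as $\le 0$. To avoid this, I plan the complementary substitution $M' := 1-M$, which is valid because any feasible $M$ lies in $[0,1]$. Under this substitution the domination becomes $\sum_{a': a'|_S=a_S} p_q(a')+M'_{S,q_S,a_S}\le 1$, a packing inequality with non-negative coefficients, and the mass bound becomes $\sum_{a_S} M'_{S,q_S,a_S}\ge |\mathcal{A}_S|-1$, a covering inequality with non-negative coefficients; the additional packing constraint $M'_{S,q_S,a_S}\le 1$ enforces $M\ge 0$. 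The resulting instance therefore fits the form $(A,b,C,d)$ required by Theorem~\ref{packing} and has encoding length of the same order as the original LP.

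\textbf{Binary search and space accounting.} I binary search over $v^*\in\{0,\epsilon,2\epsilon,\ldots,1\}$ and, for each guess, invoke Theorem~\ref{packing} with tolerance $\epsilon'$ on the resulting instance. Infeasibility certifies that the true LP value is $<v^*$; feasibility returns $(p,M')$ satisfying $Ax\le(1+\epsilon')b,\ Cx\ge d$, from which I recover a genuine sub-non-signaling strategy by rescaling $\tilde{p}=p/(1+\epsilon')$ and $\tilde{M}=1-M'/(1+\epsilon')$, paying only an $O(\epsilon)$ additive loss in the objective for a suitable choice of $\epsilon'$. The algorithm outputs the largest $v^*$ that passes; each Young invocation uses space $\poly(\log(\mathrm{size}),1/\epsilon')=\poly(\log|\mathcal{Q}|,\log|\mathcal{A}|,k,1/\epsilon)$, and the binary search adds only a $\log(1/\epsilon)$ factor, giving the total space $\poly(\log|\mathcal{Q}|,\log|\mathcal{A}|,2^k,1/\epsilon)$ claimed by the theorem.

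\textbf{Main obstacle.} The delicate step is the rescaling/recovery: after the substitution $M'=1-M$ the covering demand is $|\mathcal{A}_S|-1$, which can be exponential in $k$, so a naive application of the multiplicative guarantee in Theorem~\ref{packing} could translate the $(1+\epsilon')$ packing relaxation into an additive loss scaling with $|\mathcal{A}_S|$. The hardest part of the proof will be choosing $\epsilon'$ and arguing the back-substitution so that the final additive loss in the objective is $O(\epsilon)$ without requiring $\epsilon'$ to depend polynomially on $|\mathcal{A}_S|$; I expect to do this by exploiting the sparse structure of the covering system (each $M'$ variable participates in exactly one mass-bound covering row and a bounded set of domination packing rows) and by analyzing the loss on the particular rescaled strategy $(\tilde p,\tilde M)$ rather than on the abstract LP.
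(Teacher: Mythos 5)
Your primal LP is the one the paper starts from, and your binary search plus Young's algorithm endgame is also the paper's. But the middle step --- turning the LP into a mixed packing and covering instance --- is where your route diverges, and the obstacle you flag at the end is a genuine gap, not a loose end. You substitute $M' = 1 - M$ directly in the primal. The domination constraints become packing constraints with right-hand side $1$, which is fine, but the mass constraint becomes $\sum_{a_S} M'_{S,q_S,a_S} \ge |\mathcal{A}_S| - 1$, and $|\mathcal{A}_S|$ is exponential in the communication complexity. Young's guarantee (Theorem~\ref{packing}) keeps the covering side exact and relaxes packing by $(1+\epsilon')$; after back-substituting $\tilde M = 1 - M'/(1+\epsilon')$ the mass bound moves to roughly $1 + \epsilon'(|\mathcal{A}_S|-1)$, so to recover a legitimate sub-non-signaling strategy you must shrink $p$ by about that factor. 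Keeping the resulting additive objective loss $O(\epsilon)$ forces $\epsilon' \lesssim \epsilon/|\mathcal{A}_S|$, at which point $1/\epsilon'$ is exponential and Young's $\poly(\log(\mathrm{size}),1/\epsilon')$ space bound is no longer what the theorem requires. The sparsity you appeal to does not help: the offending constant $|\mathcal{A}_S|-1$ lives in the covering right-hand-side vector $d$, and the structure of the constraint matrix has no bearing on how far back-substitution shifts the mass bound.

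The paper sidesteps this by dualizing first. In the dual LP the complementary substitution $\bar y_S(q,a_S) = 1 - y_S(q,a_S)$ is applied to variables whose constraints have only small constants: the packing right-hand sides become $2^k - 1 - V(q,a)$ and $1$, and the covering right-hand sides are $\sum_{q^*:q^*_S=q_S}\pi(q^*) \le 1$. Everything in the instance is therefore bounded by $2^k$, so Young's $(1+\delta)$ relaxation translates into at most a $2^k\delta$ additive shift of the objective, controlled by $\delta = (\epsilon/2)/2^k$ --- a choice that keeps $1/\delta = \poly(2^k,1/\epsilon)$. That dualization is the missing ingredient in your primal-side argument; without some analogous way to cap the covering demands by $\poly(2^k)$, the recovery step in your plan does not go through.
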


\begin{corollary}\label{cor:LP} Fix any language $L$ and any $k$-prover one-round proof system $(P_1,\ldots,P_k,V)$ for~$L$.  For every $x$ consider the game $\mathcal{G}_x=(\mathcal{Q},\mathcal{A},V,\pi_x)$, where $\mathcal{Q}=\mathcal{Q}_1\times\ldots\times\mathcal{Q}_k$ and where $\mathcal{Q}_i$ is the set of possible queries sent by $V$ to prover $P_i$, $\mathcal{A}=\mathcal{A}_1\times,\ldots,\mathcal{A}_k$ where $\mathcal{A}_i$ is the set of possible answers sent by $P_i$, and  $\pi_x$ is the distribution of queries sent by $V(x)$.

Denote by~$c$ the completeness of this proof system. 
If there exists a constant $d\in\mathbb{N}$, such that for every large enough $n\in\mathbb{N}$, and every $x\in\{0,1\}^n\setminus L$, $\mathcal{V}_{\subNS}(\mathcal{G}_x)\leq c - \frac{1}{n^d}$, then $L \in \SPACE\left(\poly(n,2^k)\right)$.\footnote{This is assuming the communication complexity is $\poly(n)$.  In the general case, where the communication complexity is~$\cc$, we get that $L \in \SPACE\left(\poly(n,\cc,2^k)\right)$}
\end{corollary}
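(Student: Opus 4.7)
\textbf{Proof proposal for Corollary~\ref{cor:LP}.}
The plan is a direct reduction to Theorem~\ref{thm:LP'}. Fix the constant $d$ from the hypothesis. Given an input $x\in\{0,1\}^n$, I will first note that the game $\mathcal{G}_x=(\mathcal{Q},\mathcal{A},V,\pi_x)$ is specified implicitly in the usual sense: $V$ is polynomial-time computable, $\pi_x$ is polynomial-time sampleable from~$x$, and (assuming communication complexity $\poly(n)$) we have $\log|\mathcal{Q}|+\log|\mathcal{A}|=\poly(n)$. The space-bounded algorithm $\mathcal{B}$ from Theorem~\ref{thm:LP'} only needs oracle-like access to these pieces, which can be simulated in $\poly(n)$ space on the fly.

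Next, I will run $\mathcal{B}(\mathcal{G}_x,\epsilon)$ with $\epsilon=\tfrac{1}{3n^{d}}$ to obtain a value $v$ satisfying $|v-\mathcal{V}_{\subNS}(\mathcal{G}_x)|\le \epsilon$. By Theorem~\ref{thm:LP'}, this uses space $p(\log|\mathcal{Q},\mathcal{A}|,1/\epsilon,2^{k})=\poly(n,2^{k})$. The decision rule is: accept if $v\ge c-\tfrac{2}{3n^{d}}$, and reject otherwise. To see correctness, observe that whenever $x\in L$, the honest $k$-prover strategy is local and hence non-signaling, so it is in particular sub-non-signaling; this forces $\mathcal{V}_{\subNS}(\mathcal{G}_x)\ge c$ and therefore $v\ge c-\epsilon=c-\tfrac{1}{3n^{d}}\ge c-\tfrac{2}{3n^{d}}$. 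When $x\notin L$, the hypothesis gives $\mathcal{V}_{\subNS}(\mathcal{G}_x)\le c-\tfrac{1}{n^{d}}$, so $v\le c-\tfrac{1}{n^{d}}+\epsilon=c-\tfrac{2}{3n^{d}}$, with strict inequality for all sufficiently large~$n$ (finitely many small inputs can be hard-coded into the machine). This places $L$ in $\SPACE(\poly(n,2^{k}))$, as claimed; in the general case where the communication complexity is $\cc$, the same argument gives $\SPACE(\poly(n,\cc,2^{k}))$.

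There is no real obstacle beyond bookkeeping: the only substantive ingredient is Theorem~\ref{thm:LP'}, which is assumed. The mildly delicate point is simply matching the approximation parameter $\epsilon$ to the soundness-completeness gap $1/n^{d}$ so that both the soundness side (using the hypothesis) and the completeness side (using the trivial fact that honest strategies are sub-non-signaling) fall on opposite sides of the threshold with room to spare, and verifying that $1/\epsilon$ contributes only a $\poly(n)$ factor to the space bound.
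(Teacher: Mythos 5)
Your approach is exactly the paper's: fix $\epsilon$ polynomially related to the gap $1/n^d$, run the approximation algorithm $\mathcal{B}$ from Theorem~\ref{thm:LP'}, threshold the output, use that honest (local) strategies are sub-non-signaling for the completeness side, and hard-code finitely many small inputs. The only slip is arithmetic: with $\epsilon=\tfrac{1}{3n^d}$ and the acceptance rule ``accept iff $v\ge c-\tfrac{2}{3n^d}$'', a NO-instance can have $v$ exactly equal to $c-\tfrac{2}{3n^d}$ (the guarantee $|v-\mathcal{V}_{\subNS}|\le\epsilon$ is non-strict, and hard-coding small $n$ does not help), so strictness is not justified as written; choose a smaller $\epsilon$ (the paper uses $\tfrac{1}{10n^d}$) or place the threshold strictly between the two bounds, e.g.\ accept iff $v\ge c-\tfrac{1}{2n^d}$.
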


\paragraph{Proof of Corollary~\ref{cor:LP}.}

Fix any language $L$ and any $k$-prover one-round proof system $(P_1,\ldots,P_k,V)$ for~$L$ with completeness~$c$. For every $x\in\{0,1\}^*$, consider the corresponding game~$\mathcal{G}_x$ as defined in the corollary statement.  Suppose that there exists a constant $d\in\mathbb{N}$, such that for every large enough $n\in\mathbb{N}$, and every $x\in\{0,1\}^n\setminus L$, $\mathcal{V}_{\subNS}(\mathcal{G}_x)\leq c - \frac{1}{n^{d}}$.

Fix $\epsilon=\frac{1}{10\cdot n^d}$.  From Theorem \ref{thm:LP'} we know that there exists an algorithm $\mathcal{B}$,  that given any  $k$-prover game $\mathcal{G}=(\mathcal{Q},\mathcal{A},V,\pi)$, and any parameter $\epsilon$, approximates the value of~$\mathcal{G}$ up to an additive $\epsilon$ error.  Importantly $\mathcal{B}$ is an algorithm with space complexity  $\poly(\log(|(\mathcal{Q},\mathcal{A})|),1/\epsilon, 2^k) $. 

Given $x\in\{0,1\}^*$, we determine if $x\in L$ by running $\mathcal{B}(\mathcal{G}_x,1/\epsilon)$, and if the value is at least $c-\epsilon$ then we conclude that $x\in L$, and otherwise conclude that $x\notin L$. 

Note that $1/\epsilon$ is a polynomial in $n$ since $\epsilon = \frac{1}{10n^d}$. In addition, the size of $\mathcal{Q},\mathcal{A}$ is exponential in $n$, which implies that the space complexity of~$\mathcal{B}(\mathcal{G}_x,1/\epsilon)$ is $\poly(n,2^k)$, as desired.\footnote{More generally, if  $(P_1,\ldots,P_k,V)$ has communication complexity $\cc$ then $|(\mathcal{Q},\mathcal{A})|\leq 2^\cc$, in which case the space complexity of~$\mathcal{B}(\mathcal{G}_x,1/\epsilon)$ is $\poly(n,\cc,2^k)$, as desired. } 
Finally, we note that there may be 
a finite number of $n$'s for which we do not have the guarantee that $\mathcal{V}_{\subNS}(\mathcal{G}_x)\leq c - \frac{1}{n^d}$.  For these $n$'s, we can hard-wire the answers for whether $x \in L$. 
\qed

\medskip
We next prove Theorem~\ref{thm:LP'}.  We use the approach of \cite{Ito10} which proves that the non-signaling value of a two-player, one-round game can be approximated in PSPACE. The reason that \cite{Ito10} gives a result for non-signaling games is because using the linear program Ito shows that the non-signaling value and the sub-non-signaling value are equal for two-player games, which did not extend to games with more than two players.  

\paragraph{Proof of Theorem~\ref{thm:LP'}.}
 


Fix any game $\mathcal{G}=(\mathcal{A},\mathcal{Q},V,\pi)$.
The sub-non-signaling value of $\mathcal{G}$ is given by the following linear program (where the variables are ${p_q(a)}$ and $\Sim_{S,q_S}(a_{S})$, for every $q\in\mathcal {Q}$, $a\in\mathcal{A}$, and nonempty $S \subseteq[k]$)
\begin{equation}
\begin{array}{lll}
 \text{ Maximize } & \sum_{q\in\mathcal{Q}} \pi(q) \sum_{a\in\mathcal{A}} p_q(a) V(q,a) \\ \\

\text{ Subject to } & \sum_{a^*\in\mathcal{A}:a^*_S=a_S} p_{q}(a^*) ) \leq \Sim_{S,q_{S}}(a_{S})  &  \forall S \subseteq [k], \forall a_{S} \in \mathcal{A}_{S}, \forall q \in \mathcal{Q}, \\ \\

& \sum_{a_{S}\in\mathcal{A}_S} \Sim_{S,q_{S}}(a_{S} ) = 1  & \forall S \subseteq [k], \forall q_{S} \in \mathcal{Q}_{S} \\ \\

& p_{ q }(a) \geq 0  & \forall a \in \mathcal{A}, \forall q \in \mathcal{Q}
\end{array}
\end{equation}

In what follows, we replace $p_{q}(a)$ with $x_{q}(a) = \pi({q})p_{q}(a)$ to simplify the expression of the objective value. This gives us the linear program 
\begin{equation}
\begin{array}{lll}
 \text{ Maximize } & \sum_{q\in\mathcal{Q}} \sum_{a\in\mathcal{A}} x_{q}(a) V(q,a) \\ \\

\text{ Subject to } & \sum_{a^*\in\mathcal{A}:a^*_S=a_S} x_{q}(a^*) \leq \pi(q) \Sim_{S,q_{S}}(a_{S}  ) & \forall S \subseteq [k], \forall a_{S} \in \mathcal{A}_{S} \forall q, \in \mathcal{Q} \\ \\
& \sum_{a_{S}} \Sim_{S, q_{S}}(a_{S} ) = 1 & \forall S \subseteq [k], \forall q_{S} \in \mathcal{Q}_{S} \\ \\

& x_{q}(a) \geq 0 & \forall a \in \mathcal{A}, \forall q \in \mathcal{Q}
\end{array}
\end{equation}
Observe that the constraints in this linear program above imply that $\Sim_{S,q_S}(a_S)\geq 0$ for every $S\subseteq[k]$, every $q_S\in\mathcal{Q}_S$ and every $a_S\in\mathcal{A}_S$. Namely, these constraints can be added without changing the value of the linear program.   This implies (by  Definition~\ref{def:dual}), that the dual to this linear program can be written as

\begin{equation}
\begin{array}{lll}
\text{ Minimize } & \sum_{S \subseteq [k]} \sum_{q_{S} \in \mathcal{Q}_{S}} z_{S}(q_{S})  & \\ \\
\text{ Subject to } & \sum_{S \subseteq [k]} y_{S}(q,a_{S}) \geq V(q, a) & \forall q \in \mathcal{Q}, \forall a \in \mathcal{A} \\ \\
& z_{S}(q_{S}) \geq \sum_{q^*\in\mathcal{Q}: q^*_S=q_S}\pi(q^*) y_{S}(q^*,a_{S}) & \forall S \subseteq [k], \forall q_{S} \in \mathcal{Q}_{S} ,\forall a_{S} \in \mathcal{A}_{S} \\ \\
& y_{S}(q,a_{S}) \geq 0  & \forall S \subseteq [k] ,\forall q\in \mathcal{Q}, \forall a_{S} \in \mathcal{A}_{S} 
\end{array}
\end{equation}
Observe that the constraints in this linear program imply that $z_S(q_S) \geq 0$ for every $S\subseteq[k]$ and every $q_S\in\mathcal{Q}_S$, and thus these constraints can be added without changing the value. 

Next, transform this linear program into a linear program with non-negative coefficients. To do so, observe that the optimal solution to the above linear program satisfies that $y_{S}(q,a_S)\leq 1$ for every $S\subseteq[k]$, every $q\in\mathcal{Q}$ and every $a_S\in\mathcal{A}_S$.  This follows from the fact that $V(a,q)\leq 1$ for every $a\in\mathcal{A}$ and every $q\in\mathcal{Q}$.  Therefore, we can replace $y_{S}(q,a_S)$ by $\overline{y}_{S}(q,a_S) = 1 - y_{S}(q,a_S)$, without changing the value of the linear program. This gives us the linear program 
\begin{equation}
\begin{array}{lll}
\text{ Minimize } & \sum_{S \subseteq [k]} \sum_{q_{S} \in \mathcal{Q}_{S}} z_{S}(q_{S})  & \\ \\
\text{ Subject to } & \sum_{S \subseteq [k]} \overline{y}_{S}(q,a_{S}) \leq 2^k -1 - V( q, a )  & \forall q \in \mathcal{Q}, \forall a \in \mathcal{A} \\ \\
& z_{S}(q_{S}) + \sum_{q^*\in\mathcal{Q}:q^*_S=q_S} \pi(q^*) \overline{y}_{S}(q^*,a_{S}) \geq  \sum_{q^*\in\mathcal{Q}:q^*_S=q_S} \pi(q^*)  & \forall S \subseteq [k] ,\forall q_{S} \in \mathcal{Q}_{S} ,\forall a_{S} \in \mathcal{A}_{S} \\ \\
& \overline{y}_{S}(q,a_{S}) \leq 1 & \forall S \subseteq [k] ,\forall q \in \mathcal{Q} ,\forall a_{S} \in \mathcal{A}_{S} \\ \\
& \overline{y}_{S}(q,a_{S}) \geq 0 & \forall S \subseteq [k], \forall q \in \mathcal{Q}, \forall a_{S} \in \mathcal{A}_{S} \\ \\
& z_{S}(q_{S}) \geq 0 & \forall S \subseteq [k] , \forall q_{S} \in \mathcal{Q}_{S}
\end{array}
\end{equation}
Note that all of the coefficients of this linear program are non-negative. Also, because the parallel program takes parallel time polylogarithmic in the size of the linear program, it is not an issue that the linear program has size exponential in the input length.

Recall that our goal is to construct a  $\poly(\log(|(\mathcal{Q},\mathcal{A})|),1/\epsilon,2^k)$-space algorithm for computing~$v$ such that 
$$|v-\mathcal{V}_\subNS(\mathcal{G})|\leq \epsilon.$$ 
To this end, we add to our linear program a  constraint of the form
$$\sum_{S \subseteq [k]} \sum_{q_{S} \in \mathcal{Q}_{S}} z_{S}(q_{S}) \leq v'$$ 
(for some value~$v'$), and convert this 
(restricted) linear program into a mixed packing and covering program, with the guarantee that for $\delta = \frac{(\epsilon/2)}{2^k}$, a $(1+\delta)$-approximate solution to the mixed packing and covering program, implies a solution to the (restricted) linear program,  which is $\epsilon/2$-close an optimal solution. 
We can then use binary search to find an $\epsilon$-approximation to the original linear program.


To turn this restricted linear program into a mixed packing and covering problem, we use all of the constraints above and include the constraint $\sum_{S \subseteq [k]} \sum_{q_{S} \in \mathcal{Q}_{S}} z_{S}(q_{S}) \leq v'$.  

A $(1 + \delta)$-approximate solution to a mixed packing and covering problem is (by definition) a solution to the problem where all of the inequalities of the form $ a_ix_i \leq c$ are relaxed to $a_ix_i \leq c(1+\delta)$. In our case, it means that the above $\leq$ inequalities are replaced with $$\bar{y}_{S}(q,a_{S}) \leq 1 + \delta$$ 
and 
$$\sum_{S \subseteq [k]} \overline{y}_{S}(q,a_{S}) \leq (2^k - 1 - V(q,a)) (1 + \delta).$$ 
We next argue that a $(1+\delta)$-approximate solution to our mixed packing and covering problem implies a solution to our (restricted) linear program with value at most $v' + \epsilon$. 

To this end, suppose there these exists such a solution to the mixed packing and covering problem, and denote it by $$\left(\{y_{S}(q,a_S)\}_{S\in[k],q\in\mathcal{Q},a_S\in\mathcal{A}_S},\{z_{S}(q_S)\}_{S\in[k],q_S\in\mathcal{Q}_S}\right).$$ 
Consider the solution 
$$\left(\{y'_{S}(q,a_S)\}_{S\in[k],q\in\mathcal{Q},a_S\in\mathcal{A}_S},\{z'_{S}(q_S)\}_{S\in[k],q_S\in\mathcal{Q}_S}\right).$$ 
where 
$$y'_{S}(q,a_{S}) = \frac{1}{1 + \delta} y_{S}(q,a_{S})$$ 
and 
$$z'_{S}(q_S) = z_{S}(q_S) + \delta \sum_{q^*\in\mathcal{Q}:q^*_S=q_S}\pi(q^*).$$ 
It is easy to see that this solution satisfies the constraints of the (restricted) linear program, and thus is a solution to the linear program. 

The value of this solution is 
\begin{align*}
&\sum_{S \subseteq [k]} \sum_{q_{S} \in \mathcal{Q}_{S}} z'_{S}(q_{S}) = \\
&\sum_{S \subseteq [k]} \sum_{q_{S} \in \mathcal{Q}_{S}} \left(z_{S}(q_{S}) + \delta \sum_{q^*\in\mathcal{Q}^*:q^*_S=q_S} \pi(q^*)\right) = \\
& \sum_{S \subseteq [k]} \sum_{q_{S} \in \mathcal{Q}_{S}} z_{S}(q_{S})+\delta(2^k - 1) \leq \\
&v' + 2^{k}\delta < v' + \epsilon/2
\end{align*}

From Theorem \ref{packing} we can conclude that approximating the sub-non-signaling value of a game with a constant number of provers takes space polynomial in the log of the size of the linear program which is $\poly(|\mathcal{Q}|) \cdot \poly(|\mathcal{A}|) \cdot 2^k$ and in $1/\delta$, or $\poly(\log(|(\mathcal{Q},\mathcal{A})|),1/\epsilon, 2^k) $. 
\qed

\subsection{Proof of Theorem~\ref{thm:NS} via Corollary~\ref{cor:LP}}
In what follows we prove Theorem~\ref{thm:NS}.  In the proof we rely on Corollary~\ref{cor:LP} which implies that if $L\notin \SPACE(n,2^k)$ then there is an infinite set $N\subseteq \mathbb{N}$ such that for every $n\in N$ there is an element $x\in \{0,1\}^n\setminus L$ such that $\mathcal{V}_{\subNS}(\mathcal{G}_x)\geq c-\frac{1}{n}$.  Consider the infinite set $N_0\subseteq N$ such that for every $n\in N_0$ it holds that $n\geq 2^{5k^2}$.  We conclude that for every $n\in N_0$  there exists $x\notin \{0,1\}^n\setminus L$ such that $$\mathcal{V}_{\subNS}(\mathcal{G}_x)\geq  c-\frac{1}{n}\geq c- 2^{-5k^2}\geq  1-2^{-5k^2}-2^{-5k^2}\geq 1-2^{-4k^2}.$$
 Therefore, to prove Theorem~\ref{thm:NS} it suffices to prove the following theorem.
}\fi

\section{The Proof of Theorem~\ref{thm:main}}\label{sec:proof:sub-to-ns}

In this section we prove Theorem~\ref{thm:main}, which is our main technical theorem.  We start with the high-level overview of the proof.

\subsection{Overview of the proof of Theorem~\ref{thm:main}}\label{sec:intro:overview:proof:sub-to-ns}

In this overview we focus on proving the first part of Theorem~\ref{thm:main}, which is the part that contains the bulk of technical difficulty.  Namely, we need to show how to convert any sub-non-signaling strategy for a $k$-player game $\mathcal{G}=(\mathcal{Q},\mathcal{A},V,\pi)$ that convinces the verifier to accept with probability $1-2^{-ck^2}$ into a non-signaling strategy that convinces the verifier to accept  with probability~$2^{-dk^2}$ (for some constants $c,d>0$).  

To this end, we use the notion of {\em honest-referee non-signaling strategies}, defined by Holmgren and Yang~\cite{HY19} (see Definition~\ref{def:hrNS}). Loosely speaking, given any sub-non-signaling strategy $\{p_q\}_{q\in \mathcal{Q}}$ that succeeds in convincing $V$ to accept with probability $1-\epsilon$, we slightly modify the query distribution~$\pi$ into a new distribution~$\pi^*$ that is obtained by restricting $\pi$ to a subset of its domain~$\mathcal{Q}$, such that $\pi$ and $\pi^*$ are $\delta$-close, for an arbitrary parameter $\delta>0$ of our choice.  We construct an honest-referee non-signaling strategy with respect to~$\pi^*$ that convinces~$V$ to accept with probability at least $\frac{1}{k^{3k}}(1-k^{2k}\epsilon/\delta)$.  We then rely on a theorem from \cite{HY19} that shows how to convert an honest-referee non-signaling strategy that succeeds in convincing $V$ with probability $\eta$, into a non-signaling one that succeeds in convincing $V$ with probability $\geq 2^{-O(k^2)}\eta$ (see Theorem~\ref{thm:main:HR}). 

We note that if the sub-non-signaling strategy $\{p_q\}_{q\in \mathcal{Q}}$  is in $\subNS_\delta$ (for an appropriately small value of $\delta>0$) then our resulting honest-referee non-signaling strategy is in fact a non-signaling strategy, and hence we avoid the loss that is incurred by converting an honest-referee non-signaling strategy into a non-signaling one.

This is the reason we obtain a tighter bound in the second part of Theorem~\ref{thm:main}.

Fix any sub-non-signaling strategy  $\{p_q\}_{q\in \mathcal{Q}}$. By Definition~\ref{def:subNS}, there exists a set of distributions $\{\Sim_{S,q_S}\}_{S\subseteq[k],q_S\in\mathcal{Q}_S}$ such that for every $q\in\mathcal{Q}$, every $S\subseteq[k]$,  and every $a_S\in\mathcal{A}_S$,
$$
p_q(a_S)\leq \Sim_{S,q_S}(a_S).
$$
We show how to convert 
 the strategy $\{p_q\}_{q\in \mathcal{Q}}$ into an {\em honest-referee} non-signaling strategy via the following steps.
 
\begin{enumerate}
    \item 
{\bf Step 1.}\label{item:step1}    In this step we convert $\{\Sim_{S,q_S}\}$  into a family of distributions  $\{\Sim^{(1)}_{S,q_S}\}$, where each distribution $\Sim^{(1)}_{S,q_S}$ is over elements in $\mathcal{A}_S\cup\{\bot\}$,
such that 
for every $S,T\subseteq[k]$ for which $S\subseteq T$, and for every $q\in\mathcal{Q}$ and $a_S\in\mathcal{A}_S$,
\begin{equation}\label{eqn:overview:sim1}
\Pr[\Sim^{(1)}_{T,q_T}|_S=a_S]\leq \Pr[\Sim^{(1)}_{S,q_S}=a_S]
\end{equation}
and
\begin{equation}\label{eqn:overview-total}
\Pr_{q \leftarrow \pi, a\leftarrow \Sim^{(1)}_{[k],q}}[V(q,a)=1]\geq 1-\epsilon_1,
\end{equation}
where $\epsilon_1\triangleq k^{k}\cdot\epsilon$.
This is done via two sub-steps.
\begin{enumerate}
    \item We first reduce the probability of the ``outliers" of $p_q$.  Namely, if there exists a vector $q\in\mathcal{Q}$, a subset $S\subset [k]$, and answers $a_S\in\mathcal{A}_S$ such that  $\Pr[p_q|_S=a_S]$ is higher than the average probability over all $q^*$'s such that $q^*_S=q_S$, then we lower $\Pr[p_q|_S=a_S]$ towards the average, and in the remaining probability output~$\bot$.  Namely, we construct a family of distributions $\{\tilde{p}_q\}_{q\in \mathcal{Q}}$ such that for every $q\in\mathcal{Q}$, every $S\subseteq[k]$, and every $a_S\in\mathcal{A}_S$,
$$
\Pr[\tilde{p}_q|_S=a_S]\leq\E_{q^*\leftarrow{V}:q^*|_S=q_S}\Pr[p_q|_S=a_S].
$$
We note that ideally we would like to construct $\{\tilde{p}_q\}_{q\in \mathcal{Q}}$ that satisfies the above equation where the inequality is replaced with equality, since then  $\{\tilde{p}_q\}_{q\in \mathcal{Q}}$ would be non-signaling, and we would be done.  However, this is possible only if $\{p_q\}$ was non-signaling to begin with. Therefore, we start with the more humble goal of omitting the ``outliers".

We construct $\{\tilde{p}_q\}$ in a greedy manner, by starting with $\{p_q\}$ and then lowering the probabilities (in a greedy manner) so that the inequality above is satisfied.  Note that in the process we lower the total probability of $\tilde{p}_q$ (it outputs~$\bot$ in the remaining probability). However, we argue that the fact that $\{p_q\}$ is sub-non-signaling implies that the total probability is not reduced by too much. More specifically, we show that 
if 
$$\Pr_{q\leftarrow\pi, a\leftarrow p_q}[V(q,a)]= 1-\epsilon$$  then 
\begin{equation}\label{eqn:overview:delta}
\Pr_{q\leftarrow\pi, a\leftarrow \tilde{p}_q}[V(q,a)=1]\geq 1-2^{k}\epsilon.
\end{equation}

\item Define a family of distributions $\{\Sim'_{S,q_S}\}$ by $$
\Pr[\Sim'_{S,q_S}=a_S]\triangleq \max_{q^*:q^*_S=q_S} \Pr[\tilde{p}_{q^*}|_S=a_S],
$$
and in the remaining probability $\Sim'_{S,q_S}$ outputs~$\bot$.
At first it may seem that $\{\Sim'_{S,q_S}\}$ satisfies Equation~\eqref{eqn:overview:sim1}, since for a subset~$T$ that contains~$S$, we maximize over a smaller set of queries, and hence it may appear that the probability is smaller.  However, this is not quite true since  $\Pr[\Sim'_{T,q_T}|_S=a_S]$ is defined by summing over all $a_T$ that are consistent with $a_S$, the maximum  $$\max_{q^*:q^*_T=q_T} \Pr[\tilde{p}_{q^*}|_T=a_T],$$ which is larger than first maximizing and then summing.

Therefore, we ``correct" $\{\Sim'_{S,q_S}\}$ so that Equation~\eqref{eqn:overview:sim1} holds.  Specifically, we define 
 $\{\Sim^{(1)}_{S,q_S}\}$ in a greedy manner, by induction, as follows. 
For sets $S$ of size~$1$ and for every $q_S\in\mathcal{Q}_S$, define  $\Sim^{(1)}_{S,q_S}=\Sim'_{S,q_S}$.  Suppose we defined $\Sim^{(1)}_{S,q_S}$ for all sets $S$ of size less than~$i$, then for any set~$T$ of size $i$ and any $q_T\in\mathcal{Q}_T$, define  $\Sim^{(1)}_{T,q_T}$ in an iterative manner, as follows: 
Start by defining $\Sim^{(1)}_{T,q_T}=\Sim'_{T,q_T}$. 
If there exists $S\subsetneq T$ with $|S| = |T| - 1$ and a set $a_S\in\mathcal{A}_S$, such that
$$
\Pr[\Sim^{(1)}_{T,q_T}|_S=a_S]> \Pr[\Sim^{(1)}_{S,q_S}=a_S],
$$
then reduce the probability of  $\Sim^{(1)}_{T,q_T}$ so that 
$$
\Pr[\Sim^{(1)}_{T,q_T}|_S=a_S]= \Pr[\Sim^{(1)}_{S,q_S}=a_S],
$$
and in the remaining probability output~$\bot$.
This process ensures that indeed Equation~\eqref{eqn:overview:sim1} is satisfied.  However, it reduces the total probability of $\Sim^{(1)}_{[k],q}$, yet we argue that it does not reduce the probability by too much, and that indeed Equation~\eqref{eqn:overview-total} holds.

\end{enumerate}

\item 
{\bf Step 2.} We convert the family of distributions $\{\Sim^{(1)}_{S,q_S}\}$ into a family of honest-referee  non-signaling distributions.  This is done via the following two sub-steps.  

\begin{enumerate}
    
\item {\bf Step 2(a).} \label{item:step2}
We modify $\{\Sim^{(1)}_{S,q_S}\}$ to a new family of distributions $\{\Sim^{(2)}_{S,q_S}\}$ that still satisfies Equation~\eqref{eqn:overview:sim1}, yet in addition for every $S$ and $q_S$ the probability that $\Sim^{(2)}_{S,q_S}$ outputs~$\bot$ depends only on $|S|$, and is otherwise independent of~$S$ and $q_S$. 
We define $\Sim^{(2)}_{S,q_S}$ by lowering the probability mass of $\Sim^{(1)}_{S,q_S}$.  However, to ensure that we do not lower the probability mass by too much, we need to focus only on queries $q$ such that the probability that $\Sim^{(1)}_{[k],q}$ outputs~$\bot$ is low.  Specifically, in what follows, we focus only on queries~$q\in\mathcal{Q}$ such that 
$$
\Pr[\Sim^{(1)}_{[k],q}=\bot]\leq \epsilon_1/\delta,
$$
and we denote the set of all such queries by $\GOOD$. 
By Markov's inequality, together with Equation~\eqref{eqn:overview-total},
$$
\Pr[q\in\GOOD]\geq 1 - \delta.
$$
From now on we focus only on $q\in\GOOD$.
Namely, we consider the modified game where the queries are restricted to being in $\GOOD$.  Formally, we modify the game~$\mathcal{G}=(\mathcal{Q},\mathcal{A},V,\pi)$ by modifying the distribution~$\pi$ to the new distribution~$\pi^*= \pi|\GOOD$; i.e., $\pi^*$ samples~$q$ according to~$\pi$  subject to the restriction that $q\in\GOOD$.  From now on we focus on the game $\mathcal{G}^*$ where the distribution~$\pi$ is replaced with $\pi^*$.   We construct an honest-referee non-signaling strategy for this game. We mention that if the sub-non-signaling strategy $\{p_q\}$ is in $\subNS_\delta$ for $\delta< \sqrt{\epsilon}$ then $\GOOD=\mathcal{Q}$, and thus in this case the honest-referee non-signaling strategy is a non-signaling one, and thus we avoid the use of Theorem~\ref{thm:HY} and the loss associated with it. 


We first convert $\{\Sim^{(1)}_{S,q_S}\}$ into a new family of distributions $\{\Sim^{(2)}_{S,q_S}\}$ that still satisfies Equation~\eqref{eqn:overview:sim1},  but in addition it satisfies that for every $\ell\in[k]$, for $\alpha_\ell\triangleq (1-\epsilon_1/\delta)^\ell$,  for every $q\in\GOOD$ and every subset $S\subseteq [k]$ of size $\ell$, 
\begin{equation}\label{eqn:overview:**}
\sum_{a_S\in\mathcal{A}_S}\Pr[\Sim^{(2)}_{S,q_S}=a_S]=\alpha_\ell,
\end{equation}
and 
$$  \Pr_{q\leftarrow\pi^*, a\leftarrow \Sim^{(2)}_{[k],q}} [V(q,a)=1]
    \geq 1-(k+1)\epsilon_1/\delta\triangleq 1-\epsilon_2.
$$

This is done by simply normalizing each $\Sim^{(1)}_{S,q_S}$ accordingly.  We note that this normalization slightly reduces the success probability.  Nevertheless, Equation~\eqref{eqn:overview:**} is crucial, since it will allow us to use  $\{\Sim^{(2)}_{S,q_S}\}$ to construct a non-signaling strategy. 

\item {\bf Step 2(b).} \label{item:step3}
We next define an honest-referee non-signaling strategy for $\mathcal{G}^*$.  More specifically, we define a strategy for which the non-signaling condition holds for every query $q\in\GOOD$. 

For the sake of motivation, let's first try to define our honest-referee non-signaling strategy $\{p^*_q\}_{q\in\mathcal{Q}}$.  The idea is to define it in a greedy manner, as follows.  
We start by defining
$$
\Pr[p^*_q=a]=\Pr[\Sim^{(2)}_{[k],q}=a]
$$
We would like to argue that $$\Pr[p^*_q|_S=a_S]= \Pr[\Sim^{(2)}_{S,q_S}=a_S];$$ 
however, all we can guarantee is that \begin{equation}\label{eqn:overview:p*}
  \Pr[p^*_q|_S=a_S]\leq \Pr[\Sim^{(2)}_{S,q_S}=a_S].
\end{equation}

To remedy this, we modify the distribution $p^*_q$, as follows:
For every $i\in[k]$, let $\mathcal{A}^*_i=\mathcal{A}_i\cup\{*\}$, and let $\mathcal{A}^*=\mathcal{A}^*_1\times\ldots\times\mathcal{A}^*_k$. For each $q\in\mathcal{Q}$, we modify the distribution $p^*_q$ as follows:  We do not change its distribution over elements in $\mathcal{A}$, but we allow it to also output elements in $\mathcal{A}^*$ that are not in $\mathcal{A}$.  More specifically, we modify $p^*_q$ as follows:  
For any set $S\subseteq[k]$ and any $a_S\in\mathcal{A}_S$, we define
\begin{equation}\label{eqn:overviewi}
\Pr[p^*_q=(a_S,*)]=\Pr[\Sim^{(2)}_{S,q_S}=a_S]-\Pr[p^*_q|_S=a_S].
\end{equation}
Equation~\eqref{eqn:overview:p*} ensures that this probability is non-negative.  Moreover, Equation~\eqref{eqn:overviewi}
ensures that indeed
$$
\Pr[p^*_q|_S=a_S]=\Pr[\Sim^{(2)}_{S,q_S}=a_S],
$$
and hence only depends on $q_S$ as desired.

Unfortunately, this remedy does not work.  The reason is that only {\em initially} it is true that 
\begin{equation}\label{eqn:overview:non-neg}
\Pr[\Sim^{(2)}_{S,q_S}=a_S]-\Pr[p^*_q|_S=a_S]\geq 0.
\end{equation}
However,  as we modify the definition of~$p^*_q$, it's probability mass (i.e., $\sum_{a\in\mathcal{A}^*}{p^*_q(a)}$) grows, and can cause the left hand side in Equation~\eqref{eqn:overview:non-neg} to be negative!  

Instead, we first modify $\{\Sim^{(2)}_{S,q_S}\}$ into another family of distributions $\{\Sim^{(3)}_{S,q_S}\}$, which has the same desired properties as $\{\Sim^{(2)}_{S,q_S}\}$, but in addition satisfies \begin{equation}\label{eqn:overview:Sim3-prop}
\sum_{i=0}^{k-|S|} (-1)^{i} \sum_{T \supsetneq S, |T| = |S| + i}  \Pr[ \Sim^{(3)}_{T,q_T}|_S = a_S ]\geq 0. 
\end{equation}order to ensure that Equation~\eqref{eqn:overview:non-neg} remains non-negative.  
To ensure that the above equation is satisfies, we define for every $S\subseteq[k]$, every $q_S\in\mathcal{Q}_S$, and every $a_S\in\mathcal{A}_S$,
$$
Pr[\Sim^{(3)}_{S,q_S}=a_S]\triangleq \frac{1}{k^{2|S|}}\Pr[\Sim^{(2)}_{S,q_S}=a_S],
$$
and in the remaining probability it outputs~$\bot$.  We argue that indeed $\{\Sim^{(3)}_{S,q_S}\}$ satisfies Equation~\eqref{eqn:overview:Sim3-prop}.  Unfortunately, this step significantly reduces the acceptance probability, from one that approaches~$1$ as $k$ grows (with the right setting of parameters), to one that approaches $0$ as $k$ grows. Avoiding this loss is a great open problem.  

Equation~\eqref{eqn:overview:Sim3-prop} allows us to convert Equation~\eqref{eqn:overview:non-neg} to an equality, by setting for every non-empty subset $S\subseteq [k]$, every $q\in\mathcal{Q}$, and every $a_S\in\mathcal{A}_S$,
\begin{align*}
&\Pr[p^*_q=(a_S,(*)^{k-|S|)}]\triangleq \sum_{i=0}^{k-|S|}(-1)^{i}\sum_{T\supseteq S, |T|=|S|+i}\Pr[\Sim^{(3)}_{T,q_T}|_S=a_S].
\end{align*}
This is exactly the extension needed in order to convert the inequality in Equation~\eqref{eqn:overview:non-neg} to an equality, and by definition of $\Sim^{(3)}$ (and in particular, by Equation~\eqref{eqn:overview:Sim3-prop}), it is always the case that 
$$
\Pr[p^*_q=(a_S,(*)^{k-|S|})]\geq 0.
$$

Finally, we note that by defining $\{p^*_q\}_{q\in\mathcal{Q}}$ as above, the total probability of $p^*_q$ may not be exactly~$1$.  It may be smaller than~$1$ or greater than~$1$.  However, its total probability is fixed and does not depend on~$q$.  Therefore, we can safely normalize it to be exactly~$1$ without damaging the honest-referee non-signaling guarantee.

\end{enumerate}

\end{enumerate}


\paragraph{Our Parameters.}  Recall that we convert a sub-non-signaling strategy with value at least $1-2^{-dk^2}$ into a non-signaling strategy with value at least $2^{-ck^2}$ (for some constants $c,d>0$).  We don't have any reason to believe that this loss is inherent.  In particular, we would like to convert any sub-non-signaling strategy with value at least $1-2^{-dk}$ into a non-signaling strategy with value at least $2^{-ck}$.  This would imply that $O(\log n)$-prover non-signaling $\MIP$ is in $\PSPACE$. Our loss stems mainly from the step where we go from honest-referee non-signaling to non-signaling, via a transformation from~\cite{HY19}.  There is another $k^{k}$ loss in Step~\ref{item:step3}, however this loss is small compared to the other one.  We do not know if these losses are inherent, and leave it as an open problem to explore. 
\subsection{Formal Proof of Theorem~\ref{thm:main}}\label{sec:proof:main}
We prove Theorem~\ref{thm:main} by proving the following theorem.
\begin{theorem}\label{thm:main:HR}
For any $k\in\mathbb{N}$, any $k$-player game $=(\mathcal{Q},\mathcal{A},V,\pi)$, any $\epsilon>0$ and any $\delta>0$:  If $\mathcal{V}_\subNS(\mathcal{G})\geq 1-\epsilon$ then there exists a game $\mathcal{G}^*=(\mathcal{Q},\mathcal{A}^*,V^*,\pi^*)$ such that the following holds:
\begin{enumerate}
    \item For every $i\in[k]$, $\mathcal{A}^*_i=\mathcal{A}_i\cup \{*\}$, where $\mathcal{A}=\mathcal{A}_1\times\ldots,\times\mathcal{A}_k$ and $\mathcal{A}^*=\mathcal{A}_1^*\times\ldots,\times\mathcal{A}_k^*$.  
    \item $V^*|_{\mathcal{Q}\times\mathcal{A}}\equiv V$ and $V^*|_{\mathcal{Q}\times(\mathcal{A}^*\setminus\mathcal{A})}\equiv 0$
    \item $\pi$ and $\pi^*$ are $\delta$-close.
    \item $\mathcal{V}_\hrNS(\mathcal{G}^*)\geq \frac{1}{k^{3k}}\left(1-k^{2k}\epsilon/\delta\right)$.
\end{enumerate} 
Moreover, if $\mathcal{V}_{\subNS_\delta}(\mathcal{G})\geq 1-\epsilon$ and $\delta\leq \sqrt{\epsilon}$ then $\pi^*=\pi$ and $\mathcal{V}_\NS(\mathcal{G}^*)\geq \frac{1}{k^{3k}}\left(1-k^{2k}\epsilon/\delta\right)$.
\end{theorem}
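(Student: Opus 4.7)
The plan is to execute the four-stage construction sketched in Section~\ref{sec:intro:overview:proof:sub-to-ns}. Starting from any sub-non-signaling strategy $\{p_q\}$ with value at least $1-\epsilon$, together with its envelope family $\{\Sim_{S,q_S}\}$, the target is a family $\{\Sim^{(3)}_{S,q_S}\}$ that is simultaneously \emph{marginally monotone} (for $S\subseteq T$ the projection of $\Sim^{(3)}_{T,q_T}$ onto $S$ is dominated by $\Sim^{(3)}_{S,q_S}$) and \emph{inclusion--exclusion non-negative}, in the sense that the alternating sum in Equation~\eqref{eqn:overview:Sim3-prop} is non-negative for every $S,q,a_S$. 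Once such a family is produced, I define the honest-referee non-signaling strategy $p^*_q$ on the enlarged answer set $\mathcal{A}^*$ by the M\"obius-inversion rule $\Pr[p^*_q=(a_S,(*)^{k-|S|})]=\sum_{i=0}^{k-|S|}(-1)^i\sum_{T\supseteq S,|T|=|S|+i}\Pr[\Sim^{(3)}_{T,q_T}|_S=a_S]$; inclusion--exclusion non-negativity guarantees that this is a bona fide measure, and the rule forces the $S$-marginal of $p^*_q$ to equal $\Sim^{(3)}_{S,q_S}$, which depends on $q$ only through $q_S$, delivering honest-referee non-signaling.

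The first stage builds the monotone family. I first trim outliers in $\{p_q\}$ by greedily lowering any $\Pr[p_q|_S = a_S]$ that exceeds the average over $q^*$ with $q^*_S=q_S$, producing $\{\tilde p_q\}$ whose $S$-marginals are dominated by these averages; because $\{p_q\}$ is sub-non-signaling, the total probability redirected to $\bot$ is bounded by a $2^k$ factor in $\epsilon$, yielding Equation~\eqref{eqn:overview:delta}. I then set $\Sim'_{S,q_S}(a_S)=\max_{q^*:q^*_S=q_S}\Pr[\tilde p_{q^*}|_S=a_S]$ and correct failures of monotonicity inductively on $|S|$, lowering any $\Pr[\Sim^{(1)}_{T,q_T}|_S=a_S]$ that exceeds $\Pr[\Sim^{(1)}_{S,q_S}=a_S]$ and sending the excess to $\bot$. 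A union bound on the $\bot$-loss across the $k$ levels gives $\{\Sim^{(1)}_{S,q_S}\}$ with the desired monotonicity and acceptance at least $1-\epsilon_1$ for $\epsilon_1=k^k\epsilon$, i.e.\ Equation~\eqref{eqn:overview-total}.

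The second stage restricts to the set $\GOOD=\{q:\Pr[\Sim^{(1)}_{[k],q}=\bot]\leq \epsilon_1/\delta\}$, which has $\pi$-mass at least $1-\delta$ by Markov; setting $\pi^*=\pi|\GOOD$ gives the required $\delta$-closeness and the desired support property of the statement. I then renormalize each $\Sim^{(1)}_{S,q_S}$ so that its total mass equals a universal $\alpha_\ell=(1-\epsilon_1/\delta)^{\ell}$ depending only on $\ell=|S|$, producing the monotone family $\{\Sim^{(2)}_{S,q_S}\}$ whose acceptance is still at least $1-(k+1)\epsilon_1/\delta$. Finally I define $\Sim^{(3)}_{S,q_S}=k^{-2|S|}\Sim^{(2)}_{S,q_S}$; the geometric decay in $|S|$ is precisely what will make the inclusion--exclusion non-negativity condition achievable.

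The technical heart, and the main obstacle, will be verifying Equation~\eqref{eqn:overview:Sim3-prop} for $\{\Sim^{(3)}_{S,q_S}\}$. The positive $i=0$ term equals $k^{-2|S|}\Pr[\Sim^{(2)}_{S,q_S}=a_S]$, while each deeper layer contributes at most $\binom{k-|S|}{i}k^{-2(|S|+i)}\Pr[\Sim^{(2)}_{S,q_S}=a_S]$ by marginal monotonicity; since $\binom{k-|S|}{1}k^{-2(|S|+1)}\leq k^{-2|S|-1}$, the $i=0$ term dominates the negative $i=1$ term and the remaining alternating tail is negligible by the same geometric argument. With the M\"obius formula defining $p^*_q$, its total mass is a function of $k$ alone (not of $q$), so normalization to $1$ preserves honest-referee non-signaling; tracking the $k^{-2k}$ scaling at $|S|=k$ together with an extra $k^{-k}$ absorbed by the M\"obius expansion and renormalization yields acceptance at least $\frac{1}{k^{3k}}(1-k^{2k}\epsilon/\delta)$. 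For the ``moreover'' clause, when $\{p_q\}\in \subNS_\delta$ with $\delta\leq\sqrt{\epsilon}$, the initial $\bot$-mass at every~$q$ is at most $\delta$ and remains within $\epsilon_1/\delta$ after outlier trimming, so $\GOOD=\mathcal{Q}$, $\pi^*=\pi$, and the constructed strategy is in fact non-signaling on all of $\mathcal{Q}$, avoiding the honest-referee relaxation and giving $\mathcal{V}_\NS(\mathcal{G}^*)\geq \frac{1}{k^{3k}}(1-k^{2k}\epsilon/\delta)$.
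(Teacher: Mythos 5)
Your proposal reproduces the paper's construction step for step: outlier trimming to obtain $\tilde p_q$, the $\max$-then-correct construction of $\Sim^{(1)}$, restriction to $\GOOD$ followed by renormalization to $\Sim^{(2)}$ with uniform mass $\alpha^{|S|}$, the geometric damping $\Sim^{(3)}_{S,q_S}=k^{-2|S|}\Sim^{(2)}_{S,q_S}$ to force inclusion--exclusion non-negativity, M\"obius inversion to define $p^*_q$ on $\mathcal{A}^*$, and a final renormalization whose factor is $q$-independent because the $\Sim^{(2)}$-masses depend only on $|S|$; the accounting $\tfrac{1}{k^{3k}}(1-k^{2k}\epsilon/\delta)$ and the $\subNS_\delta$ refinement also match. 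The one place to be careful when fleshing this out is your ``union bound across the $k$ levels'' for the $\bot$-loss in the monotonicity correction: the losses cascade, since trimming $\Sim^{(1)}_{T,q_T}$ against an already-trimmed $\Sim^{(1)}_{S,q_S}$ with $|S|=|T|-1$ inherits that earlier trimming, so the correct bound is a recursion $\xi_{T,q_T}\le |T|\nu(q)+\sum_{S\subsetneq T,|S|=|T|-1}\xi_{S,q_S}$ that unwinds to a factorial $(|T|+1)!\cdot\nu(q)$ rather than a flat sum over levels; the $\epsilon_1=k^{k}\epsilon$ you quote still absorbs this, but a literal union bound would not produce it.
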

We use this theorem together with the theorem from~\cite{HY19} listed below to prove Theorem~\ref{thm:main}.  
\begin{theorem}\cite{HY19}\label{thm:HY}
For every $k\in\mathbb{N}$ there exists a fixed value $\alpha_k\geq 2^{-O(k^2)}$ such that for any $k$-player game $\mathcal{G}$,  $\mathcal{V}_{\NS}(\mathcal{G})\geq \alpha_k\cdot \mathcal{V}_{\hrNS}(\mathcal{G})$.
\end{theorem}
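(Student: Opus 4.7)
Given an honest-referee non-signaling strategy $\{p_q\}_{q \in \mathcal{Q}}$ with simulators $\{\Sim_{S,q_S}\}$ and value $v = \mathcal{V}_{\hrNS}(\mathcal{G})$, the goal is to construct a (full) non-signaling strategy $\{p^*_q\}$ of value at least $\alpha_k v$ with $\alpha_k = 2^{-O(k^2)}$. The key difficulty is that while hrNS forces the cross-scale consistency $\sum_{a_{T \setminus S}} \Sim_{T,q_T}(a_S,a_{T \setminus S}) = \Sim_{S,q_S}(a_S)$ to hold automatically whenever $q_T$ lies in the $T$-projection of $\mathrm{supp}(\pi)$, it can fail for $q_T$ outside this range. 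Consequently the given simulators do not jointly arise from any single $q$-indexed distribution, and there is no immediate NS strategy on all of $\mathcal{Q}$; the plan is to build a globally consistent family of sub-probability simulators at a quantifiable multiplicative cost.

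\textbf{Globalizing the simulators.} Fix $\gamma \in (0,1]$ to be chosen later, and consider the scaled family $\hat{\Sim}_{S,q_S} := \gamma \cdot \Sim_{S,q_S}$. Build a new family $\{\Sim^*_{S,q_S}\}$ inductively over the subset lattice. On queries $q_T$ in the $T$-projection of $\mathrm{supp}(\pi)$, set $\Sim^*_{T,q_T} := \hat{\Sim}_{T,q_T}$; the hrNS consistency then ensures that all $(|T|{-}1)$-marginals match the previously-fixed $\Sim^*_{S,q_S}$ automatically. Off the projection of $\mathrm{supp}(\pi)$, we have to choose a sub-probability distribution $\Sim^*_{T,q_T}$ on $\mathcal{A}_T$ whose $(|T|{-}1)$-marginals all equal the previously-fixed $\Sim^*_{S,q_S}$. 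Existence of such a consistent joint sub-distribution is a classical feasibility problem whose solvability is guaranteed provided the marginals are scaled down by at most a factor $c_\ell \le 2^{O(k)}$ at level $\ell$; choosing $\gamma := \prod_{\ell=1}^k c_\ell^{-1} \ge 2^{-O(k^2)}$ makes the induction go through simultaneously at every level.

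\textbf{Reading off the NS strategy.} Set $\tilde{p}_q(a) := \Sim^*_{[k],q}(a)$. By construction, the $S$-marginal of $\tilde{p}_q$ equals $\Sim^*_{S,q_S}$, which depends on $q$ only through $q_S$ -- the non-signaling property in the sub-probability sense. Normalize to a full distribution by
\begin{equation*}
p^*_q(a) \;:=\; \tilde{p}_q(a) \;+\; \Bigl(1 - \sum_{a'} \tilde{p}_q(a')\Bigr) \cdot \sigma(a)
\end{equation*}
for any fixed $\sigma \in \Delta(\mathcal{A})$. The deficit $1 - \sum_{a'}\tilde{p}_q(a') = 1 - \gamma$ is the same for every $q$, so the added mass is $q$-independent and preserves non-signaling; thus $\{p^*_q\}$ is a bona fide NS strategy. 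Its value satisfies
\begin{equation*}
\mathcal{V}_{\NS}(\mathcal{G}) \;\geq\; \sum_{q} \pi(q) \sum_a \tilde{p}_q(a) V(q,a) \;=\; \gamma \sum_{q \in \mathrm{supp}(\pi)} \pi(q) \sum_a p_q(a) V(q,a) \;=\; \gamma v \;\geq\; 2^{-O(k^2)} v,
\end{equation*}
establishing the theorem with $\alpha_k := \gamma$.

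\textbf{Main obstacle.} The bulk of the technical work is the joint-marginal feasibility step at each level: given a consistent sub-probability family on every $(\ell{-}1)$-subset of a size-$\ell$ index set, produce a non-negative joint sub-distribution realizing them, and control the scaling factor $c_\ell$ required. For $\ell \le 2$ the problem is trivially feasible (singleton marginals can always be joined into a product). From $\ell = 3$ onward, PR-box-type incompatibilities can arise, forcing $c_\ell > 1$. Establishing $c_\ell \leq 2^{O(k)}$ at each level (e.g.\ via LP duality on the marginal polytope, or by a direct Möbius-type construction in the spirit of Step~2(b) of Section~\ref{sec:intro:overview:proof:sub-to-ns}) and showing that these factors compound to only $2^{-O(k^2)}$ overall -- rather than something worse -- is the heart of the argument. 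An improvement to $2^{-O(k)}$ total loss, possibly via a global LP argument that bypasses the level-by-level induction, would relax the paper's main result from $k = O(\sqrt{\log n})$ to $k = O(\log n)$ and is already flagged at the end of Section~\ref{sec:intro:overview:proof:sub-to-ns} as an open problem.
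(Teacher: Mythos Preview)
The paper does not prove this theorem at all; it is quoted as a black box from~\cite{HY19} and used only as an ingredient in the proof of Theorem~\ref{thm:main}. So there is no ``paper's own proof'' to compare your attempt against.

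As a standalone sketch, your plan has a genuine gap in the feasibility step, and it is not the kind of gap that routine work would close. You scale every simulator by a single constant~$\gamma$ and then claim that, off the projection of $\mathrm{supp}(\pi)$, one can find a sub-probability $\Sim^*_{T,q_T}$ whose $(|T|{-}1)$-marginals \emph{equal} the already-fixed $\Sim^*_{S,q_S}$. But uniform scaling does nothing for this feasibility problem: the system ``find a nonnegative $\mu$ on $\mathcal{A}_T$ with $\mu|_S=\nu_S$ for all $S$'' is homogeneous in the data $\{\nu_S\}$, so if it is infeasible for $\{\Sim_{S,q_S}\}$ it is equally infeasible for $\{\gamma\,\Sim_{S,q_S}\}$. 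The PR-box obstruction you yourself mention is exactly an instance where three pairwise marginals admit no joint, and multiplying all three by~$\gamma$ cannot repair that. Your sentence ``solvability is guaranteed provided the marginals are scaled down by at most a factor $c_\ell$'' therefore has no content as written.

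A second, related inconsistency: you set $\gamma:=\prod_\ell c_\ell^{-1}$ as though losses accrue level by level, yet you apply a single~$\gamma$ once, before the induction. If instead you genuinely scaled level~$\ell$ by a factor depending on~$\ell$ (which is what the M\"obius-style construction in Step~2(b) of this paper does, with $k^{-2|S|}$), then for on-support $q_T$ the $S$-marginal of $\Sim^*_{T,q_T}$ would no longer \emph{equal} $\Sim^*_{S,q_S}$ but only lie below it; you would land back in the sub-non-signaling world and your ``reading off the NS strategy'' step (which relies on the deficit $1-\sum_{a'}\tilde p_q(a')$ being exactly $1-\gamma$ for every~$q$) would break. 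Getting from per-level sub-marginal control to a bona fide NS strategy with a controllable, $q$-independent deficit is precisely the nontrivial part, and your outline does not supply it.
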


\paragraph{Proof of Theorem~\ref{thm:main}.} 
Let $c>0$ be a constant such that $2^{-ck^2}= \frac{\alpha_k}{k^{8k}}$, and let $d>0$ be a constant such that $2^{-dk^2}= \frac{\alpha_k}{k^{5k}}$, where $\alpha_k$ is the fixed value given in Theorem~\ref{thm:HY}.
Let $\mathcal{G}= (\mathcal{Q},\mathcal{A},V,\pi)$ be any $k$-player game such that $\mathcal{V}_{\subNS}(\mathcal{G})\geq 1 - \epsilon$.  Let $\delta=\epsilon\cdot k^{3k}$.  By Theorem~\ref{thm:main:HR} there exists a game $\mathcal{G}^*=(\mathcal{Q},\mathcal{A}^*,V^*,\pi^*)$ that satisfies the conditions of Theorem~\ref{thm:main:HR}, and in particular 
$$
\mathcal{V}_\hrNS(\mathcal{G}^*)\geq 
\frac{1}{k^{3k}}\left(1-k^{2k}\epsilon/\delta\right)= \frac{1}{k^{3k}}\left(1-k^{-k}\right)\geq \frac{1}{k^{4k}}.
$$
This, together with  Theorem~\ref{thm:HY}, 
implies that
$$\mathcal{V}_\NS(\mathcal{G}^*)\geq  \frac{\alpha_k}{k^{4k}}$$
Set $\epsilon=2^{-ck^2}=\frac{\alpha_k}{k^{8k}}$, which implies that  $\delta=\frac{\alpha_k}{k^{5k}}$.
We next argue that with this setting of parameters,  
$$\mathcal{V}_\NS(\mathcal{G})\geq  \frac{\alpha_k}{k^{5k}}=2^{-dk^2}, $$
as desired.
This follows from the following two steps:  First  convert the non-signaling strategy for $\mathcal{G}^*$ into a non-signaling strategy where all the answers are in $\mathcal{A}$ (as opposed to $\mathcal{A}^*$).  This is done as follows:  Arbitrarily choose a fixed tuple $(a_1,\ldots,a_k)\in \mathcal{A}_1\times\ldots,\times\mathcal{A}_k$.  If the answer in the $i$'th coordinate is $*\in \mathcal{A}_i^*\setminus \mathcal{A}_i$ then replace it with the fixed answer $a_i\in\mathcal{A}_i$.  Note that this new strategy remains non-signaling.  Moreover, the fact that $V^*$ always rejects the answers that are not in $\mathcal{A}$, implies that the value of this non-signaling strategy in $\mathcal{G}^*$ does not decrease.  
Finally, the fact that $\pi^*$ and $\pi$ are $\delta$-close implies that indeed
$$\mathcal{V}_\NS(\mathcal{G})\geq \mathcal{V}_\NS(\mathcal{G}^*)-\delta\geq  \frac{\alpha}{k^{4k}}- \frac{\alpha_k}{k^{5k}}\geq \frac{\alpha_k}{k^{5k}}.
$$
We next prove the second part of Theorem~\ref{thm:main}.  To this end, fix any $\delta\leq\frac{1}{k^{3k}}$,  and set $\epsilon =\frac{1}{k^{6k}}\geq \delta^2$. 
Theorem~\ref{thm:main:HR}  implies that for this setting of parameters
$$
\mathcal{V}_\NS(\mathcal{G}^*)\geq 
\frac{1}{k^{3k}}\left(1-k^{2k}\epsilon/\delta\right)= \frac{1}{k^{3k}}\left(1-k^{-k}\right)\geq \frac{1}{k^{4k}}.
$$
where as above this implies that 
$$
\mathcal{V}_\NS(\mathcal{G})\geq \mathcal{V}_\NS(\mathcal{G}^*)-\delta\geq  \frac{1}{k^{4k}}-\frac{1}{k^{3k}}\geq \frac{1}{k^{3k}},
$$
as desired.
\qed

\paragraph{Proof of Theorem~\ref{thm:main:HR}.}
Let $\mathcal{G}= (\mathcal{Q},\mathcal{A},V,\pi)$ be a $k$-player game such that $\mathcal{V}_{\subNS}(\mathcal{G})\geq 1 - \epsilon$. 
Let $\{p_{q}\}_{q\in\mathcal{Q}}$ be a sub-non-signaling strategy, such that $\mathcal{G}$ has sub-non-signaling value $1-\epsilon$ with respect to $\{p_{q}\}_{q\in\mathcal{Q}}$. 
In what follows, we denote by 
$$
p_{q_S}(\bot)\triangleq\E_{q^*\leftarrow\pi|(q^*_S=q_S)}[p_{q^*}(\bot)],
$$
and we denote by 
\begin{equation}\label{eqn:nu}
\nu(q)\triangleq\sum_{S\subseteq[k]} p_{q_S}(\bot).
\end{equation}
Note that
\begin{equation}\label{eqn:nu-avg}
\E_{q\leftarrow\pi}[\nu(q)]\leq \sum_{S\subseteq[k]} \E_{q\leftarrow\pi}[p_{q_S}(\bot)]\leq 2^k\cdot\epsilon.
\end{equation}
Our proof proceeds in two steps, each which consists of two sub-steps.

  \paragraph{Step 1.}  Construct a family of distributions $\{\Sim^{(1)}_{S,q_S}\}$ over $\mathcal{A}_S\cup\{\bot\}$,  such that for every $S\subseteq T\subseteq [k]$, and every $q_T\in\mathcal{Q}_T$ and $a_S\in\mathcal{A}_S$, it holds that 
  \begin{equation}\label{eqn:*}
  \Pr[\Sim^{(1)}_{T,q_T}|_S = a_S]\leq \Pr[\Sim^{(1)}_{S,q_S} = a_S],
  \end{equation}
  and for every $q\in\mathcal{Q}$ and $S\subseteq[k]$
 \begin{equation}\label{eqn:*2}
  \Pr[\Sim^{(1)}_{S,q_S}=\bot]\leq (|S|+2)!\cdot\nu(q), \end{equation} and 
  \begin{equation}\label{eqn:sum-Sim*}
  \Pr_{q\leftarrow \pi,a\leftarrow\Sim^{(1)}_{[k],q}}[V(q,a)=1]\geq 1-k^{\log k}\cdot\epsilon.
  \end{equation}
   We do this in two steps.  

\paragraph{Step 1(a).}  We define a sub-non-signaling strategy $\{\tilde{p}_q\}$ for the game $\mathcal{G}$, such that for every $q\in\mathcal{Q}$, \begin{equation}\label{eqn:tildep1}
\tilde{p}_q(a)\leq p_q(a) \mbox{  }\mbox{ } \forall a\in\mathcal{A} 
\end{equation}
and  
\begin{equation}\label{eqn:tildep2}
\tilde{p}_q(\bot)\leq 
\nu(q),
\end{equation}
and in addition
for every $S \subseteq [k]$ and every  $a_{S}\in\mathcal{A}_S$
\begin{equation}\label{eqn:less-than-avg}
\Pr[\tilde{p}_{q}|_S = a_S] \leq \E_{q^* \leftarrow \pi|(q^*_S = q_S)} \Pr[ p_{q^*}|_{S} = a_S ] 
\end{equation} 
Note that Equations~\eqref{eqn:tildep1} and~\eqref{eqn:tildep2} imply that for every $q\in\mathcal{Q}$
\begin{equation}\label{eqn:tilde-succ}
\Pr_{a\leftarrow\tilde{p}_q }[V(q,a)=1]\geq \Pr_{a\leftarrow p_q }[V(q,a)=1]-\nu(q).
\end{equation}
We define $\tilde{p}_q$ in a greedy manner, so that
Equation~\eqref{eqn:less-than-avg} holds,
while keeping the invariant that Equation~\eqref{eqn:tildep1} holds.   This is done as follows:  Fix any $q\in\mathcal{Q}$. Start with $\tilde{p}_q=p_q$.
For every $S\subseteq [k]$ and every $a_S$, if 
$$
\Pr[\tilde{p}_{q}|_S = a_S] > \E_{q^* \leftarrow \pi|(q^*_S = q_S)} \Pr[ p_{q^*}|_{S} = a_S ] 
$$
then (arbitrarily) reduce $\tilde{p}_q(a^*)$ for every $a^*\in\mathcal{A}$ such that $a^*_S=a_S$ so that $$
\Pr[\tilde{p}_{q}|_S = a_S] = \E_{q^* \leftarrow \pi|(q^*_S = q_S)} \Pr[ p_{q^*}|_{S} = a_S ] ,
$$
and in the remaining probability output~$\bot$.
For each $S$ and $a_S$, this step reduces the probability that~$V$ accepts by at most
$$
\delta_{S,a_S}(q)\triangleq \max 0, \Pr[{p}_{q}|_S = a_S] - \E_{q^* \leftarrow \pi^*| (q^*_S = q_S)} \Pr[ p_{q^*}|_{S} = a_S ] .
$$
This follows from the invariant that for every $a$ it holds that $\tilde{p}_q(a)\leq p_q(a)$.
Since we do this for every $S\subseteq [k]$ and every $a_S$, in total the probability of $\bot$ is increased by at most
$$\delta(q) = \sum_{S,a_S} \delta_{S,a_S}(q).$$
Note that 
Equations~\eqref{eqn:tildep1} and~\eqref{eqn:less-than-avg} hold by definition of $\{\tilde{p}_q\}$. To prove Equation~\eqref{eqn:tildep2}, it suffices to prove the following claim.
\begin{claim}\label{claim:delta}
For every $q\in\mathcal{Q}$, it holds that $\delta(q)+p_q(\bot) \leq \nu(q)$. 
\end{claim}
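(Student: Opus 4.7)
The plan is to bound $\delta(q)$ term-by-term using the sub-non-signaling inequality applied to each $\Sim_{S,q_S}$, and then extract an extra savings from the degenerate case $S=[k]$ in order to absorb the additional $p_q(\bot)$ that appears on the left-hand side.

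First I would fix any $S \subseteq [k]$ and set $\bar p(a_S) \triangleq \E_{q^* \leftarrow \pi \mid q^*_S = q_S} p_{q^*}(a_S)$, so that $\delta_{S,a_S}(q) = \max\{0,\, p_q(a_S) - \bar p(a_S)\}$. The crucial observation is that the sub-non-signaling condition gives $p_{q^*}(a_S) \leq \Sim_{S,q_S}(a_S)$ for every $q^*$ with $q^*_S = q_S$, and this inequality survives averaging, so $\bar p(a_S) \leq \Sim_{S,q_S}(a_S)$ as well. Since both $p_q(a_S)$ and $\bar p(a_S)$ lie below the common upper bound $\Sim_{S,q_S}(a_S)$, the elementary inequality $\max\{0,u-v\} \leq w - v$ (valid whenever $u,v \leq w$) yields
\[
\delta_{S,a_S}(q) \;\leq\; \Sim_{S,q_S}(a_S) - \bar p(a_S).
\]

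Next I would sum over $a_S \in \mathcal{A}_S$. Since $\Sim_{S,q_S}$ is a probability distribution on $\mathcal{A}_S$, its masses sum to $1$, and since $\sum_{a_S} p_{q^*}(a_S) = 1 - p_{q^*}(\bot)$, averaging over $q^*$ gives $\sum_{a_S} \bar p(a_S) = 1 - p_{q_S}(\bot)$. Combining, $\sum_{a_S} \delta_{S,a_S}(q) \leq p_{q_S}(\bot)$.

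Finally I would sum over $S$. The key subtlety---and the only place the argument is not slack---is that for $S = [k]$ the conditional distribution $\pi \mid q^*_{[k]} = q_{[k]}$ is the point mass on $q$ itself, so $\bar p(a) = p_q(a)$ exactly and every $\delta_{[k],a}(q)$ vanishes. Hence $\delta(q) \leq \sum_{S \subsetneq [k]} p_{q_S}(\bot)$, and adding $p_q(\bot) = p_{q_{[k]}}(\bot)$ to both sides gives $\delta(q) + p_q(\bot) \leq \sum_{S \subseteq [k]} p_{q_S}(\bot) = \nu(q)$, as required. The main obstacle is really this accounting step: naively summing the per-$S$ bound over all $S$ wastes a factor of $p_q(\bot)$, and one must notice the free slack at $S=[k]$ to recover exactly the extra room needed on the right-hand side.
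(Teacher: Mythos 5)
Your proof is correct and is essentially the paper's own argument: both establish the per-set bound $\delta_{S,a_S}(q) \le \Sim_{S,q_S}(a_S) - \E_{q^*\leftarrow\pi|q^*_S=q_S}[p_{q^*}(a_S)]$, sum over $a_S$ to get $\sum_{a_S}\delta_{S,a_S}(q)\le p_{q_S}(\bot)$, and then exploit that the $S=[k]$ term vanishes to absorb the extra $p_q(\bot)$ into the final sum. The phrasing via the elementary $\max\{0,u-v\}\le w-v$ inequality is a minor cosmetic difference from the paper's direct manipulation.
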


\begin{proof}
Since $\{p_q\}$ is a sub-non-signaling strategy, there exists a family of distributions $\{\Sim_{S,q_S}\}$ such that for every $S\subseteq[k]$ and every $q_S\in\mathcal{Q}_S$ and $a_S\in\mathcal{A}_S$, 
$$\max_{q^*\mbox{ }s.t.\mbox{ } q^*_S = q_S} \Pr[p_{q^*}|_S = a_S ] \leq \Pr[ \Sim_{S,q_S} = a_S ]. $$ Therefore,
$$\Pr[ \Sim_{S,q_S} = a_S ] \geq \E_{q^* \leftarrow \pi|(q^*_S = q_S)} \Pr[ p_{q^*}|_{S} = a_S ] + \delta_{S,a_S}(q),
$$
which implies that 
$$
1 = \sum_{a_S} \Pr[ \Sim_{S,q_S} = a_S ] \geq 
\sum_{a_S} \E_{q^* \leftarrow \pi|(q^*_S = q_S)} \Pr[ p_{q^*}|_{S} = a_S ] + \sum_{a_S} \delta_{S,a_S}(q) \geq 1-\E_{q^* \leftarrow \pi|(q^*_S = q_S)} [p_{q^*}(\bot)] + \sum_{a_S} \delta_{S,a_S}(q).$$
We thus conclude that for every $q\in\mathcal{Q}$ and for every $S\subsetneq [k]$, $\sum_{a_S} \delta_{S,a_S}(q) \leq p_{q_S}(\bot)$,
and $\sum_{a}\delta_{[k],a}(q)=0$.  This in turn implies that $\delta(q)+p_q(\bot)=\sum_{S\subsetneq[k],a_S} \delta_{S,a_S}(q)+p_q(\bot) \leq \nu(q)$, as desired. 
\end{proof}
Thus, the strategy $\{\tilde{p}_q\}$ satisfies Equations~\eqref{eqn:tildep1},~\eqref{eqn:tildep2} and~\eqref{eqn:less-than-avg} (and as a result it also satisfies Equation~\eqref{eqn:tilde-succ}).

\paragraph{Step 1(b).}  We next define the family of distributions $\{\Sim^{(1)}_{S,q_S}\}$  over $\mathcal{A}_S\cup\{\bot\}$ that satisfies Equations~\eqref{eqn:*},~\eqref{eqn:*2} and \eqref{eqn:sum-Sim*}.



We start by defining $\{\Sim'_{S,q_S}\}$ by
$$
\Pr[ \Sim'_{S,q_S} = a_S] \triangleq \max_{q^*\in\mathcal{Q}| (q^*_S = q_S)}  \Pr[ \tilde{p}_{q^*}|_S = a_S].
$$
Note that $\sum_{a_S} \Pr[ \Sim'_{S,q_S} = a_S]\leq 1$ since by  Equation~\eqref{eqn:less-than-avg},
$$
\Pr[ \Sim'_{S,q_S} = a_S] = \max_{q^*\in\mathcal{Q}| (q^*_S = q_S)}  \Pr[ \tilde{p}_{q^*}|_S = a_S] \leq  \E_{q^*\leftarrow \pi|(q^*_S=q_S)}\Pr[p_{q^*}|_S=a_S],
$$
which together with the linearity of expectation, implies that indeed 
$$
\sum_{a_S\in\mathcal{A}_S}\Pr[ \Sim'_{S,q_S} = a_S] \leq \sum_{a_S\in\mathcal{A}_S}\E_{q^*\leftarrow \pi|(q^*_S=q_S)}\Pr[p_{q^*}|_S=a_S= \E_{q^*\leftarrow \pi|(q^*_S=q_S)}\sum_{a_S\in\mathcal{A}_S}\Pr[p_{q^*}|_S=a_S]\leq 1.
$$
Moreover, Equation~\eqref{eqn:tildep2}, together with the definition of $\{\Sim'_{S,q_S}\}$, implies that for every $q\in\mathcal{Q}$ and every $S\subseteq[k]$,
\begin{equation}\label{eqn:Sim'-bot}
\Pr[\Sim'_{S,q_S}=\bot]\leq \nu(q),
\end{equation}
and Equation~\eqref{eqn:tilde-succ} implies that  \begin{equation}\label{eqn:sum-Sim*-int}
\Pr_{a\leftarrow \Sim'_{[k],q}} [V(q,a)=1]\geq \Pr_{a\leftarrow p_q} [V(q,a)=1]-\nu(q).
\end{equation}
We next define $\{\Sim^{(1)}_{S,q_S}\}$ by modifying $\{\Sim'_{S,q_S}\}$ in a greedy manner, to ensure that Equation~\eqref{eqn:*} is satisfied.  This is done
by induction starting with sets of size~$1$.  For every set~$T$ of size~1, and for every~$q_T$, define 
$$
\Sim^{(1)}_{T,q_T}\triangleq\Sim'_{T,q_T}.
$$
Suppose we defined $\Sim^{(1)}_{S,q_S}$ for all  sets~$S$ of size less than~$i$. We next define $\Sim^{(1)}_{T,q_T}$ for sets $T$ of size~$i$.
To this end, fix any 
$T$ of size~$i$ and fix any $q_T$. Start by setting
$$
\Sim^{(1)}_{T,q_T}=\Sim'_{T,q_T}.
$$
For every $S\subset T$ of size $i-1$ and for every $a_S$, if
$$
\Pr[\Sim^{(1)}_{T,q_T}|_S=a_S]> \Pr[\Sim^{(1)}_{S,q_S}=a_S]
$$
then (arbitrarily) reduce the probability that  $\Pr[\Sim^{(1)}_{T,q_T}|_S=a_S]$ by exactly  $$\xi_{T,q_T}(S,a_S)\triangleq\Pr[\Sim^{(1)}_{T,q_T}|_S=a_S]-\Pr[\Sim^{(1)}_{S,q_S}=a_S],$$ so that 
\begin{equation}\label{eqn:sim1-ind}
\Pr[\Sim^{(1)}_{T,q_T}|_S=a_S]= \Pr[\Sim^{(1)}_{S,q_S}=a_S].
\end{equation}
In the remaining probability output~$\bot$.
We next argue that this ensures that Equation~\eqref{eqn:*} holds.  We prove this by induction on the size of $T$.  Clearly Equation~\eqref{eqn:*} holds for sets $T$ of size~$2$.  Suppose Equation~\eqref{eqn:*} holds for sets $T$ of size $i-1$ and we prove that it holds for sets $T$ of size $i$.  To this end, fix any set $T$ of size $i$ and any $S\subset T$. Let $S'$ be an arbitrary set of size $i-1$ such that $S\subseteq S'\subset T$. By Equation~\eqref{eqn:sim1-ind}, for every $a\in\mathcal{A}$
$$
\Pr[\Sim^{(1)}_{T,q_T}|_{S'}=a_{S'}]\leq \Pr[\Sim^{(1)}_{S',q_{S'}}=a_{S'}] 
$$
and by our induction hypothesis,
$$
 \Pr[\Sim^{(1)}_{S',q_{S'}}|_S=a_{S}]\leq \Pr[\Sim^{(1)}_{S,q_{S}}=a_{S}].
 $$
These two equations imply that 
\begin{align*}
&\Pr[\Sim^{(1)}_{T,q_{T}}|_S=a_{S}]=\\
&\sum_{a_{S'}:{a_{S'}|_S=a_S}}\Pr[\Sim^{(1)}_{T,q_{T}}|_{S'}=a_{S'}]\leq\\
&\sum_{a_{S'}:{a_{S'}|_S=a_S}}\Pr[\Sim^{(1)}_{S',q_{S'}}=a_{S'}]=\\
&\Pr[\Sim^{(1)}_{S',q_{S'}}|_S=a_{S}]\leq\\
&\Pr[\Sim^{(1)}_{S,q_{S}}=a_{S}],
\end{align*}
as desired.

We next argue that despite this reduction in probability,  Equations~\eqref{eqn:*2} and~\eqref{eqn:sum-Sim*} hold.  
To this end, note that for every $S,T\subseteq[k]$ such that $S\subset T$ and $|S|=|T|-1$, and every $q\in\mathcal{Q}$ and $a\in\mathcal{A}$,
$$
\xi_{T,q_T}(S,a_S)\leq\max\left\{0,\Pr[\Sim'_{T,q_T}|_S=a_S]- \Pr[\Sim^{(1)}_{S,q_S}=a_S]\right\}.
$$
Define 
$$
\xi_{T,q_T}(S)\triangleq \sum_{a_S}\xi_{T,q_T}(S,a_S)\mbox{ }\mbox{ and }\mbox{ }\xi_{T,q_T}\triangleq \sum_{S\subsetneq T: |S|=|T|-1}\xi_{T,q_T}(S).
$$

\begin{claim}\label{claim:xi}
For every $q\in\mathcal{Q}$ and every $T\subseteq [k]$
$$
\xi_{T,q_T}\leq (|T|+1)!\cdot \nu(q).
$$
\end{claim}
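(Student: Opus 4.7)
The plan is to prove Claim~\ref{claim:xi} by induction on $|T|$. The base case $|T|\leq 1$ is immediate: Step 1(b) defines $\Sim^{(1)}_{T,q_T}=\Sim'_{T,q_T}$ for singletons without any reduction, so $\xi_{T,q_T}=0\leq (|T|+1)!\cdot\nu(q)$.

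For the inductive step with $|T|\geq 2$, I fix $S\subsetneq T$ with $|S|=|T|-1$ and, starting from the bound on $\xi_{T,q_T}(S,a_S)$ already recorded in the text, split
$$\bigl(\Pr[\Sim'_{T,q_T}|_S=a_S]-\Pr[\Sim^{(1)}_{S,q_S}=a_S]\bigr)_{+}\leq\bigl(\Pr[\Sim'_{T,q_T}|_S=a_S]-\Pr[\Sim'_{S,q_S}=a_S]\bigr)_{+}+\bigl(\Pr[\Sim'_{S,q_S}=a_S]-\Pr[\Sim^{(1)}_{S,q_S}=a_S]\bigr),$$
using that the last summand is already non-negative because Step 1(b) only lowers probabilities. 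Summing that last summand over $a_S$ yields exactly $\xi_{S,q_S}$, which by the inductive hypothesis is at most $(|S|+1)!\cdot\nu(q)=|T|!\cdot\nu(q)$.

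The substantive step is to bound the first summand (after summing over $a_S$) by $\nu(q)$. For each $a_T$ I define the non-negative defect $\eta(a_T)\triangleq\max_{q^*:q^*_T=q_T}\tilde{p}_{q^*}(a_T)-\tilde{p}_q(a_T)$, so that the identity $\Pr[\Sim'_{T,q_T}=a_T]=\tilde{p}_q(a_T)+\eta(a_T)$ marginalizes to $\Pr[\Sim'_{T,q_T}|_S=a_S]=\Pr[\tilde{p}_q|_S=a_S]+\sum_{a_T:a_T|_S=a_S}\eta(a_T)$. Since $q$ itself lies in $\{q^*:q^*_S=q_S\}$, the inequality $\Pr[\tilde{p}_q|_S=a_S]\leq\Pr[\Sim'_{S,q_S}=a_S]$ holds by the definition of $\Sim'$. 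Substituting and taking positive parts, then summing over $a_S$, gives
$$\sum_{a_S}\bigl(\Pr[\Sim'_{T,q_T}|_S=a_S]-\Pr[\Sim'_{S,q_S}=a_S]\bigr)_{+}\leq\sum_{a_T}\eta(a_T)=\tilde{p}_q(\bot)-\Pr[\Sim'_{T,q_T}=\bot]\leq\tilde{p}_q(\bot)\leq\nu(q),$$
invoking Equation~\eqref{eqn:tildep2} at the last step.

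Combining the two bounds, $\xi_{T,q_T}(S)\leq\nu(q)+|T|!\cdot\nu(q)=(1+|T|!)\,\nu(q)$, and summing over the $|T|$ subsets $S\subsetneq T$ with $|S|=|T|-1$ yields $\xi_{T,q_T}\leq|T|(1+|T|!)\,\nu(q)\leq(|T|+1)!\,\nu(q)$, since $(|T|+1)!-|T|(1+|T|!)=|T|!-|T|\geq 0$ for $|T|\geq 1$. The only delicate point is the cancellation $\Pr[\tilde{p}_q|_S=a_S]\leq\Pr[\Sim'_{S,q_S}=a_S]$, which collapses the otherwise unruly ``sum of maxes versus max of sums'' discrepancy into the total defect $\sum_{a_T}\eta(a_T)$, controlled by $\tilde{p}_q(\bot)$; everything else is routine bookkeeping.
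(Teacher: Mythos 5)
Your proof is correct, and it establishes the same recursion $\xi_{T,q_T}(S)\leq\nu(q)+\xi_{S,q_S}$ that drives the paper's induction, with an identical base case and closing arithmetic. The difference lies in how the $\nu(q)$ term is extracted. The paper shows $\Pr[\Sim'_{T,q_T}|_S=a_S]\leq\Pr[\Sim_{S,q_S}=a_S]$ by chaining the definition of $\Sim'$ with Equation~\eqref{eqn:less-than-avg} (the max over $q^*$ of $\tilde{p}_{q^*}|_T$ is dominated by the $\pi$-average of $p_{q^*}|_T$, which marginalizes to something bounded by $\Sim_{S,q_S}$); it then sums $\Pr[\Sim_{S,q_S}=a_S]-\Pr[\Sim^{(1)}_{S,q_S}=a_S]$ over $a_S$ and invokes $\Pr[\Sim'_{S,q_S}=\bot]\leq\nu(q)$. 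You instead avoid $\Sim$ and Equation~\eqref{eqn:less-than-avg} altogether: by writing $\Pr[\Sim'_{T,q_T}=a_T]=\Pr[\tilde p_q|_T=a_T]+\eta(a_T)$ and cancelling the $\tilde p_q$ marginal against $\Pr[\Sim'_{S,q_S}=a_S]$ (since $q$ is one of the maximizer candidates defining $\Sim'_{S,q_S}$), the entire ``sum-of-maxes vs.\ max-of-sum'' excess is absorbed into $\sum_{a_T}\eta(a_T)\leq\tilde p_q(\bot)\leq\nu(q)$. This is a more self-contained derivation of the same numerical bound: it relies only on the max-structure of $\Sim'$ and on Equation~\eqref{eqn:tildep2}, rather than on the averaging inequality that connects $\tilde p$ back to the original sub-non-signaling witness.
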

Note that Claim~\ref{claim:xi}, together with the definition of $\{\Sim^{(1)}_{[k],q}\}$ and with Equation~\eqref{eqn:Sim'-bot}, implies that 
Equation~\eqref{eqn:*2} holds.
Similarly,  Claim~\ref{claim:xi}, together with Equation~\eqref{eqn:sum-Sim*-int}, implies that Equation~\eqref{eqn:sum-Sim*} holds, since
\begin{align*}
&\Pr_{q\leftarrow\pi, a\leftarrow \Sim^{(1)}_{[k],q}}[V(q,a)=1]\geq\\ &\Pr_{q\leftarrow\pi,a\leftarrow p_q}[V(q,a)=1]-\E_{q\leftarrow \pi} [\nu(q)+(k+1)!\cdot\nu(q)]\geq\\ &1-\epsilon-((k+1)!+1)\cdot \E_{q\leftarrow \pi}[\nu(q)]\geq 1-2^{k\log k} \cdot\epsilon,
\end{align*}
where the latter inequality follows from Equation~\eqref{eqn:nu-avg}.

\paragraph{Proof of Claim~\ref{claim:xi}.}
Fix any $q\in\mathcal{Q}$ and any $T\subseteq[k]$.  Note that by definition for every $S\subset T$ of size $|T|-1$ and for every $a_S\in\mathcal{A}_S$,
\begin{align*}&\Pr[\Sim'_{T,q_T}|_S=a_S]\\&=\sum_{a_T: a_T|_S=a_S} \Pr[\Sim'_{T,q_T}=a_T]\\&=\sum_{a_T: a_T|_S=a_S}\max_{q^*\in\mathcal{Q}|q^*_T=q_T}\Pr[\tilde{p}_{q^*}|_T=a_T]\\&\leq\sum_{a_T: a_T|_S=a_S} \E_{q^*\leftarrow \pi|q^*|_T=q_T}\Pr[p_{q^*}|_T=a_T]\\&=\E_{q^*\leftarrow \pi|q^*_T=q_T}\Pr[p_{q^*}|_S=a_S]\\&\leq\Pr[\Sim_{S,q_S}=a_S].\end{align*}By the definition of $\Sim'_{S,q_S}$, $\Sim^{(1)}_{S,q_S}$, and $\tilde{p}_q$, it holds that $$\Pr[\Sim_{S,q_S}=a_S]\geq \Pr[\Sim'_{S,q_S}=a_S]\geq \Pr[\Sim^{(1)}_{S,q_S}=a_S].$$
This, together with the definition of $\xi_{T,q_T}$ implies that
\begin{equation}\label{xi-ST}
\xi_{T,q_T}(S,a_S)\leq \Pr[\Sim_{s,q_S}=a_S]-\Pr[\Sim^{(1)}_{S,q_S}=a_S].
\end{equation}
Moreover, by definition for every $q\in\mathcal{Q}$ and $S\subseteq[k]$
\begin{equation}\label{eqn:sim1-sim'}
\sum_{a_S}\Pr[\Sim^{(1)}_{S,q_S}=a_S]=\sum_{a_S}\Pr[\Sim'_{S,q_S}=a_S]-\xi_{S,q_S}.
\end{equation}
Therefore 
$$
\xi_{T,q_T}(S)\leq 1-\sum_{a_S}\Pr[\Sim'_{S,q_S}=a_S]+\xi_{S,q_S}\leq \nu(q)+\xi_{S,q_S},
$$
where the second inequality follows from Equation~\eqref{eqn:Sim'-bot}.  This  implies that 
\begin{equation}\label{eqn:recurse1}
\xi_{T,q_T}\leq |T|\cdot\nu(q) +\sum_{S\subsetneq T:|S|=|T|-1}\xi_{S,q_S}.
\end{equation}
We use Equation~\eqref{eqn:recurse1}, to prove that for every $T\subseteq[k]$ and for every $q_T$,
\begin{equation}\label{eqn:recurse2}
\xi_{T,q_T}\leq (|T|+1)!\cdot\nu(q)
\end{equation}
We prove Equation~\eqref{eqn:recurse2} by induction on the size of $T$, starting from $|T|=1$.  
For every $T$ of size~$1$ and for every $q_T$, by definition $\xi(T,q_T)=0$.

Suppose Equation~\eqref{eqn:recurse2} holds for every $T$ of size less than~$i$, we prove that it holds for $T$ of size $i$ as follows:
\begin{align*}
\xi_{T,q_T}&\leq {i}\cdot \nu(q) +\sum_{S\subsetneq T: |S|=i-1}\xi_{S,q_S}\\
&\leq
{i}\cdot\nu(q)+i\cdot i!\cdot\nu(q)\\
&\leq(i+1)!\cdot\nu(q)\\
\end{align*}
as desired, where the first inequality follows from Equation~\eqref{eqn:recurse1}, the second inequality follows from the induction hypothesis, and the other inequalities follow from basic arithmetic.

\qed

\medskip

\paragraph{Step 2.} Convert  $\{\Sim^{(1)}_{S,q_S}\}$ into a family of non-signaling distributions.
Similarly to Step~1, we carry out this step via two sub-steps.

\paragraph{Step 2(a).}  We first ensure that the probability that 
$\Sim^{(1)}_{S,q_S}$  outputs~$\bot$ is independent of $q_S$.  
To this end, note that by Equation~\eqref{eqn:sum-Sim*}
\begin{equation}\label{eqn:eps1}
 \E_{q\leftarrow\pi}[\Sim^{(1)}_{[k],q}=\bot]\leq 2^{k\log k}\epsilon\triangleq \epsilon_1.
\end{equation}
Consider the set
$$\GOOD = \left\{ q\in\mathcal{Q} |~ \Pr[\Sim^{(1)}_{[k],q}=\bot]\leq \epsilon_1/\delta \right\},$$
where $\delta$ is from the theorem statement.
By Markov's inequality 
\begin{equation}\label{eqn:good}
\Pr_{q \leftarrow \pi}[ q \in \GOOD ] \geq 1-\delta.
\end{equation}

Note that if $\{p_q\}$ is a strategy in $\subNS_\delta(\mathcal{G})$ and if $\delta\leq \sqrt{\epsilon}$ 
then by Equation~\eqref{eqn:*2},  for every $q\in\mathcal{Q}$
$$ \Pr[\Sim^{(1)}_{[k],q}=\bot]\leq (k+2)!\cdot\nu(q)\leq (k+2)!\cdot2^k\cdot \delta\leq 2^{k\log k}\cdot\delta$$
which implies that 
$\GOOD=\mathcal{Q}$.

Consider the distribution $\pi^* = \pi | (q \in \GOOD)$, and let $\mathcal{G}^*=(\mathcal{Q},\mathcal{A},V,\pi^*)$. 
Note that $\{\Sim^{(1)}_{S,q_S}\}$ is a sub-non-signaling strategy for the game  $\mathcal{G}^*$ whose  value is at least $1 - \epsilon_1$.  This follows from the fact that this is true for the game  $\mathcal{G}$ (see Equation~\eqref{eqn:sum-Sim*}) and from the fact that queries $q\notin \GOOD$ only lower the expected probability of acceptance since they are rejected with probability at least $\epsilon_1/\delta$.

In what follows, we define $\Sim^{(2)}_{S,q_S}$, which is a modification of $\Sim^{(1)}_{S,q_S}$, such that for every $\ell\in[k]$ there exists $\alpha_\ell\in[0,1]$ such that for every $S\subseteq[k]$ of size $\ell$ and for every $q\in\GOOD$, it holds that  \begin{equation}\label{eqn:**} 
\sum_{a_S} \Pr[ \Sim^{(2)}_{S,q_S} = a_S ] = \alpha_\ell
\end{equation}

In addition, we still ensure that for every $S,T\subseteq [k]$ such that $S\subseteq T$, and for every $q\in\mathcal{Q}$ and  $a\in\mathcal{A}$,
\begin{equation}\label{eqn:3*}   \Pr[ \Sim^{(2)}_{T,q_T}|_S = a_S] \leq \Pr[ \Sim^{(2)}_{S,q_S} = a_S ] 
\end{equation}
and 
\begin{equation}\label{eqn:v3}
    \Pr_{q\leftarrow\pi^*, a\leftarrow \Sim^{(2)}_{[k],q}} [V(q,a)=1]
    \geq 1-(k+1)\epsilon_1/\delta\triangleq 1-\epsilon_2.
\end{equation}
To this end, for every $S\subseteq[k]$ and every $q\in\GOOD$ let 
\begin{equation}\label{eqn:beta}
\begin{split}
\beta_{q_S} \triangleq \sum_{a_S}\Pr[ \Sim^{(1)}_{S,q_S} = a_S ]\geq \sum_{a_S}\Pr[ \Sim^{(1)}_{[k],q}|_S = a_S ]= \sum_{a}\Pr[ \Sim^{(1)}_{[k],q} = a]\geq 1-\epsilon_1/\delta,
\end{split}
\end{equation} 
where the first inequality follows from Equation~\eqref{eqn:*} and the last inequality follows from the definition of $\GOOD$.
 
 For every $\ell\in[k]$, let 
\begin{equation}
    \label{eqn:alpha}
\alpha\triangleq(1-\epsilon_1/\delta)
\end{equation} 
For every $S\subseteq[k]$ of size~$\ell$, and for every $q_S\in\mathcal{Q}_S$ and $a_S\in\mathcal{A}_S$, define $$\Pr[ \Sim^{(2)}_{S,q_S} = a_S]\triangleq
\Pr[ \Sim^{(1)}_{S,q_S} = a_S]\cdot\frac{\alpha^\ell}{\beta_{q_S}}.
$$
Note that by definition, for every $S\subseteq[k]$ of size~$\ell$ and for every $q_S\in\mathcal{Q}_S$
\begin{equation}\label{eqn:Sim-alpha}
\sum_{a_S\in\mathcal{A}_S} \Pr[ \Sim^{(2)}_{S,q_S} = a_S ] = \alpha^\ell,
\end{equation}
as desired. Moreover,  note that 
$$
\Pr[ \Sim^{(2)}_{S,q_S} = a_S]=
\Pr[ \Sim^{(1)}_{S,q_S} = a_S]\cdot\frac{\alpha^\ell}{\beta_{q_S}}
\leq \Pr[ \Sim^{(1)}_{S,q_S}=a_S], 
$$
where the first equality follows from the definition of $\Sim^{(2)}_{S,q_S}$ and the last inequality follows from Equations~\eqref{eqn:beta} and~\eqref{eqn:alpha}.   This implies that $$\sum_{a_S\in\mathcal{A}_S}\Pr[\Sim^{(2)}_{S,q_S} = a_S]\leq 1.$$  In the remaining probability $\Sim^{(2)}_{S,q_S}$ outputs~$\bot$.

We next argue that $\Sim^{(2)}_{S,q_S}$ satisfies Equation~\eqref{eqn:3*}.  To this end, fix any $S\subset T\subseteq[k]$ and fix any $q\in\mathcal{Q}$ and $a\in\mathcal{A}$.    Note that 
\begin{align*}  
&\Pr[ \Sim^{(2)}_{T,q_T}|_S = a_S] =\\
&\Pr[ \Sim^{(1)}_{T,q_T}|_S=a_S]\cdot\frac{\alpha^{|T|}}{\beta_{q_T}}\leq\\
&\Pr[ \Sim^{(1)}_{S,q_S} = a_S]\cdot \frac{\alpha^{|T|}}{\beta_{q_T}}\leq \\
&\Pr[ \Sim^{(1)}_{S,q_S} = a_S]\cdot  \frac{\alpha^{|S|}}{\beta_{q_S}}=\\ &\Pr[ \Sim^{(2)}_{S,q_S} = a_S],
\end{align*}
as desired, where the first equation follows from the definition of $\Sim^{(2)}_{T,q_T}$, the second equation follows from Equation~\eqref{eqn:*}, 
the third equation follows from Equations~\eqref{eqn:beta} and~\eqref{eqn:alpha}, and the last equation follows again from the definition of $\Sim^{(2)}_{S,q_S}$.

Finally, note that:
\begin{align*}
&\Pr_{q\leftarrow\pi^*,a\leftarrow \Sim^{(2)}_{[k],q}} [V(q,a)=1]=\\ 
&\Pr_{q\leftarrow\pi^*,a\leftarrow \Sim^{(1)}_{[k],q}} [V(q,a)=1]\cdot\frac{\alpha^k}{\beta_q}\geq \\
&\Pr_{q\leftarrow\pi^*,a\leftarrow \Sim^{(1)}_{[k],q}} [V(q,a)=1]\cdot \alpha^k\geq\\
&(1-\epsilon_1/\delta)^{k+1}\geq 1-(k+1)\epsilon_1/\delta=1-\epsilon_2
\end{align*}
as desired, where the first equation follows from the definition of $\Sim^{(2)}_{[k],q}$, the second equation follows from the fact that $\beta_q\leq 1$, the third equation follows from Equation~\eqref{eqn:sum-Sim*} and from the definition of $\alpha^k$ (Equation~\eqref{eqn:alpha}), the forth equation follows from basic arithmetics, and the last follows by definition of $\epsilon_2$.\\

\paragraph{Step 2(b).} We next define an honest-referee non-signaling strategy for the game $\mathcal{G}^*$ that convinces $V$ to accept with probability at least $1-\epsilon_2$.  More specifically, we define a strategy for which the non-signaling condition holds for every query $q\in\GOOD$.  We note that if $\GOOD=\mathcal{Q}$ (which is the case if $\{p_q\}\in\subNS_\delta(\mathcal{G})$) then the strategy we define is non-signaling.

Our honest-referee non-signaling strategy for the game $\mathcal{G}^*$ is not defined over $\mathcal{A}$ but over $\mathcal{A}^*=\mathcal{A}^*_1\times\ldots\times\mathcal{A}^*_k$, where for each $i\in[k]$, $\mathcal{A}^*_i\triangleq\mathcal{A}_i\cup\{*\}$.
  
We define this strategy in stages. 
First we define a family of distributions $\{\Sim^{(3)}_{S,q_S}\}$ that continues to satisfy the constraints that for every $S,T\subseteq[k]$ such that $S\subseteq T$, and every $q\in\mathcal{Q}$ and $a_S\in\mathcal{A}_S$,
\begin{equation}\label{eqn:Sim3leq}  
\Pr[ \Sim^{(3)}_{T,q_T}|_S = a_S]\leq \Sim^{(3)}_{S,q_S} = a_S ],
\end{equation}
and that there exists constants $\{\alpha_i\}_{i\in[k]}$ such that
\begin{equation}\label{eqn:Sim3total}
\sum_{a_S\in\mathcal{A}_S}\Pr[\Sim^{(3)}_{S,q_S}=a_S]=\alpha_{|S|}.
\end{equation}
At the same time, it also satisfies that for every $q\in\mathcal{Q}$ and every $S\subseteq[k]$,
\begin{equation}\label{eqn:claim-Sim3}
\sum_{i=0}^{k-|S|} (-1)^{i} \sum_{T \supseteq S, |T| = |S| + i}  \Pr[ \Sim^{(3)}_{T,q_T}|_S = a_S ]\geq 0.  
\end{equation}
To this end, we define 
$$
\Pr[ \Sim^{(3)}_{S,q_S} = a_S ] \triangleq \frac{1}{k^{2|S|}} \Pr[ \Sim^{(2)}_{S,q_S} = a_S ].
$$
Note that 
\begin{align*}
&\sum_{i=1}^{k-|S|} (-1)^{i-1} \sum_{T \supseteq S, |T| = |S| + i}  \Pr[ \Sim^{(3)}_{T,q_T}|_S = a_S ]\leq \\
&\sum_{i=1}^{k-|S|} \sum_{T \supseteq S, |T| = |S| + i}  \Pr[ \Sim^{(3)}_{T,q_T}|_S = a_S ]= \\
&\sum_{i=1}^{k-|S|} \sum_{T \supseteq S, |T| = |S| + i}  \frac{1}{k^{2(|S|+i)}}\Pr[ \Sim^{(2)}_{T,q_T}|_S = a_S ]\leq \\
&\sum_{i=1}^{k-|S|} \sum_{T \supseteq S, |T| = |S| + i}  \frac{1}{k^{2(|S|+i)}}\Pr[ \Sim^{(2)}_{S,q_S} = a_S ]\leq \\
&\sum_{i=1}^{k-|S|} {{k-|S|}\choose{i}} \frac{1}{k^{2(|S|+i)}}\Pr[ \Sim^{(2)}_{S,q_S} = a_S ]\leq \\
&\frac{1}{k^{2|S|}}\Pr[ \Sim^{(2)}_{S,q_S} = a_S ]\cdot \sum_{i=0}^{k-|S|} \frac{1}{k^{2i}}{{k-|S|}\choose{i}} \leq \\
&\Pr[ \Sim^{(3)}_{S,q_S} = a_S ]\cdot \sum_{i=0}^{k-|S|} \frac{1}{k^{2i}}{{k-|S|}\choose{i}} \leq\\
&\Pr[ \Sim^{(3)}_{S,q_S} = a_S ],
\end{align*}
where the last inequality follows from the fact that 
\begin{align*}
&\sum_{i=0}^{k-|S|} \frac{1}{k^{2i}}{{k-|S|}\choose{i}} \leq \sum_{i=0}^{k-|S|} \frac{1}{k^{2i}}\cdot k^{i} = \sum_{i=0}^{k-|S|} \frac{1}{k^{i}}\leq\\ &\sum_{i=0}^{k-|S|} 2^{-i}\leq 1.
\end{align*}
We note that the fact that
\begin{align*}
  &\sum_{i=1}^{k-|S|} (-1)^{i-1} \sum_{T \supseteq S, |T| = |S| + i}  \Pr[ \Sim^{(3)}_{T,q_T}|_S = a_S ]\leq \\ 
  &\Pr[ \Sim^{(3)}_{S,q_S} = a_S ] \end{align*}
immediately implies
Equation~\eqref{eqn:claim-Sim3}.

Moreover, by definition of $\Sim^{(3)}_{S,q_S}$ and by Equation~\eqref{eqn:v3},
\begin{equation}\label{eqn:Sim3acc}  
\Pr_{q\leftarrow\pi^*,a\leftarrow \Sim^{(3)}_{k,q}}[V(q,a)=1]\geq \frac{1}{k^{2k}}\cdot(1-\epsilon_2).
\end{equation}
We note that Equation~\eqref{eqn:Sim3total} follows immediately from the definition of $\{\Sim^{(3)}_{S,q_S}\}$ together with Equation~\eqref{eqn:**}.

To argue that Equation~\eqref{eqn:Sim3leq} holds note that for every $S,T\subseteq[k]$ such that $S\subseteq T$, and every $q\in\mathcal{Q}$ and $a_S\in\mathcal{A}_S$,
\begin{align*}
 \Pr[ \Sim^{(3)}_{T,q_T}|_S = a_S] & = \frac{1}{k^{2|T|}}\Pr[\Sim^{(2)}_{T,q_T}|_S = a_S]\\
 &\leq \frac{1}{k^{2|T|}}\Pr[ \Sim^{(2)}_{S,q_S} = a_S ]\\ &\leq\frac{1}{k^{2|S|}}\Pr[ \Sim^{(2)}_{S,q_S} = a_S ] \\
 &=\Pr[ \Sim^{(3)}_{S,q_S} = a_S ].
\end{align*}

Next we define the honest-referee non-signaling strategy $\{p^{**}_q\}$ over $\mathcal{A}^*$.  To this end we first define $\{p^{*}_q\}$, where for every non-empty set $S\subseteq [k]$ and every $a_S\in \mathcal{A}_S$, 
\begin{align*}
    &\Pr[ p^{*}_q= (a_S, (*)^{k-|S|}) ] \triangleq\\ 
    &\sum_{i = 0}^{k - |S|} (-1)^i \sum_{T \supseteq S, |T| = |S| + i} \Pr[ \Sim^{(3)}_{T,q_T}|_S = a_S ],
\end{align*}
 Equation~\eqref{eqn:claim-Sim3} implies that this value is non-negative. Moreover, note that for every $a\in\mathcal{A}$,
 $$\Pr[ p^{*}_q = a ] = \Pr[ \Sim^{(3)}_{[k],q} = a ].
 $$
 We next convert $\{p^{*}_q\}$ into the honest-referee non-signaling strategy $\{p^{**}_q\}$.  To this end, we first define $$\alpha \triangleq \sum_{a\in\mathcal{A}^*}\Pr[p^*_q=a].$$
Note that Equation~\eqref{eqn:Sim3total}, together with the definition of $p^*_q$, implies that $\alpha$ does not depend on~$q$.
In addition, note that
\begin{align*}
&\alpha=\sum_{a\in\mathcal{A}^*}\Pr[p^{*}_q=a]=\\
&\sum_{S\subseteq[k]}\sum_{a_S\in\mathcal{A}_S}\Pr[p^*_q=(a_S,(*)^{k-|S|}]\leq\\
&\sum_{S\subseteq[k]}\sum_{a_S\in\mathcal{A}_S}\sum_{i=0}^{k-|S|}\sum_{T\supseteq S,|T|=|S|+i}\Pr[\Sim^{(3)}_{S,q_S}=a_S]=\\
&\sum_{S\subseteq[k]}\frac{1}{k^{2|S|}}\sum_{a_S\in\mathcal{A}_S}
\sum_{i=0}^{k-|S|}\sum_{T\supseteq S,|T|=|S|+i}\Pr[\Sim^{(2)}_{S,q_S}=a_S]\leq\\
&\sum_{S\subseteq[k]}\frac{1}{k^{2|S|}}
\sum_{i=0}^{k-|S|}\sum_{T\supseteq S,|T|=|S|+i} 1\leq \sum_{S\subseteq[k]}\frac{1}{k^{2|S|}}\cdot 2^{k-|S|}\leq 2^k.\\
\end{align*}
We convert $p^{*}_q$ to a distribution $p^{**}_q$ defined as follows:
If $\alpha\geq 1$ then we convert $p^{*}_q$ to a distribution $p^{**}_q$ defined as follows:  For every $a\in\mathcal{A}^*$, $$\Pr[p^{**}_q=a]\triangleq \frac{1}{\alpha}\Pr[p^{*}_q=a].$$
If $\alpha<1$ then we convert $p^{*}_q$ to a distribution $p^{**}_q$ defined as follows:
$$
\Pr[p^{**}_q=(*)^k]\triangleq 1-\alpha,
$$
and for every $a\in\mathcal{A}^*\setminus \{(*)^k\}$ let 
$$
\Pr[p^{**}_q=a]\triangleq\Pr[p^{*}_q=a].
$$
It is easy to see that $p^{**}_q$ is a distribution.
Moreover,
\begin{align*}
&\Pr_{q\leftarrow\pi^*,a\leftarrow p^{**}_q}[V(q,a)=1]\geq \frac{1}{2^k}\Pr_{q\leftarrow\pi^*,a\leftarrow \Sim^{(3)}_{k,q}}[V(q,a)=1]\geq \frac{1}{k^{3k}}(1-\epsilon_2)\geq \frac{1}{k^{3k}}(1-k^{2k}\epsilon/\delta),
\end{align*}
as desired, where the first inequality follows from the fact that $\alpha\leq 2^k$ together with the definition of~$p_q^{**}$, the second inequality follows from Equation~\eqref{eqn:Sim3acc}, and the third inequality follows from Equations~\eqref{eqn:v3} and~\eqref{eqn:eps1}.
\begin{claim}
$\{p^{**}_q\}$ satisfies the honest referee no-signaling condition.
\end{claim}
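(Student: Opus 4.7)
The plan is to exhibit the honest-referee non-signaling witness family by defining $\Sim_{S,q_S}(b_S) \triangleq \Pr[p^{**}_q|_S = b_S]$ for any $q \in \GOOD$ whose $S$-projection is $q_S$, and to verify that this marginal indeed depends only on $q_S$. First I would reduce the problem to analyzing $p^*_q$: the transformation $p^*_q \mapsto p^{**}_q$ either rescales every mass by the $q$-independent constant $\alpha$ (when $\alpha \geq 1$) or adds the $q$-independent mass $1-\alpha$ to the single outcome $(*)^k$ (when $\alpha < 1$). Both operations preserve the property ``the marginal onto $S$ depends only on $q_S$'', because the rescaling factor and the added point-mass are independent of $q$, and $(*)^k$ restricted to $S$ is the single element $(*)^{|S|}$. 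So it suffices to prove the statement for $p^*_q$ itself.

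Next I would expand the marginal $\Pr[p^*_q|_S = b_S]$ via the inclusion-exclusion definition. Let $S_1 = \{i \in S : b_{S,i} \neq *\}$ and $S_0 = S \setminus S_1$. An outcome $c \in \mathcal{A}^*$ contributing to the marginal has ``support set'' $T = \{i : c_i \neq *\}$ with $S_1 \subseteq T$ and $T \cap S_0 = \emptyset$, hence $T = S_1 \sqcup W$ for some $W \subseteq [k]\setminus S$, with $c_T|_{S_1} = b_{S_1}$ and $c_T|_W$ ranging over $\mathcal{A}_W$. Plugging the definition of $\Pr[p^*_q = (a_T,(*)^{k-|T|})]$ in and first summing over $c_W \in \mathcal{A}_W$ collapses the inner sum to $\Pr[\Sim^{(3)}_{V,q_V}|_{S_1} = b_{S_1}]$ by marginalization. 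Swapping the order of summation so that $V \supseteq S_1$ is outermost, the contribution of each $V$ carries the coefficient $(-1)^{|V|-|S_1|}\sum_{W \subseteq V \cap ([k]\setminus S)}(-1)^{|W|}$. The standard alternating-sum identity kills every $V$ for which $V \cap ([k]\setminus S) \neq \emptyset$, leaving only $V$ with $V \subseteq S$. The end result is
\[
\Pr[p^*_q|_S = b_S] \;=\; \sum_{V : S_1 \subseteq V \subseteq S}(-1)^{|V|-|S_1|}\,\Pr[\Sim^{(3)}_{V,q_V}|_{S_1} = b_{S_1}],
\]
whose sole $q$-dependence is through $q_V$ with $V \subseteq S$, i.e. through $q_S$ alone. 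Defining $\Sim_{S,q_S}$ as the corresponding (rescaled, or $(*)^k$-corrected) expression then witnesses the honest-referee non-signaling property.

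The main obstacle is keeping the bookkeeping straight in the inclusion-exclusion step: one has to distinguish the $S_0$ coordinates (the $*$-coordinates that lie inside $S$ and may therefore appear in the summing index $V$) from the $[k]\setminus S$ coordinates (which must be eliminated by the Möbius-type alternating sum), and it is precisely this separation that confines the $q$-dependence to $q_S$. Everything else is direct manipulation of the definitions of $\Sim^{(3)}$ and $p^*_q$, together with the fact that $\alpha$ is independent of $q$.
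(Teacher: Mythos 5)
Your proposal is correct and follows essentially the same route as the paper: expand the marginal of $p^*_q$ onto $S$, push the inclusion--exclusion sum through, and observe that the alternating-sum coefficient annihilates every term involving a set meeting $[k]\setminus S$, leaving only contributions indexed by subsets of $S$ (hence depending on $q$ only through $q_S$). Your explicit reduction to $p^*_q$ at the outset, by noting that both the $1/\alpha$ rescaling and the $(1-\alpha)$ mass added at $(*)^k$ are $q$-independent, is actually a slightly cleaner treatment of the normalization than the paper's, which folds a factor $\gamma$ into the expansion and handles the $\alpha<1$ case somewhat implicitly.
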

\begin{proof}
In what follows, we use the following notation:  If $p^{*}_q$ satisfies  $\alpha=\sum_{a^*\in\mathcal{A}^*}\Pr[p^{*}_q=a]>1$ then let $\gamma=\frac{1}{\alpha}$, and otherwise let $\gamma=1$.

  Fix any subset  $S\subseteq[k]$.   We argue that for every $q,q^*\in\GOOD$ such that $q_S=q^*_S$, and for every $a_S\in\mathcal{A}^*_S$,  
$$\Pr[p^{**}_{q}|_S = a_S]=\Pr[p^{**}_{q^*}|_S = a_S].
$$  
Define $S'\subseteq S$ to be the subset for which for every $i\in S'$ it holds that $a_i\in\mathcal{A}$, and for every $i\in S\setminus S'$ it holds that $a_i=*$.   

\begin{align*}
  &\Pr[ p^{**}_q|_S = a_S ] = \\
  &\sum_{V\subseteq[k]\setminus S}\sum_{a_V\in\mathcal{A}_V}\Pr[p^{**}_q=(a_{S'},a_V,(*)^{k-|S'\cup V|})=\\
  &\sum_{V\subseteq[k]\setminus S}\sum_{a_V\in\mathcal{A}_V}\sum_{i=0}^{k-|S'\cup V|}(-1)^i\sum_{T \supseteq S'\cup V, |T| = |S'\cup V| + i} \gamma\cdot 
  \Pr[ \Sim^{(3)}_{T,q_T}|_{S'\cup V} = a_{S'\cup V} ]=\\ 
   &\sum_{V\subseteq[k]\setminus S}\sum_{i=0}^{k-|S'\cup V|}(-1)^i\sum_{T \supseteq S'\cup V, |T| = |S'\cup V| + i} \gamma\cdot \sum_{a_v\in\mathcal{A}_V}\Pr[ \Sim^{(3)}_{T,q_T}|_{S'\cup V} = a_{S'\cup V} ]=\\ 
    &\sum_{V\subseteq[k]\setminus S}\sum_{i=0}^{k-|S'\cup V|}(-1)^i\sum_{T \supseteq S'\cup V, |T| = |S'\cup V| + i}   
     \gamma\cdot \Pr[ \Sim^{(3)}_{T,q_T}|_{S'} = a_{S'} ] =\\ 
    & \sum_{T\supseteq S'}\gamma\cdot \Pr[ \Sim^{(3)}_{T,q_T}|_{S'} = a_{S'} ]\cdot\left(\sum_{V\subseteq T\setminus S}(-1)^{|T|-|S'\cup V|}\right)
   \end{align*} 
Therefore, to argue that indeed 
$$
\Pr[p^{**}_q|_S=a_S]=\Pr[p^{**}_{q^*}|_S=a_S]
$$
it suffices to prove that for every $T\supseteq S'$ such that $\ell\triangleq |T\setminus S|\geq 1$, it holds that 
$$
\sum_{V\subseteq T\setminus S}(-1)^{|T|-|S'\cup V|}=0,
$$
or equivalently that for every such $T$,
$$
\sum_{V\subseteq T\setminus S}(-1)^{|S'\cup V|}=0.
$$
This follows from the following calculation:

\begin{align*}
&\sum_{V\subseteq T\setminus S}(-1)^{|S'\cup V|}=
    (-1)^{|S'|}\cdot \sum_{V\subseteq T\setminus S}(-1)^{|V|}=\\
    &\sum_{j=0}^{\ell}{{\ell}\choose{j}}(-1)^{j}=(1-1)^\ell=0,
\end{align*}
as desired. 
\end{proof}
\section{Acknowledgements}
We would like to thank Thomas Vidick and Lisa Yang for numerous illuminating and fruitful discussions.  In particular, Thomas was instrumental in formalizing and understanding the notion of $\subNS_\delta$. We would also like to thank the anonymous referee, Thomas Vidick, and Justin Holmgren for their invaluable comments on a previous version of this manuscript. Dhiraj Holden was supported by NSF MACS - CNS-1413920.
\bibliographystyle{plain}
\bibliography{bibfile}
\end{document}